\documentclass{extarticle}
\usepackage{geometry}
\usepackage[hyphens]{url}  %

\usepackage{graphicx} %
\usepackage{natbib}  %
\usepackage{caption} %
\usepackage{algorithm}
\usepackage{algpseudocode}

\usepackage{newfloat}
\usepackage{listings}
\DeclareCaptionStyle{ruled}{labelfont=normalfont,labelsep=colon,strut=off} %
\floatstyle{ruled}
\newfloat{listing}{tb}{lst}{}
\floatname{listing}{Listing}

\usepackage{amssymb}
\usepackage{nicefrac}
\usepackage{amsthm}
\usepackage{amsmath}
\usepackage{mathtools}
\usepackage{thm-restate}

\usepackage[T1]{fontenc}
\usepackage{fontawesome5}

\usepackage{tabularx}
\usepackage{booktabs}
\usepackage{multirow}

\usepackage[inline]{enumitem}

\usepackage{xspace}
\usepackage{xparse}

\usepackage[dvipsnames]{xcolor}
\usepackage{hyperref}
\hypersetup{
    colorlinks   = true,
    linkcolor    = red, %
    urlcolor     = teal, %
    citecolor    = blue %
}
\usepackage{cleveref}
\crefname{claim}{claim}{claims}
\Crefname{claim}{Claim}{Claims}

\usepackage{etoolbox} %
\AtBeginEnvironment{appendices}{%
  \crefname{section}{appendix}{appendices}%
  \Crefname{section}{Appendix}{Appendices}%
}

\usepackage{newtxtext,newtxmath}
\definecolor{cbBlue}{RGB}{102,153,204}
\definecolor{cbOrange}{RGB}{230,159,0}
\definecolor{cbRed}{RGB}{153,68,85}

\usepackage{tikz}
\usetikzlibrary{calc,backgrounds,shapes,decorations,patterns,decorations.pathreplacing,positioning}
\tikzset{
 reload/.style = {
  rectangle,
  rounded corners,
  draw=none,
  fill=black!75!white,
  text=white,
  inner sep=2pt
 }
}

\newcommand{\agent}{a}

\newcommand{\threePart}{\probName{3-Partition}}
\newcommand{\maxThreeSatThree}{\probName{Max 3 SAT-3}}
\newcommand{\optBasicProb}{\probName{Routing Plan$^*$}}

\newcommand{\LRed}{\textsc{L}-reduction\xspace}

\newcommand{\threePartDummy}{y}

\newcommand{\N}{\mathbb{N}}
\newcommand{\naturals}{\N}

\newcommand{\rest}{\!\!\restriction}

\newcommand{\Yes}{\textsf{Yes}\xspace}
\newcommand{\No}{\textsf{No}\xspace}

\newcommand{\sV}{s}
\newcommand{\tV}{t}
\newcommand{\numVertices}{n}

\newcommand{\agents}{A}
\newcommand{\numAgents}{m}

\newcommand{\traps}{R}
\newcommand{\numTraps}{\tau}
\newcommand{\reloadFn}{c}

\newcommand{\destroyed}{\dagger}

\newcommand{\plan}{\ensuremath{\rho}}

\newcommand{\goal}{\ensuremath{k}}

\DeclareMathOperator{\OperLen}{len}
\newcommand{\len}{\ensuremath{\OperLen}}

\DeclareMathOperator{\OperSnap}{snap}
\newcommand{\snap}{\ensuremath{\OperSnap}}

\newcommand{\probName}[1]{\textsc{#1}\xspace}
\newcommand{\basicProb}{\probName{Routing Plan}}

\newcommand{\Oh}[1]{\ensuremath{\mathcal{O}\!\left(#1\right)}}

\NewDocumentCommand{\cc}{ O{} O{} m }{\mbox{%
    \expandafter\ifx\expandafter\relax\detokenize{#2}\relax\else{#2-}\fi%
    \textsf{#3}%
    \expandafter\ifx\expandafter\relax\detokenize{#1}\relax\else{-#1}\fi%
    }\xspace}

\newcommand{\NP}{\cc{NP}}
\newcommand{\NPh}{\cc[hard]{NP}}
\newcommand{\NPhness}{\cc[hardness]{NP}}
\newcommand{\NPc}{\cc[complete]{NP}}
\newcommand{\PSc}{\cc[complete]{PSPACE}}

\newcommand{\APX}{\cc{APX}}
\newcommand{\APXh}{\cc[hard]{APX}}
\newcommand{\APXhness}{\cc[hardness]{APX}}

\newtheorem{theorem}{Theorem}
\newtheorem{observation}{Observation}
\newtheorem{lemma}{Lemma}
\newtheorem{definition}{Definition}

\newtheorem{example}{Example}
\newtheorem{corollary}{Corollary}
\newtheorem{claim}{Claim}
\newenvironment{claimproof}[1]{\noindent\emph{Proof.}\hspace{0.15cm}#1}{\hfill$\blacktriangleleft$\smallskip}
\newtheorem{remark}{Remark}

\newcommand{\defProblemQuestion}[3]{\medskip%
  \noindent\begin{tabularx}{\linewidth}{lX}
	\toprule
	\multicolumn{2}{c}{#1} \\\midrule
	\small\emph{Input:} & \small{} #2 \\
	\small\emph{Question:\hspace{-0.25cm}} & \small{} #3 \\
	\bottomrule
\end{tabularx}
}

\title{Optimal Path Planning in Hostile Environments}

\author {
    Andrzej Kaczmarczyk,\textsuperscript{\rm 1}
    Šimon Schierreich,\textsuperscript{\rm 2}\\
    Nicholas Axel Tanujaya,\textsuperscript{\rm 3}
    Haifeng Xu\textsuperscript{\rm 4}\vspace{0.25cm}\\
    \begin{tabular}{p{0.35\textwidth}p{0.25\textwidth}}
      \small \textsuperscript{\rm 1} Czech Technical University in Prague &
      \small \textsuperscript{\rm 3} Bina Nusantara University \\
      \small \textsuperscript{\rm 2} AGH University of Krakow &
      \small \textsuperscript{\rm 4} University of Chicago \\
    \end{tabular}\vspace{0.25cm}\\
    {\small\texttt{
    andrzej.kaczmarczyk@cvut.cz, schiesim@fit.cvut.cz,}}\\
		{\small\texttt{
    nicholas.axel.135@gmail.com, haifengxu@uchicago.edu}}
}

\usepackage{marginnote}

\usetikzlibrary{shadows}
\usetikzlibrary{shapes.geometric} %

\newcommand{\tagStar}{%
\tikz[baseline=(n.base)]{
\node[
  circle,
  minimum size=20pt,
] (circ) {};
\node[
  star,
  star points=5,
  star point ratio=2,
  fill=black!50,
  inner sep=1pt,
  minimum size=8pt
] (n) {};
}%
}

\newcommand{\tagStarInline}[1]{%
\tikz[baseline=(n.base)]{
\node[
  star,
  star points=5,
  star point ratio=2,
  fill=black!50,
  inner sep=1pt,
  minimum size=8pt,
] (n) {};
}%
}

\newcommand{\toclaimtag}{%
\tikz[baseline=(n.base)]{
\node[
  circle,
  minimum size=20pt,
  font=\small,  
  text=black!50,                   
] (n) {\faIcon{arrow-alt-circle-up}};
}%
}

\newif\ifshowdeferredtag
\showdeferredtagtrue

\newcommand{\deferredproofmark}[1]{%
  \hypertarget{thm:#1}{}%
  \ifshowdeferredtag
    \begingroup
    \setlength{\marginparsep}{0pt}
    \reversemarginpar
    \marginnote{%
      \hyperlink{proof:#1}{\tagStar{}}%
    }[-5pt]%
    \endgroup
  \fi%
}

\newcommand{\backtotheorem}[1]{%
  \hypertarget{proof:#1}{}%
  \begingroup
  \setlength{\marginparsep}{0pt}
  \reversemarginpar
  \marginnote{%
    \hyperlink{thm:#1}{\toclaimtag{}}%
  }[-6pt]%
  \endgroup
}

\newcommand{\restatetheorem}[1]{%
  {\showdeferredtagfalse #1}%
}

\begin{document}

\maketitle

\begin{abstract}
    Coordinating agents through hazardous environments, such as aid-delivering drones navigating conflict zones or field robots traversing deployment areas filled with obstacles, poses fundamental planning challenges. We introduce and analyze the computational complexity of a new multi-agent path planning problem that captures this setting. A group of identical agents begins at a common start location and must navigate a graph-based environment to reach a common target. The graph contains hazards that eliminate agents upon contact but then enter a known cooldown period before reactivating. In this discrete-time, fully-observable, deterministic setting, the planning task is to compute a movement schedule that maximizes the number of agents reaching the target. We first prove that, despite the exponentially large space of feasible plans, optimal plans require only polynomially-many steps, establishing membership in \NP. We then show that the problem is \NPh even when the environment graph is a tree. On the positive side, we present a polynomial-time algorithm for graphs consisting of vertex-disjoint paths from start to target. Our results establish a rich computational landscape for this problem, identifying both intractable and tractable fragments.
\end{abstract}

\section{Introduction}

A humanitarian aid organization is planning to transport life-critical supplies
through a conflict zone. The hostile actions of parties involved may severely
constrain the planning task: these parties may attempt to divert or seize aid
shipments. In particular, hostile forces monitor the locations under their
control. When a convoy crosses such a location, it is detected and captured.
However, capturing a convoy occupies the hostile patrol for a predictable period during which subsequent convoys may pass safely. How should
the organization schedule dispatches to maximize the number of convoys reaching
their destination?

This scenario motivates a more general planning problem. A central planner
manages a set of identical agents at a common starting location. They seek to
navigate as many of these agents as possible through a hostile environment to a
designated target, potentially sacrificing some agents to enable others to pass.
Beyond humanitarian logistics~\citep{Wassenhove06,RottkemperF2013}, this model
captures applications such as cargo transport through pirate-prone
waters~\citep{JakobVHP2012}, UAV-based delivery of resources~\citep{WuRC2016},
and analysis of cybersecurity intrusion detection where probe packets cause a
temporary detector failure~\citep{pta-new:r:eluding-IDS}.

We formalize the environment as a graph whose vertices represent locations and
whose edges represent feasible transitions. Two vertices are designated as the
start and target. Hostility is modeled via traps placed at certain vertices.
Each trap eliminates any agent that enters its vertex but then enters a reload
period of known duration before reactivating. The planning task is to compute a
movement schedule that maximizes the number of agents reaching the target. This
formulation yields a multi-agent path planning problem with a new
temporal-strategic dimension.

A rich literature spans related topics including constrained path
finding~\citep{PhillipsL2011,AhmadiTHK2021},
multi-agent path finding~\citep{SternSSFKMWLACSBB2019}, token
reconfiguration~\citep{CalinescuDP2008}, pursuit-evasion games~\citep{BorieTK11},
and
humanitarian logistics in conflict zones~\citep{BoehmerHXT2024} (see
\Cref{sec:relatedWork} for further discussion). However, to the best of our
knowledge, the existing models do not address our specific problem.
The key distinction lies in the nature of our traps, which are statically placed
(limiting the applicability of game-theoretic frameworks), deterministic (making
stochastic models unnecessarily general), and, crucially, which require a
reload time after each capture. This reload mechanic is new and demands
novel analytical techniques.
 
We briefly justify our modeling choices. First, reloading of traps captures
deceptive strategies in which early agents absorb hostile attention allowing
those who follow to pass. Second, we model a static rather than
adversarial environment. Instead of analyzing repeated games, we focus on a
single, short-horizon assignment within a known environment; an essential
building block for longer-horizon decision-making. Third, while stochastic
models may appear more realistic, they are notoriously difficult to solve in
settings like ours. Furthermore, our deterministic approach isolates the fundamental
computational structure of the problem and admits a natural interpretation as
worst-case reasoning by a risk-averse planner. Finally, we assume that the
reload durations are known. In practice, these can be estimated from
historical data or
knowledge of standard operational procedures.

\paragraph{Our Contribution.}

We introduce a novel multi-agent path-planning model for transportation in hostile environments and analyze it from the perspective of computational complexity.
First, we identify tractable fragments of the problem. We show that if the underlying topology is a simple path, then an optimal strategy can be computed in polynomial time. Despite the simplicity of this topology, our \emph{run-wait-sacrifice} strategy requires surprisingly non-trivial arguments about the structure of optimal plans. We then extend these results to more general topologies whose condensation (see \Cref{def:condensation}) has a linear structure; 
these include, e.g., vertex-disjoint unions of paths.
In contrast to these positive results, we establish computational 
hardness for modest generalizations. Specifically, we prove that the 
problem is \NP-hard even when the underlying topology is a tree with 
maximum degree~$3$, or when each agent crosses at most $6$ traps.

For the sake of readability, we deferred some proofs to~\Cref{app:proofs}. These
are marked with~\raisebox{3pt}{\tagStarInline{}}, a symbol linking to the
corresponding proofs.

\section{Related Work}\label{sec:relatedWork}

Our problem is connected to various research streams in computer science, 
mathematics, operations research, artificial intelligence, and planning %
literature.

\paragraph{Classical Path Finding.}
The foundational problem underlying our model is \probName{Reachability}: given
a graph~$G$ and vertices~$s$ and~$t$, determine whether~$t$ can be reached
from~$s$. This can be decided in linear time via a depth-first search, and the
shortest paths can be computed in polynomial time using Dijkstra's
algorithm~\citep{Dijkstra1959} or, with heuristic guidance, the A*
algorithm~\citep{HartNR1968}. However, our setting differs fundamentally: the
presence of traps with reload periods introduces temporal dynamics, and we must
coordinate paths for multiple agents rather than a single one. Substantial work
has also addressed reachability under
\emph{uncertainty}~\citep{WagnerC2017,ShoferSS2023}, in \emph{distributed}
settings~\citep{CapVK2015}, and in \emph{faulty} environments%
~\citep{PapadimitriouY1991,FriedSBW2013,FioravantesKMO2025}; in contrast, we
assume a fully-informed central planner and deterministic, fault-free
environment.

\paragraph{Dynamic and Temporal Environment.}
Recent research has explored path finding in graphs whose structure changes over
time. In \emph{temporal graphs}, edge availability is restricted to specific
discrete time steps. While single-agent reachability remains polynomial-time
solvable in temporal graphs~\citep{WuCKHHW2016}, multi-agent variants become
computationally hard even on simple paths~\citep{KlobasMMNZ2021}. In a different
direction, \citet{CarmesinWPKM2023} and \citet{DvorakKOPSS2025} studied
\emph{self-deleting graphs}, where visiting a vertex removes certain incident
edges. Our model differs from both: all edges remain available at all times, but
agents can be eliminated by traps that subsequently enter a reload period.

\paragraph{Multi-Agent Path Finding.}
Our work is closely related to the extensively studied \probName{Multi-Agent
Path Finding} (MAPF)
problem~\citep{FelnerSSBGSSWS2017,SternSSFKMWLACSBB2019,WangXZLLWL2025}. In
MAPF, each of~$k$ agents has a designated start and goal location, and the
objective is to find collision-free paths minimizing makespan or total travel
time. MAPF has been thoroughly analyzed from a computational complexity
perspective~\citep{Surynek2010,YuL2013,Nebel2024,fioravantesExactAlgorithmsLowerbounds2023,FioravantesKKMOV2025a,DeligkasEGK025},
and numerous algorithmic approaches have been developed~\citep{SharonSFS2015,LiHS0K2019,Surynek2022}. Our problem differs in several key respects: (i)~all agents share a common start and target, (ii)~certain vertices contain traps that eliminate agents, and (iii)~the objective is to maximize the number of surviving agents rather than minimizing travel time. These differences fundamentally change the problem structure.

\paragraph{Pursuit-Evasion Games.}
\emph{Pursuit-evasion games}~\citep{Parsons1976,BorieTK11} study scenarios where
pursuers attempt to capture evaders in a graph. The complexity of these problems
depends heavily on the graph class and movement rules; for instance, determining
the minimum number of pursuers needed to clear a graph is related to treewidth
and pathwidth~\citep{SeymourT1993}. While superficially similar, our model
differs substantially. Our traps are statically placed rather than actively
pursuing agents, they operate deterministically rather than strategically, and
crucially, they need to reload after each capture---a feature not considered 
in classical pursuit-evasion problems.

\paragraph{Token Reconfiguration.}
In \emph{token reconfiguration}
problems~\citep{KornhauserMS1984,HearnD2005,CalinescuDP2008,Heuvel2013}, tokens
placed on graph vertices must be moved from an initial configuration to a target
configuration under various movement rules (sliding along edges or jumping to
non-adjacent vertices). These problems are typically \NPc and \APXh on general
graphs, though polynomial-time algorithms exist, e.g.,\,for
trees~\citep{CalinescuDP2008,DemaineDFHIOOUY2015}. 
While our agents sometimes resemble moving tokens (and we use this analogy in
some proofs), the key
distinction is that our agents may be \emph{eliminated} during transit, and the
objective is to maximize the number of agents reaching the target.%

\paragraph{Video Games and Motion Planning.}
A surprising connection exists between our work and certain video games. For
example, in \emph{Lemmings}~\citep{McCarthy1998}, agents spawn from a door and
must be guided to a target, navigating hazards that become temporarily passable
after eliminating an agent. The key difference is that Lemmings agents move
autonomously and are controlled only indirectly via skill assignments, whereas
we directly control agent movements. Moreover, Lemmings solutions can require
exponentially many steps~\citep{Forisek2010}, and the decision problem is
\PSc~\citep{Viglietta2015}; in contrast, we show that optimal plans in our model have polynomial length.

\paragraph{Security Games.} Finally, there are several game-theoretical models
of games related to security applications, like Stackelberg~\citep{Tambe2012} or
Blotto~\citep{BehnezhadBDHMPR2018,Kazmierowski2025} games. The closest to ours
are the recently introduced \emph{escape sensing games}~\citep{BoehmerHXT2024},
where one player tries to move as many of her assets from the start to target
location, while an adversary uses sensors to detect some of the moving assets.
In contrast, in our work the behavior of traps is implicit from their reload
times.

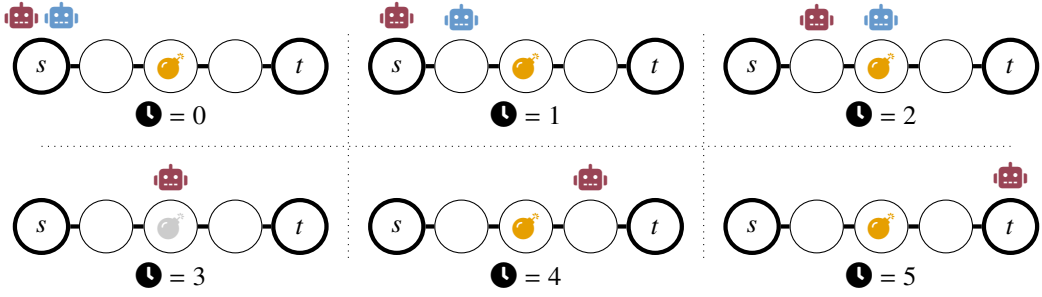
\begin{figure*}
    \centering
    \begin{tikzpicture}[every node/.style={minimum width=7mm},scale=0.85]
        \draw[dotted, thin] (0,-1.25) -- (15, -1.25);
        \draw[dotted, thin] (4.75,.5) -- (4.75, -3);
        \draw[dotted, thin] (10.25,.5) -- (10.25, -3);

        \node[draw,circle,ultra thick,label=90:{\textcolor{cbRed}{\faIcon{robot}}~\textcolor{cbBlue}{\faIcon{robot}}}] (s) at (0,0) {$\sV$};
        \node[draw,circle] (v2) at (1,0) {};
        \node[draw,circle,inner sep=1pt] (v3) at (2,0) {\textcolor{cbOrange}{\faIcon{bomb}}};
        \node[draw,circle] (v4) at (3,0) {};
        \node[draw,circle,ultra thick] (t) at (4,0) {$\tV$}; 

        \node[] at (2,-0.75) {$\text{\faIcon{clock}}=0$};
        \draw[ultra thick] (s) -- (v2) -- (v3) -- (v4) -- (t);

        \node[draw,circle,ultra thick,label=90:{\textcolor{cbRed}{\faIcon{robot}}}] (s) at (5.5,0) {$\sV$};
        \node[draw,circle,label=90:{\textcolor{cbBlue}{\faIcon{robot}}}] (v2) at (6.5,0) {};
        \node[draw,circle,inner sep=1pt] (v3) at (7.5,0) {\textcolor{cbOrange}{\faIcon{bomb}}};
        \node[draw,circle] (v4) at (8.5,0) {};
        \node[draw,circle,ultra thick] (t) at (9.5,0) {$\tV$}; 

        \node[] at (7.5,-0.75) {$\text{\faIcon{clock}}=1$};
        \draw[ultra thick] (s) -- (v2) -- (v3) -- (v4) -- (t);

        \node[draw,circle,ultra thick] (s) at (11,0) {$\sV$};
        \node[draw,circle,label=90:{\textcolor{cbRed}{\faIcon{robot}}}] (v2) at (12,0) {};
        \node[draw,circle,inner sep=1pt,label=90:{\textcolor{cbBlue}{\faIcon{robot}}}] (v3) at (13,0) {\textcolor{cbOrange}{\faIcon{bomb}}};
        \node[draw,circle] (v4) at (14,0) {};
        \node[draw,circle,ultra thick,label=90:{\phantom{\faIcon{robot}~\faIcon{robot}}}] (t) at (15,0) {$\tV$}; 

        \node[] at (13,-0.75) {$\text{\faIcon{clock}}=2$};
        \draw[ultra thick] (s) -- (v2) -- (v3) -- (v4) -- (t);

        \node[draw,circle,ultra thick] (s) at (0,-2.5) {$\sV$};
        \node[draw,circle] (v2) at (1,-2.5) {};
        \node[draw,circle,inner sep=1pt,label=90:{\textcolor{cbRed}{\faIcon{robot}}}] (v3) at (2,-2.5) {\textcolor{gray!40}{\faIcon{bomb}}};
        \node[draw,circle] (v4) at (3,-2.5) {};
        \node[draw,circle,ultra thick] (t) at (4,-2.5) {$\tV$}; 

        \node[] at (2,-3.25) {$\text{\faIcon{clock}}=3$};
        \draw[ultra thick] (s) -- (v2) -- (v3) -- (v4) -- (t);

        \node[draw,circle,ultra thick] (s) at (5.5,-2.5) {$\sV$};
        \node[draw,circle] (v2) at (6.5,-2.5) {};
        \node[draw,circle,inner sep=1pt] (v3) at (7.5,-2.5) {\textcolor{cbOrange}{\faIcon{bomb}}};
        \node[draw,circle,label=90:{\textcolor{cbRed}{\faIcon{robot}}}] (v4) at (8.5,-2.5) {};
        \node[draw,circle,ultra thick] (t) at (9.5,-2.5) {$\tV$}; 

        \node[] at (7.5,-3.25) {$\text{\faIcon{clock}}=4$};
        \draw[ultra thick] (s) -- (v2) -- (v3) -- (v4) -- (t);

        \node[draw,circle,ultra thick] (s) at (11,-2.5) {$\sV$};
        \node[draw,circle] (v2) at (12,-2.5) {};
        \node[draw,circle,inner sep=1pt] (v3) at (13,-2.5) {\textcolor{cbOrange}{\faIcon{bomb}}};
        \node[draw,circle] (v4) at (14,-2.5) {};
        \node[draw,circle,ultra thick,label=90:{\textcolor{cbRed}{\faIcon{robot}}}] (t) at (15,-2.5) {$\tV$}; 

        \node[] at (13,-3.25) {$\text{\faIcon{clock}}=5$};
        \draw[ultra thick] (s) -- (v2) -- (v3) -- (v4) -- (t);
    \end{tikzpicture}
    \caption{An illustration of our problem showing timesteps $0$ to~$5$,
      as indicated by~{\text{\faIcon{clock}}}. There are two assets,
      \textcolor{cbRed}{\bfseries red} and \textcolor{cbBlue}{\bfseries blue}.
      The reload time of the single trap $r$ (\textcolor{cbOrange}{\faIcon{bomb}})
    is $\reloadFn(r) = 1$.  If the trap is depicted in light gray
  (\textcolor{gray!25}{\faIcon{bomb}}), then it is inactive in this round.}
    \label{fig:example}
\end{figure*}

\section{The Model}

We model the space of possible moves as an undirected graph~$G=(V,E)$ that we
refer to as \emph{topology}. Herein, the vertices represent certain important
locations\footnote{We use the term location and vertex interchangeably.} and the
edges are shortest connections between them. We assume that traversing an edge
takes one unit of time. There are two distinguished
locations~$\{\sV,\tV\} \subseteq V$\,---\,the \emph{initial} and \emph{target location},
respectively. We use~$\numVertices$
to denote
the number of vertices
of~$G$. For some~$V' \subseteq V$, we use
$G[V']$ to denote the subgraph of~$G$ induced by~$V'$. For some positive
integer~$x$, $[x]$ represents the set~$\{1, 2, \ldots, x\}$.

We assume a set~$\agents$ of~$\numAgents$ \emph{assets}\footnote{We prefer
``assets'' over ``agents,'' as the former is more general as it encompasses both
autonomous and non-autonomous entities.}
$a_1,\ldots,a_\numAgents$, which are placed at the initial
location~$\sV$ at the beginning. Our goal is to move as many assets
as possible to the target location~$\tV$.\footnote{We do not
consider other goals, such as the total time of all assets reaching the
target location.} The assets move in discrete rounds using edges of the graph,
and no pair of assets can occupy the same vertex (except for~$\sV$ and~$\tV$) on
the same round. Moreover, a subset of locations~$\traps\subseteq
V\setminus\{\sV,\tV\}$ are protected by \emph{traps}~$r_1,\ldots,r_\numTraps$.
These traps are \emph{known} in advance, and each trap~$r_i$,~$i\in[\numTraps]$, is associated with its \emph{reload time}~$\reloadFn(r_i)\in\N$.
Initially, each trap is \emph{active}. If an asset enters a location protected
by some active trap~$r_i$, then the asset is eliminated, and~$r_i$ becomes
\emph{inactive}. If~$r_i$ is inactive, then assets can pass it without being
eliminated. Trap~$r_i$ remains inactive for~$\reloadFn(r_i)$ rounds after it was
triggered, and then becomes active again. 

\begin{example}\label{ex:problemDef}
For an illustration of the problem, see the instance in~\Cref{fig:example}.
There are two assets, red and blue, and a single trap with a reload time of
one. Initially, both assets are placed on the initial vertex~$\sV$. In the first
round, the blue asset moves to the single neighbor of~$\sV$ and the red asset
has to stay on~$\sV$, as otherwise they would occupy the same vertex, which is
forbidden. In the next round, both assets move one step toward $\tV$. The blue asset
activates the trap, is eliminated, and the trap becomes inactive. Since the trap
is inactive, the red asset can move to its location in the next round without
being eliminated. However, in round~$4$, the red asset must leave this
location, as the trap gets active again and would destroy the red asset.
Finally, the red asset moves to the target location.
\end{example}

The movement of assets is represented by a \emph{movement plan}, or simply
`\emph{plan}.' Formally, introducing a special
location~$\destroyed\not\in V$ for the eliminated assets, a plan~$\plan\colon
\agents\times\N_0\to V\cup \{\destroyed\}$ assigns every asset
$a\in\agents$ and timestep~$j\in\N_0$ the location~$v\in V \cup \destroyed$
of~$a$ at time~$j$. The plan~$\plan$ is \emph{valid} if all the following
conditions are satisfied:
\begin{enumerate}
    \item $\plan(a,0) = \sV$ for every~$a\in\agents$ and, for every~$j > 0$,
      if~$\plan(a,j) = \sV$ then~$\plan(a,j-1) = \sV$; that is, every asset
      starts at the initial location~$\sV$ and never returns there after leaving
      once,
    \item if~$\plan(a,j) \in \{\tV,\destroyed\}$ for some~$a\in\agents$
      and~$j\in\N_0$, then~$\plan(a,j+1) = \plan(a,j)$; that is, assets cannot
      leave the target and special location after they reach
      them,\label{prop:planNeverLeavesTargetOrDestroyed}
    \item for every~$a\in\agents$ and every~$j\in\N$ it holds that
      if~$\plan(a,j) \not= \destroyed$, then~$\plan(a,j)$ is adjacent to or
      equal to $\plan(a,j-1)$; that is, at each round assets move along the
      edges of the topology or stay put,
    \item for two distinct assets~$a,a'\in\agents$ and~$j\in\N_0$,~$\plan(a,j) =
      \plan(a',j)$ if and only if~$\plan(a,j) \in \{\sV,\tV,\destroyed\}$; that
      is, no two assets can stand on the same vertex~$v$, unless~$v$ is the
      initial, target, or special location $\destroyed$,
    \item if~$\plan(a,j) = r \in \traps$ for some~$a\in\agents$ and~$j\in\N$ and
      there is no~$a'\in\agents$ and~$\ell\in[\reloadFn(r)]$ such
      that~$\plan(a',\max\{0,j-\ell\}) = r$, then~$\plan(a,j+1) = \destroyed$;
      that is, if an asset steps on an active trap, it is eliminated, %
    \item for every~$j\in\N$, if there is~$a\in\agents$ such
      that~$\plan(a,j)\not\in\{\tV,\destroyed\}$, then for at least
      one~$a'\in\agents$ we have~$\plan(a',j)\not=\plan(a',j+1)$; that is, as
      long as some asset remains, at least one asset moves each round, and
    \item there is some~$\len(\plan) \in \N$ such that for every~$a\in\agents$
      it holds that~$\plan(a,\len(\plan)) \in \{\tV,\destroyed\}$, and there is
      an asset~$a\in\agents$
      with~$\plan(a,\len(\plan)-1)\not\in\{\tV,\destroyed\}$; that is, all
      assets either reach the target or are eliminated in finite time.
\end{enumerate}
Unless stated explicitly, we assume only valid plans. For a valid plan~\plan{}
of \emph{length}~$\len(\plan)$, we say that the plan~\emph{ends}
at~$\len(\plan)$, and we say that~$a \in \agents$ \emph{survived} if
${\plan(a,\len(\plan)) = \tV}$; otherwise,~$a$ \emph{failed}.

\begin{example}\label{ex:movementPlan}
    A plan capturing the movement of the assets as described in
    \Cref{ex:problemDef} is as follows (we assume that the internal vertices
    of~$G$ are, from left to right,~$v_2$,~$v_3$, and~$v_4$).
    \begin{center}
        \begin{tabular}{c|c|c|c|c|c|c}
           $\plan$ & 0 & 1 & 2 & 3 & 4 & 5  \\
             \midrule
           \textcolor{cbBlue}{\faIcon{robot}}  
             &~$\sV$ &~$v_2$ &~$v_3$ &~$\destroyed$ &~$\destroyed$ &~$\destroyed$ \\
           \textcolor{cbRed}{\faIcon{robot}}   
             &~$\sV$ &~$\sV$ &~$v_2$ &~$v_3$ &~$v_4$ &~$\tV$ \\
        \end{tabular}
    \end{center}
    It is easy to verify that this plan is valid and  has length~${\len(\plan) = 5}$.
\end{example}

We study the above-defined model through the lens of computational complexity of
the following decision problem.

\defProblemQuestion{\basicProb}%
    {A set of assets~$\agents$, a topology~$G = (V,E)$, distinct initial and
    target vertices~$\sV,\tV \in V$, a set of traps~$\traps\subseteq
    V\setminus\{\sV,\tV\}$, a reload time function~$\reloadFn\colon \traps\to
   \naturals$, and a goal~$\goal \in [|\agents|]$.}%
    {Is there a valid plan~$\plan$ yielding at least~$\goal$ surviving assets?}

\begin{remark}
    Unlike in (some variants of) \probName{Multi-Agent Path Finding}, we do not explicitly forbid two assets from swapping positions across an edge. However, since our assets are indistinguishable, any swap of two assets $a_i$ and $a_j$ in a plan~$\plan$ can be eliminated by the following transformation:
    \begin{enumerate*}[label=(\roman*)]
        \item At the time of the swap, let both assets wait at their respective pre-swap positions (i.e., before they would have crossed the edge).
        \item After that time, assign to $a_i$ the remainder of $a_j$'s plan, and to~$a_j$ the remainder of $a_i$'s plan.
    \end{enumerate*}
    By exhaustively applying this transformation to all swaps in $\plan$, one obtains a modified plan without any swaps, achieving the same number of surviving assets, with identical temporal performance. Thus, when the application requires a plan without swaps, we may enforce this restriction without loss of optimality.
\end{remark}

\section{The Existence of Short Plans}

The properties required in the definition of a valid plan directly forbid some
unreasonable solutions, such as infinite plans or keeping all assets at the
initial location without moving any of them. However, it is not strong enough to
forbid very lengthy plans, e.g., plans whose length is exponential in the input
size. The following theorem shows that a plan~$\plan$ ensuring that
$\goal$~assets survive implies a \emph{short} plan~$\plan'$ with at least the
same number of surviving assets.

\begin{theorem}\label{thm:shortPlans}
  Let~$\mathcal{I} = (G, \sV, \tV, \traps, \reloadFn, \agents, \goal)$ be an
  instance of \basicProb{} and let~$\plan$ be a plan of length~$\ell$
  with at least~$\goal$ surviving assets witnessing that~$\mathcal{I}$ is a
  \Yes-instance. Then, there is a plan~$\plan'$ of
  length~$\ell'\in\operatorname{poly}(n,\numAgents)$ with at least~$\goal$
  surviving assets. 
\end{theorem}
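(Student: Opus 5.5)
The plan is to bound the \emph{total number of moves} made by all assets, not the length directly. By the validity condition that at least one asset moves in every round while some asset is still in play, $\len(\plan)$ is at most the total number of individual moves in $\plan$. It therefore suffices to transform $\plan$ into a plan $\plan'$ in which each asset makes only $\operatorname{poly}(n,\numAgents)$ moves and the set of surviving assets is not smaller. Reload times may be exponential in the encoding, but this is harmless: the bound I aim for does not depend on $\reloadFn$. Letting time pass only re-activates traps, so any waiting has to be justified by the need of some asset to move.

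First I split $\plan$ into \emph{phases} separated by \emph{critical events}. A critical event is an asset being eliminated on a trap, or an asset reaching $\tV$ or leaving $\sV$. There are at most $3\numAgents$ such events.

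Within a phase, the only things that matter for the future are two pieces of information. The first is the (unlabelled, since assets are indistinguishable) set of occupied vertices at the end of the phase. The second is, for every trap entered during the phase, the time at which the move onto it happens relative to the last trigger, which must lie within the reload window.

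I then replace the movement inside each phase by a canonical one and argue feasibility. Assets that pass over inactive traps are moved along shortest walks inside the subgraph of currently safe vertices. Repositioning of the remaining tokens uses the standard fact about unlabelled token sliding: whenever one unlabelled configuration can be reached from another by sliding, it can be reached in $\Oh{n^2}$ moves (push tokens one at a time along paths, re-routing around blockers). Combined with the swap-elimination remark, this gives a per-phase bound of $\operatorname{poly}(n,\numAgents)$ moves. Summing over the $\Oh{\numAgents}$ phases then gives the claimed bound on $\ell'$.

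The main obstacle is the timing constraint introduced by reloads. Shortening a phase makes the next trigger happen earlier. Exploiting an inactive trap is only helped by arriving earlier after its trigger. However, later phases may rely on a trap being inactive at a specific time, and a trap triggered earlier also reactivates earlier. So I need a monotonicity lemma: if we compress the plan so that every critical event keeps its relative order, and every ``pass over inactive trap $r$'' happens no later after the corresponding trigger of $r$ than in $\plan$, then validity is preserved. I would try to prove this by processing critical events in chronological order. Each trigger is anchored to the first moment it can be reached, and each pass is scheduled immediately after its trigger, with the intermediate tokens' sliding also compressed. The delicate case is when two triggers of different traps must be \emph{simultaneously} inactive, so their relative offset matters. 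There I would keep the original offset by inserting explicit waiting, and bound the waiting needed by the number of moves in the preceding phase rather than by $\reloadFn$. If this fails, the fallback is to keep the original relative schedule of triggers and passes but delete every round in which only ``useless'' wandering moves occur.
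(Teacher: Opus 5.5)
Your skeleton is the same as the paper's. You cut $\rho$ at trap activations and replace each stretch between them by a token reconfiguration of polynomial length; the paper attaches paths to $s$ and $t$ so that configurations become $2$-colourings, and uses the $\binom{\hat n}{2}$ token-swapping bound. Then you argue that the rest of the plan still works. That last step is where your proposal has a genuine gap.

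Your monotonicity lemma requires the compressed plan to keep every critical event, in particular every elimination. A trap can only eliminate an asset while it is active, so two consecutive eliminations on the same trap $r$ are at least $c(r)+1$ rounds apart. No order-preserving compression can shrink the stretch between them below that. Since $\rho$ is arbitrary and $c(r)$ is not bounded by any polynomial in $n$ and $m$, a plan that re-triggers a long-reload trap cannot be brought to length $\operatorname{poly}(n,m)$ this way. So your per-phase bound is unfounded exactly in the case that matters.

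Your remark that arriving earlier after a trigger only helps is true for passes over an inactive trap, but false for sacrifices. If the phase before a re-trigger of $r$ is shortened, the sacrificing asset may step on $r$ while it is still inactive, survive, and open no new window. The passes that $\rho$ timed from that window then no longer see the trap state they rely on, so your lemma's hypothesis cannot be met.

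The suggested remedy, waiting only as long as the moves of the previous phase, has no justification. The wait needed to re-trigger $r$ is governed by $c(r)$, not by any move count. Moreover, by validity condition~6 waiting rounds must still contain a move, so your moves-based length bound would grow with the wait. The fallback of keeping $\rho$'s trigger schedule gives no bound in terms of $n$ and $m$ at all.

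What is missing is an argument that is allowed to change \emph{which} eliminations occur and shows that this never costs survivors. The paper addresses this in Lemma~\ref{lem:shortening}. After compressing, it replays the remainder of $\rho$ and leaves in place any asset whose sacrifice no longer happens, together with the assets it blocks. It then relabels the indistinguishable assets and argues that the replayed configuration always contains the original one, so the number of non-destroyed assets never drops. This is applied in Lemma~\ref{lem:global-shortening} to every gap longer than $(2m+n)^2$ between consecutive activations. Your proposal has no counterpart to this step.

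A smaller issue concerns the inside of a phase. There the set of safe vertices is not fixed, since windows opened before the phase may close during it. The static unlabelled-sliding bound therefore only applies to vertices that stay safe throughout the compressed interval. Also, the information carried across a phase boundary must include each trap's remaining inactive time, which your list of what matters for the future omits.
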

\begin{proof}
  Our first step is to transform the given network~$G$ into a vertex-colored
  graph that depends on~$\plan$. Given the new colored graph, we interpret
  movement plans as graph-coloring reconfigurations and combine existing results
  with new observations to obtain our theorem.
  \begin{definition}
    A \emph{simple transform} of a network~$G=(V,E)$ and a number~$\goal$ of
    surviving assets for some plan, is a graph~$\hat{G}=(V \cup S \cup T, E \cup
    E')$ such that vertex set~$S$ consists of~$\numAgents - 1$~vertices arranged
    in a path attached at one of its ends to~$\sV \in V$, vertex set~$T$ consists
    of~$\goal-1$~vertices arranged in a (different) path attached with one of
    its ends to~$\tV$, and~$E'$ contains the respective edges.
  \end{definition} 

  \noindent
  To fully exploit the simple transform and capture the dynamics of moving
  assets, we define a specific graph coloring.

  \begin{definition}\label{def:snapshot}
    For some timestep~$i \in \naturals$, an \emph{$i$-snapshot} of a
    network~$G=(V,E)$ and a plan~$\plan$ with length at least~$i$ is
    tuple~$\snap^{G, \plan}_i = (\hat{G}, h)$, where~$\hat{G} = (V \cup S \cup
    T, E \cup E')$ is the simple transform of~$G$ and~$\plan$ and~$h$ is a
    binary coloring of the vertices of~$\hat{G}$ such that:
    \begin{enumerate}
      \item for each~$v \in V$,~$h(v) = 1$ if there is an asset~$a \in
        \agents$ that, according to~$\plan$, occupies~$v$ at time~$i \in \naturals$,
       $h(v) = 0$ otherwise;
      \item given that the number~$x$ of assets occupy the initial
        vertex~$\sV$ at time~$i$, for~$(x-1)$~vertices~$v \in S$ whose distance is
        at most~$x-1$ from~$\sV$,~$h(v) = 1$, and~$h(v) = 0$ for the remaining
        vertices in~$S$; and  
      \item given that the number~$y$ of assets occupy the terminal
        vertex~$\tV$ at time~$i$, for~$(y-1)$~vertices~$v \in T$ whose distance is
        at most~$(y-1)$ from~$\tV$,~$h(v) = 1$, and~$h(v) = 0$ for the remaining
        vertices in~$T$.  
    \end{enumerate}
    For readability, we use~$\snap_{i}$ whenever it is unambiguous.
  \end{definition} 

  Conveniently, the number of timesteps needed to reach one snapshot from
  another is upper-bounded by a polynomial.

  \begin{lemma}\label{lem:shapshots-reachability}
    Given a network~$G=(V,E)$, an empty set~$R = \emptyset$ of traps, a
    plan~$\plan$, and two snapshots~$\snap^{G, \plan}_i$ and~$\snap^{G,
    \plan}_j$ for~$i \leq j \leq \len(\plan)$, there is a plan~$\plan'$ of
    length~$\len(\plan')< (2\numAgents + \numVertices)^2$ such that~$\snap^{G,
    \plan'}_0 = \snap_i$ and~$\snap^{G, \plan'}_{\len(\plan')} = \snap_j$. 
  \end{lemma}
  \begin{claimproof}
    By definition, snapshots~$\snap^{G, \plan}_i$ and~$\snap^{G, \plan}_j$ are
    in fact two different colorings~$h_i$ and~$h_j$, both with two colors, of
    the same graph~$\hat{G} = (V \cup S \cup T, E \cup E')$, which has~$\hat{n}
    \coloneq \numAgents - 1 + \numVertices + \goal - 1 < 2\numAgents +
    \numVertices$~vertices, as~$\goal \leq \numAgents$.

    Given this interpretation, to find the claimed moving plan means to obtain
    a~$2$-coloring~$h_j$ from that of~$h_i$ via a series of certain steps.
    Specifically, in each step, we can swap colors of a pair of adjacent
    vertices of mutually distinct color. Indeed, such a step represents moving
    an asset from the vertex it occupies (colored~$1$ by definition of a
    snapshot) to some vertex that no asset occupies (colored~$0$ by the same
    definition).

    It is sufficient to only upper-bound the number of steps needed for
    transforming the colorings. To do so, we directly apply the result
    of~\citet[Lemma 1]{yam-et-al:j:swapping-colored-tokens} stating that the
    upper bound is~$\binom{\hat{n}}{2} < (2\numAgents + \numVertices)^2$, as we
    claim.
  \end{claimproof}

  We introduce a restricted snapshot that allows us to analyze snapshots of a
  certain part of the topology.
  \begin{definition}\label{def:restricted-snapshot}
    Consider some timestep~$i \in \naturals$, a network~$G=(V,E)$, a plan~$\plan$, an
    \emph{$i$-snapshot}~$\snap^{G, \plan}_i = (\hat{G}, h)$, where~$\hat{G} = (V
    \cup S \cup T, E \cup E')$, and a subset~$V' \subseteq V$.
    Let~$\widehat{V}$ be a copy of~$V'$ augmented with~$S$ if $\sV \in V'$ and
    with~$T$ if~$\tV \in V'$.
    Then, a \emph{$V'$-restricted~$i$-snapshot}~$\snap^{G, \plan}_i\rest_{V'}$
    is a tuple $(\hat{G}[\widehat{V}], h\rest_{\widehat{V}})$,
    where~$h\rest_{\widehat{V}}$ is the restriction of~$h$ to vertices
    in~$\widehat{V}$.
  \end{definition} 

  It is immediate that~\Cref{lem:shapshots-reachability} can be applied to a
  pair of snapshots restricted to a set of vertices that do not contain traps and
  form a connected component.
  \begin{corollary}\label{cor:restricted-snapshots-reachability}
    Given a network~$G=(V,E)$, a set~$R \subset V$ of traps, a plan~$\plan$, a
    subset~$V' \subseteq V$ such that~$V' \cap R = \emptyset$ and~$G[V']$ is a
    connected component, and two snapshots~$\snap^{G,
    \plan}_i\rest_{V'}$ and~$\snap^{G, \plan}_j\rest_{V'}$
    for~$i \leq j \leq \len(\plan)$ whose number of vertices colored~$1$ are the
    same, there is a plan~$\plan'$ of length~$\len(\plan')< (2\numAgents +
    \numVertices)^2$ such that~$\snap^{G, \plan'}_0\rest_{V'} = \snap^{G,
    \plan}_i\rest_{V'}$ and~$\snap^{G, \plan'}_{\len(\plan')}\rest_{V'} = \snap^{G,
    \plan}_j\rest_{V'}$.
  \end{corollary}

  We now show that contracting a part of a plan between two snapshots is
  sufficient to contract the whole plan. 
  \begin{lemma}\label{lem:shortening}
    Given an instance~$(G, \sV, \tV, \traps, \reloadFn, \agents, \goal)$ of
    \basicProb{}, let~$\plan$ be a valid plan of length~$\len(\plan) = x
    + \delta + y$ in which at least~$\goal$ assets survive such that~$\plan$
    does not activate any trap in timesteps~$x+1$ to~$x+\delta-1$, inclusive. If
    there is a (non-valid) plan~$\plan''$ of length~$\len(\plan'') = x +
    \delta''$ such that \begin{enumerate} \item for each asset~$\agent \in
      \agents$ and all~$i \in [x]$ it holds that~$\plan(\agent, i) =
      \plan''(\agent, i)$; \item~$\snap^{\plan''}_{x+\delta''} =
      \snap^{\plan}_{x+\delta}$; \item plan~$\plan''$ does not activate any trap
      in timesteps~$x+1$
        to~$x+\delta''-1$ inclusive; and
      \item~$\delta'' < \delta$;
    \end{enumerate}
    then there exists a valid plan~$\plan'$ of length~$\len(\plan') = x +
    \delta'' + y$ with at least~$k$ surviving assets.
  \end{lemma}
  \begin{claimproof}
    We build plan~$\plan'$ by copying~$\plan''$ and then extending it with the
    last~$y$~steps of~$\plan$, adapting them if needed.

    Consider running~$\plan'$ according to~$\plan''$ until timestep~$x +
    \delta''$. Since~$\snap^{\plan'}_{x+\delta''} = \snap^{\plan}_{x+\delta}$,
    we now attempt to replicate in~$\plan'$ every timestep~$j$, in the natural
    order, from~$x + \delta + 1$ to~$x+\delta + y$ according to plan~$\plan$.
    Note that we certainly succeed for the timestep~$x + \delta + 1$
    because~$\snap^{\plan}_{x+\delta} = \snap^{\plan'}_{x+\delta''}$.

    For some $j$, the replication might become impossible. That is, there is an
    asset~$\agent$ such that at timestep~$j$ of~$\plan$
    this asset should occupy a vertex~$v$ occupied by another asset~$\agent'$.
    Our intention is to leave~$\agent$ intact at vertex~$v$ at timestep~$j$ but
    this, however, might invalidate the move of another asset at the same time,
    the one that potentially was supposed to take position at~$v$ instead
    of~$\agent$. Hence, we collect those assets that we cannot move due to
    leaving~$\agent$ intact. On the graph, these assets can form either a path
    or a cycle. In both cases, we leave \emph{all} of these assets intact at
    timestep~$j$. Since we are considering the first such~$j$ for which
    mimicking~$\plan$ failed, the described situation can happen only because of
    some asset~$\agent'$ that was destroyed at time~$j$. This means, in
    particular, that~$\agent'$ moves at time~$j$ to the destroyed state,
    $\destroyed$, in plan~$\plan$. However, we also let~$\agent'$ stay intact,
    as it is now not destroyed; for, in the opposite case~$\agent$ would
    have been able to take its place. It is not hard to verify that because of
    these alignments, we obtain a snapshot~$\snap^{\plan'}_{j}$ whose set~$O$ of
    vertices colored~$1$ is a superset (potentially a strict superset) of that
    of~$\snap^{\plan}_{j}$. Put in words, the former contains all assets at the
    same vertices as that of the latter, and, second, the former has maybe even
    more assets. For the next step, we relabel the assets in our plan~$\plan$
    such that they meet exactly the positions of vertices in our just-extended
    plan~$\plan'$. There will be assets that cannot be matched, and these will
    be deemed to stay put in~$\plan'$, unless another conflict like the
    described one appears.

    As a result of running the whole ``copying'' procedure, we obtain~$\plan'$
    that loses at most as many assets as the original plan~$\plan$. Indeed, our
    procedure of copying never decreased the number of non-destroyed assets.
    Potentially, the remaining vertices might be directed to the final vertex,
    if that turns out to be possible. 
  \end{claimproof}

  We combine the previous observations into the following crucial lemma, which
  upper-bounds the number of timesteps between two subsequent activations of
  traps.

  \begin{lemma}\label{lem:global-shortening}
    Let~$k_1, \ldots, k_q$, such that~$k_i \leq k_{i+1}$ for each~$i \in
    [q]$ be the timesteps at which plan~$\plan$ activates traps. Then, if there
    is~$i \in [q-1]$ such that~$k_{i+1} - k_i > (2\numAgents + \numVertices)^2$,
    then there is a plan~$\plan'$ such that~$\snap^{G, \plan'}_0 = \snap^{G,
    \plan}_0$ and~$\snap^{G, \plan'}_{\len(\plan')} = \snap^{G,
    \plan}_{\len(\plan)}$ and such that in the trap activation series~$k'_1,
     \ldots, k'_q$ of~$\plan'$ it holds that~$k'_{i+1} - k'_i <
    (2\numAgents + \numVertices)^2$.
  \end{lemma}
  \begin{claimproof}
    Let us fix~$i$ such that~$k_{i+1} - k_i > (2\numAgents + \numVertices)^2$,
    and let~$K \subseteq V$ be the set of vertices passed by at least one asset
    between timesteps~$k_{i}$ and~$k_{i+1}$, inclusive. Now, let us consider the
    graph~$G' = G[V \setminus \traps \cup K]$. Note that the vertices of~$G'$
    are either non-trap vertices or vertices whose trap is deactivated between
    timesteps~$k_i$ and~$k_{i+1}$, so they can be considered non-trap vertices
    during that period.
    
    Let~$C_1, C_2, \ldots C_x$  be the connected components of~$G'$ (clearly,
    there must be at least one). According
    to~\Cref{cor:restricted-snapshots-reachability}, for each component~$C_j$,
    $j \in [x]$, there is a plan~$\plan_j$ such that~$\snap_{k_i +
    1}^{\plan}\rest_{C_j} = \snap_{k_i + 1}^{\plan_j}\rest_{C_j}$
    and~$\snap_{k_{i+1} - 1}^{\plan}\rest_{C_j} = \snap_{k_{i+1} -
    1}^{\plan_j}\rest_{C_j}$.  Let~$L$ be the maximum length among these new
    plans~$\plan_1, \plan_2, \ldots, \plan_x$. We build a new plan~$\plan'$ as
    follows. We start with copying the plan~$\plan$ for the first~$k_i$
    timesteps for all assets. Then, we let each group~$A_j$ of assets that
    remained in the respective connected component~$C_j$ at timesteps~$k_i + 1$
    to~$k_{i+1}-1$ follow their plan~$\plan_j$ for a~$\len(\plan_j)$~timesteps.
    For each such~$j$ that~$\len(\plan_j) < L$, we let the respective assets
    from group~$A_j$ stay put for the next~$L - \len(\plan_j)$~timesteps.
    Finally, we again let all assets follow the original plan~$\plan$, which
    completes the construction of~$\plan'$.

    Naturally,~$\snap^{G, \plan'}_{y} = \snap^{G, \plan}_{y}$ for each~$y \in [0,
    k_{i}]$ as the respective plans are identical at these timesteps. By
    applying~\Cref{cor:restricted-snapshots-reachability} to plans~$\plan_1,
    \ldots, \plan_x$, copying these plans into~$\plan'$, and enforcing
    the same length of~$L$ by forced idling, it holds that~$\snap^{G,
    \plan'}_{k_i + L} = \snap^{G, \plan}_{k_{i+1} - 1}$.  Note that by the fact
    that the vertices of~$G'$ are either non-trap vertices or vertices for which
    their trap is deactivated between timesteps~$k_i$ and~$k_{i+1}$, applying
    the forced idling does not activate any trap.
    Finally, for the remaining part of plans~$\plan$ and~$\plan'$, we note that
    they meet the conditions of~\Cref{lem:shortening}. So, we apply this lemma
    and get that
    in plan~$\plan'$ there are at least as many surviving assets as in
    plan~$\plan$.
  \end{claimproof}

  To get the result, we exhaustively apply~\Cref{lem:global-shortening}.
\end{proof}

Due to the previous theorem, we can focus only on deciding whether there is a
short solution to the \basicProb problem. Hence, for the rest of the paper, we
assume only short solutions without explicitly specifying it. Therefore, we
obtain the following.

\begin{theorem}\label{cor:probInNP}
    \basicProb is in \NP.
\end{theorem}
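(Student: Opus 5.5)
The approach is the standard guess-and-check argument, with the certificate being a plan, and \Cref{thm:shortPlans} supplies the polynomial bound on its size. Let $\mathcal{I} = (G, \sV, \tV, \traps, \reloadFn, \agents, \goal)$ be a \Yes-instance. By \Cref{thm:shortPlans}, there is then a valid plan $\plan'$ of length $\ell' \le p(\numVertices,\numAgents)$ for a fixed polynomial $p$, with at least $\goal$ survivors. The certificate is the table $\plan'\colon \agents \times \{0,\dots,\ell'\} \to V \cup \{\destroyed\}$. It has $(\ell'+1)\numAgents$ entries of $\Oh{\log \numVertices}$ bits each, so its size is polynomial in $\numVertices$ and $\numAgents$.

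One encoding point needs care. The number of assets $\numAgents=|\agents|$ is polynomial in the input size because the assets are listed explicitly in the input, so the bound in $\numAgents$ is fine. The reload times $\reloadFn(r)$, however, may be given in binary and can be exponential. This does not affect the certificate size, since $\ell'$ does not depend on $\reloadFn$. It only affects the verifier, which must compare time differences with $\reloadFn(r)$. Those comparisons are arithmetic on numbers of polynomially many bits.

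The verifier checks the seven validity conditions of a plan one by one over all timesteps $j \le \ell'$ and all assets:
\begin{enumerate*}[label=(\roman*)]
\item start at $\sV$ and no return to $\sV$;
\item $\tV$ and $\destroyed$ are absorbing;
\item each move follows an edge or stays put;
\item no collisions outside $\{\sV,\tV,\destroyed\}$;
\item trap elimination;
\item at least one asset moves while some asset is still active;
\item every asset is at $\tV$ or $\destroyed$ at time $\ell'$, and some asset is not at time $\ell'-1$.
\end{enumerate*}
Finally, it counts the assets with $\plan'(a,\ell')=\tV$ and compares this count with $\goal$.

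Condition~(v) is the only step that is not immediate. For an asset at trap $r$ at time $j$, the verifier scans the earlier timesteps $j-\lambda$ with $\lambda \ge 1$ and checks whether some asset occupied $r$ at such a time with $\lambda \le \reloadFn(r)$. Because $j \le \ell'$, at most $\ell'$ values of $\lambda$ need to be inspected, even when $\reloadFn(r)$ is huge. Each step therefore takes polynomial time, and the whole check runs in polynomial time.

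Conversely, any accepted certificate is a valid plan with at least $\goal$ survivors, which gives soundness. The main obstacle is entirely the length bound on plans, which \Cref{thm:shortPlans} already provides. What remains is to note that the bound is polynomial in the input size and independent of the magnitudes of the reload times.
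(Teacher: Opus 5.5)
Your proposal is correct and follows the paper's route: use \Cref{thm:shortPlans} to obtain a plan of polynomial length as the certificate, then verify it by simulating the validity conditions step by step and counting the surviving assets. Your extra remarks are sound refinements of details the paper leaves implicit, namely that the assets are listed explicitly in the input, that the length bound does not depend on the reload times, and that the check of condition~(v) needs to scan at most $\ell'$ earlier timesteps even when reload times are encoded in binary.
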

\begin{proof}
    By \Cref{thm:shortPlans}, if a given instance~$\mathcal{I}$ is a
    \Yes-instance, then there exists a certificate (plan~$\plan$) of polynomial
    length. Therefore, to verify the certificate, we simulate it, check its
    validity at every step, and observe whether at least~$\goal$ assets have been
    successful.
\end{proof}

\section{Efficient Algorithms}\label{sec:algos}

The containment in \NP{} established in the above section is just the first step
on the way of understanding the hardness of~\basicProb{}. In particular, it does
not imply efficient solvability, a trait potentially of practical relevance.
So, can we do better and solve~\basicProb{} in polynomial time?

Chasing the answer to the posed question, let us first make several helpful
assumptions on the input of~\basicProb{}. By this, we will avoid distracting
trivial cases and simplify the formal analysis of the computational complexity
of our problem. To this end, let us fix some instance~$\mathcal{I} = (G, \sV,
\tV, \traps, \reloadFn, \agents, \goal)$ of~\basicProb{}. Then, we assume
without loss of generality that:
\begin{enumerate}
  \item $G$ contains an~$\sV$-$\tV$~path, as otherwise~$\mathcal{I}$ is
    trivially a \No-instance;
  \item $G$ is connected; if not, using the breadth first search, we identify in
    polynomial time all components of~$G$ and narrow the topology down to
    the one containing~$\sV$ and~$\tV$;
  \item $G[V \setminus \traps]$ does not contain an~$\sV$-$\tV$ path;
    otherwise, $\mathcal{I}$ is a \Yes-instance because all assets
    survive by following the trap-free path in question (note that~$\goal
    \leq |\agents|$).
\end{enumerate}

Having identified and dismissed the obvious cases, our first result delivers
good news. It turns out that \basicProb{} is efficiently solvable for path
topologies by following an intuitive \emph{run-wait-sacrifice} plan. Here, each asset
always moves towards the target---runs---if only the following location is
unoccupied and is not an active trap, waits before an active trap~$r$ until it
is followed by at least~$\reloadFn(r)$ assets, and then sacrifices itself by
stepping on~$r$ and thus letting the~$\reloadFn(r)$ followers pass the
now-inactive~$r$. Contrary to the general simplicity of the run-wait-sacrifice
approach, the proof of its optimality involves a complex and careful analysis.

\begin{theorem}\label{thm:path:poly}
    If~$G$ is a path, \basicProb{} is solvable in polynomial time.
\end{theorem}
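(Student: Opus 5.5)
We plan to show that the run-wait-sacrifice plan is optimal on a path, and that its outcome can be computed in polynomial time. Let the path be $\sV = v_0, v_1, \ldots, v_L = \tV$ with traps $r_1, \ldots, r_\numTraps$ in this order from $\sV$ to $\tV$.

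We first fix what an asset that crosses a trap $r$ unharmed must look like. It must enter $r$ within the $\reloadFn(r)$ rounds after some asset was eliminated on $r$. Since assets on a path cannot overtake one another and vertices hold at most one asset, at most $\reloadFn(r)$ assets can pass $r$ per sacrifice. More precisely, the sacrificed asset sits at time $j$, and the followers occupy $r$ at pairwise distinct times in $\{j+1,\ldots,j+\reloadFn(r)\}$.

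This suggests the following recursion. Let $f_i(x)$ be the maximum number of assets that can be brought past $r_1,\ldots,r_i$ when $x$ assets start. We conjecture that $f_i(x)=g_{r_i}(f_{i-1}(x))$ with $g_r(y) = \lfloor y/(\reloadFn(r)+1)\rfloor\cdot \reloadFn(r) + \max\{0,(y \bmod (\reloadFn(r)+1)) - 1\}$. That is, groups of $\reloadFn(r)+1$ assets lose one member each, and a final partial group of size $y'\ge 1$ loses one member and passes $y'-1$. The answer is then $f_\numTraps(\numAgents)\ge \goal$, which is computable in $\Oh{\numTraps}$ arithmetic operations on numbers of size $\numAgents$.

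The proof proceeds in two steps.

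\emph{Upper bound.} We show by induction over traps that every valid plan brings at most $f_i(\numAgents)$ assets past $r_i$. The key claim is that if $y$ assets ever reach the vertex immediately before $r_i$, then at most $g_{r_i}(y)$ of them pass $r_i$. Each passage needs an eliminated asset at $r_i$ at most $\reloadFn(r_i)$ rounds earlier, so the passages can be charged to distinct eliminations with at most $\reloadFn(r_i)$ passages per elimination. This gives $\mathrm{pass} \le \reloadFn(r_i)\cdot \mathrm{elim}$ and $\mathrm{pass}+\mathrm{elim}\le y$, and maximizing over integers yields exactly $g_{r_i}(y)$. Since $g$ is monotone, the induction closes.

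\emph{Lower bound.} We show that run-wait-sacrifice realizes these numbers. The main obstacle is the interaction between traps. Followers must occupy $r$ at consecutive times after the sacrifice, and they must leave $r$ before it reactivates. Gaps or queues formed downstream, for example a group waiting at the next trap $r_{i+1}$ when traps are close together, could block followers from advancing and force them to remain on $r_i$ when it rearms.

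We would handle this by analyzing the plan in batches. Before sacrificing at $r_i$, an asset waits until it has $\reloadFn(r_i)$ followers packed contiguously behind it. We then argue that the path segment between consecutive traps is free at that moment, because the previous batch has either advanced or is itself waiting. If necessary, the plan delays the sacrifice until the downstream segment has room. Delaying is harmless because Theorem~\ref{thm:shortPlans} bounds the total length polynomially anyway, and waiting never costs assets.

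A cleaner alternative is to process traps from $\tV$ backwards, or to send the entire group past one trap before any asset approaches the next trap. In the latter approach, all survivors of $r_i$ queue contiguously behind $r_{i+1}$ before anything happens there. The segment between two traps may be shorter than the queue, but queued assets can extend back through already-passed traps, which are only dangerous when active. This point needs a careful check: a queue extending back over $r_i$ would be destroyed when $r_i$ reactivates. So we would instead pipeline the process, sending one batch of $\reloadFn(r_{i+1})+1$ assets at a time across the segment. Verifying this pipelining without losses, and showing that a final partial batch loses exactly one asset, is where the delicate argument lies.
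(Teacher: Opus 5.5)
There is a genuine gap, and it is the step you flagged as delicate: the conjectured identity $f_i=g_{r_i}(f_{i-1})$ is false, so no pipelining argument can establish the lower bound. Your per-trap upper bound ($\mathrm{pass}\le \reloadFn(r)\cdot\mathrm{elim}$, $\mathrm{pass}+\mathrm{elim}\le y$) is sound, but it treats each trap in isolation. On a path, the $\reloadFn(r_i)$ followers of a sacrifice at $r_i$ must either already be queued on non-trap vertices directly in front of $r_i$, or be released by the upstream trap in consecutive rounds. Neither need be possible.

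Concretely, take the path $\sV,r_1,r_2,\tV$ with $\reloadFn(r_1)=1$, $\reloadFn(r_2)=2$ and six assets. Your recursion gives $g_{r_1}(6)=3$ and then $g_{r_2}(3)=2$. However:
\begin{itemize}
\item $r_1$ is inactive only in the single round after each elimination on it, and at most one asset stands on it in that round.
\item That asset must step onto $r_2$ in the next round, since it cannot return to $\sV$.
\item Inactive rounds of $r_1$ are at least two rounds apart, so assets enter $r_2$ at least two rounds apart.
\end{itemize}
Hence each two-round window of $r_2$ contains at most one entry, and every survivor needs its own elimination on $r_2$. Moreover, survivors and assets eliminated on $r_2$ each consume a distinct elimination on $r_1$. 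With six assets this yields at most one survivor, not two. Waiting or delaying cannot help, because there is no vertex between the traps on which a batch could be accumulated.

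The paper does not attempt a closed form. It first shows by exchange arguments that some optimal plan
\begin{itemize}
\item never moves backwards (\Cref{clm:path:poly:noReturns});
\item always advances into a free vertex that is not an active trap (\Cref{clm:path:poly:move});
\item in front of an active trap $r_{i+1}$, delays the sacrifice until $\min\{|S_i|,\reloadFn(r_{i+1}),\text{(number of assets behind)}\}$ vertices behind it are occupied (\Cref{clm:path:poly:waitSacrifice}).
\end{itemize}
The segment length $|S_i|$ in this threshold is precisely the storage constraint your recursion omits. The algorithm then simulates this canonical run-wait-sacrifice plan round by round and counts the survivors. The interplay between segment capacities and the release pattern of upstream traps is thus handled by the simulation, not by a per-trap formula.

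To repair your approach you would need either such a canonical-form argument followed by simulation, or a recursion whose state records, for each trap, how many assets pass, when they arrive, and how many can be stored before the next trap.
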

\begin{proof}
  Our algorithm is built on top of
  auxiliary~\Cref{clm:path:poly:noReturns,clm:path:poly:move,clm:path:poly:waitSacrifice}
  below
  that show what an optimal solution for path graphs looks like. Prior to
  focusing on the structure of a solution, we make several observations to get
  rid of trivial cases. Without loss of generality, we assume in the rest of the
  proof that
    \begin{enumerate*}[label=(\roman*)]
        \item $\deg(\sV) = 1$, as by property 1 of a valid plan, no asset can
          return to $\sV$ once it leaves it,
        \item $\deg(\tV) = 1$, as by property
          \ref{prop:planNeverLeavesTargetOrDestroyed} of valid plan, no asset
          can leave~$\tV$ once it reaches it.
    \end{enumerate*}
    Obviously, both properties heavily rely on the fact that the topology under consideration is a simple path. 
    As a consequence, we have that every solution plan follows the shortest $\sV,\tV$-path. For the rest of the proof, we assume that $V(G) = \{v_1,\ldots,v_\numVertices\}$, $E(G) = \{\{v_i,v_{i+1}\}\mid i\in[n-1]\}$, and $\sV = v_1$ and $\tV = v_\numVertices$.

    Previous properties show that the assets use solely vertices of the shortest $\sV,\tV$-path. However, they do not necessarily ensure that assets are not ``returning'' to previously left locations. In our first auxiliary claim, we prove that we can indeed assume only plans where, in each step, assets are either waiting on their current locations, or they are moving closer to the target location $\tV$. 

\begin{restatable}{claim}{clmPolyPathNoReturns}\label{clm:path:poly:noReturns}
  \deferredproofmark{clm:path:poly:noReturns}%
  If an instance~$\mathcal{I}$ of \basicProb, where~$G$ is a path, admits a
  solution plan~$\plan$, then it also admits a solution plan $\plan'$ such that
  for every $a\in\agents$ and every timestep $j$ we have $\plan'(a,j+1) =
  \plan'(a,j)$; $\plan'(a,j) = v_{i}$ and $\plan'(a,j+1) = v_{i+1}$ for some
  $i\in[\numVertices-1]$; or $\plan'(a,j+1) = \destroyed$.
\end{restatable}    

    Next, we show that if there is an asset that can move to an empty vertex closer to the target, that is, a vertex that is not a trap and is not occupied by another asset, this asset can always move to this vertex.

\begin{restatable}{claim}{clmPolyPathMove}\label{clm:path:poly:move}
  \deferredproofmark{clm:path:poly:move}%
  Let~$\plan$ be a solution plan, $a\in\agents$ be an asset such
  that~$\plan(a,j) = v_i$, $i\in[\numVertices-1]$, $\plan(a,j+1) \not=
  \destroyed$, $v_{i+1}$ is not an active trap in round~$j+1$, and there is no
  $a'\in\agents\setminus\{a\}$ such that $\plan(a',j+1) = v_{i+1}$.  Then, there
  is a solution plan~$\plan'$ such that~$\plan'(a,j+1) = v_{i+1}$.
\end{restatable}

That is, according to~\Cref{clm:path:poly:move}, assets
do not wait at the same vertex if they can move closer to the
target location~$\tV$. The same claim also implies that whenever an asset is
at location $v_{n-1}$, it immediately moves to $v_{n}=\tV$ in the next round
(unless it is destroyed by a trap). However, the necessary condition for the
movement of such asset is that the neighboring vertex is not a trap. In such
situation, on the other hand, we show that waiting is highly desirable.
However, for this, we need more notation. Let $r_1,\ldots,r_{\numTraps}$ be
an ordering of traps such that $\operatorname{dist}(r_i,\tV) >
\operatorname{dist}(r_{i+1},\tV)$ for every $i\in[\numTraps-1]$. We call the
sub-path between traps $r_i$ and $r_{i+1}$ a \emph{segment}, and we denote
it as $S_i$.

\begin{restatable}{claim}{clmPolyPathWaitSacrifice}\label{clm:path:poly:waitSacrifice}
  \deferredproofmark{clm:path:poly:waitSacrifice}%
  Let~$\plan$ be a solution plan, $a\in\agents$ be an asset such that
  $\plan(a,j) = v_\ell$, $\ell\in[\numVertices]$, $v_{\ell+1}\in\traps$, let the
  trap $r_{i+1}$ on $v_{\ell+1}$ be active in round $j+1$. Then, there is a
  solution plan $\plan'$ such that $a$ waits on $v_{\ell}$ and moves to
  $v_{\ell+1}$ only after all $v_{\ell-1},\ldots,v_{\ell-\ell'}$, where $\ell' =
  \min\{|S_i|,\reloadFn(r_{i+1}),|\{b\in\agents\setminus\{a\}\mid \plan(b,j) =
  v_{\ell''} \land \ell'' \leq \ell\}|\}$, are occupied by some assets.
\end{restatable}

    In other words, \Cref{clm:path:poly:waitSacrifice} says that no trap
    is activated if there are not enough assets to follow the destroyed asset.
    This rule forces us to pass traps efficiently.

    Our algorithm exploits the structure of solutions described
    by~\Cref{clm:path:poly:noReturns,clm:path:poly:move,clm:path:poly:waitSacrifice}.
    We arbitrarily fix an order of the assets, and let them leave $\sV$ in
    consecutive rounds. After the first asset passes the first trap, we
    apply the following strategy for all assets in order. Each asset moves to
    the next vertex~$v$ if (i) $v$ is unoccupied or (ii) $v$ is an active trap and
    there are enough assets to follow (according
    to~\Cref{clm:path:poly:waitSacrifice}). In all other cases, the asset in
    question stays put. Optimality and correctness of this approach follows from
    the previous claims.
\end{proof}

The run-wait-sacrifice strategy proves to be a crucial ingredient for extending
the family of polynomial-time solvable instances of~\basicProb{} with the case
of a collection of disjoint paths connecting the initial and target locations.
Intuitively, this scenario models a collection of independent routes to reach a
goal. After figuring out how many assets survive taking each of the available
routes, the decision mostly depends on how many assets to deploy to each of the
available methods. The latter task resembles making a change given a
fixed set of coin denominations, a canonical example of dynamic programming
(DP). Hence, we devise a DP-based algorithm similar to that solving the coin
change problem, using run-wait-sacrifice as a subroutine.

\begin{restatable}{proposition}{propdisjointpaths}\label{prop:disjoint-paths}
  \deferredproofmark{prop:disjoint-paths}%
  If~$G\setminus\{\sV,\tV\}$ is a disjoint union of paths, \basicProb{} can be
  solved in polynomial time.
\end{restatable}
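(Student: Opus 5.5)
The plan is to decompose the instance into independent routes and then solve a knapsack-like allocation by dynamic programming. Since $G\setminus\{\sV,\tV\}$ is a disjoint union of paths $P_1,\ldots,P_p$, every asset that leaves $\sV$ enters exactly one $P_q$. It cannot leave $P_q$ without passing through $\sV$ or $\tV$, and property~1 and property~\ref{prop:planNeverLeavesTargetOrDestroyed} of a valid plan forbid re-entering $\sV$ and leaving $\tV$. Components $P_q$ not adjacent to both $\sV$ and $\tV$ are useless, and we discard them. For each remaining $P_q$, we pick an endpoint adjacent to $\sV$ and the other endpoint adjacent to $\tV$. (If an internal vertex is adjacent to $\sV$ or $\tV$, we shortcut; handling this is a routine preprocessing step.) This gives an $\sV$-$\tV$ path $Q_q$ whose internal vertices are disjoint from those of every other $Q_{q'}$. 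Traps live only on internal vertices, and occupancy conflicts occur only on internal vertices. Hence the dynamics on different $Q_q$ are completely independent: the survivors of a valid plan equal $\sum_q \mathrm{surv}_q$, where $\mathrm{surv}_q$ depends only on the movement of the $m_q$ assets that entered $Q_q$.

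The first step is to show that the restriction of a valid plan to the assets entering $Q_q$ is, up to a time shift and the insertion of idle rounds (to satisfy property~6), a valid plan for the path instance on $Q_q$ with $m_q$ assets. Conversely, valid plans for the individual path instances can be combined by running them in parallel and padding finished ones with idling at $\sV$ or $\tV$. Property~6 is harmless here: we may either stagger the routes one after another or let them run concurrently, since at least one asset moves whenever any route is still active. Consequently,
\[
\mathrm{OPT} = \max\Big\{\textstyle\sum_{q} f_q(m_q) : m_q\ge 0,\ \sum_q m_q \le \numAgents\Big\},
\]
where $f_q(x)$ is the optimum number of survivors on the path instance $Q_q$ with $x$ assets. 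By \Cref{thm:path:poly}, with its run-wait-sacrifice strategy, each $f_q(x)$ for $x\in\{0,\ldots,\numAgents\}$ is computable in polynomial time. This yields $O(p\cdot\numAgents)$ polynomial calls. Note that we use the optimization version, which follows by simulating run-wait-sacrifice with $x$ assets, or by binary search over the goal.

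Finally, we run the coin-change-style DP. Let $D[q][x]$ be the maximum number of survivors using routes $Q_1,\ldots,Q_q$ and exactly $x$ assets (assets not sent anywhere may simply be sacrificed; alternatively, we allow $\le x$). We initialize $D[0][0]=0$ and compute
\[
D[q][x]=\max_{0\le y\le x}\big(D[q-1][x-y]+f_q(y)\big).
\]
This is $O(p\,\numAgents^2)$ time. We answer \Yes iff $\max_{x\le\numAgents} D[p][x]\ge\goal$. The input lists the assets explicitly, so $\numAgents$ is polynomial in the input size.

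The main obstacle is the decomposition step: rigorously arguing independence across routes. The subtle points are property~6 (at least one asset moves per round) and the shared-vertex exemption at $\sV$ and $\tV$, together with the convention that the path theorem's instance has $\deg(\sV)=\deg(\tV)=1$. I expect these to be handled by the padding and time-shifting argument above, checking each validity property in turn. Monotonicity or concavity of $f_q$ is not needed, since the DP enumerates all splits.
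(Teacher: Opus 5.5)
Your proposal matches the paper's proof: assets can never re-enter $s$ or leave $t$, so the disjoint routes evolve independently; each route's optimum $f_q(x)$ comes from the run-wait-sacrifice path algorithm (\Cref{thm:path:poly}); and a coin-change DP over the split $\sum_q m_q \le m$ combines them. Your remarks on property~6, on discarding useless components and on monotonicity are extra care that the paper omits.

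The one claim you should drop is that internal attachments of $s$ or $t$ are handled by a ``routine'' shortcut. Suppose $s$ and $t$ each attach to two interleaved vertices of the same path. Then the component contains two routes that share a trap adjacent to $t$. A single sacrifice on that trap can then serve assets arriving from both sides, so the component need not be equivalent to independent paths. You should instead state the assumption that the paper makes tacitly when it asserts that $G[V(P_q)\cup\{s,t\}]$ is a simple path: each $P_q$ is joined to $s$ only at one endpoint and to $t$ only at the other.
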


Aiming at further exploiting the run-wait-sacrifice technique, we show that it
performs well for topology structures reaching far beyond those having an
``obviously'' linear structure (or a collection thereof). For intuition,
consider some single $\sV$-$\tV$ path. Run-wait-sacrifice achieves optimality by
waiting with passing a forthcoming trap until it ``stores'' enough assets (or as
many of them as possible) on non-trap locations between the previous and the
forthcoming trap. However, because assets are indistinguishable, the structure
of the non-trap locations used for storage is irrelevant (given that it does
contain any edge connecting to any other ``storages'') for optimality. In
particular, run-wait-sacrifice optimality retains after attaching an arbitrary
connected component of non-trap vertices to some non-trap vertex of the path.

Before formalizing the intuition above, let us introduce the notion of a
\emph{condensed topology}. Intuitively, we obtain a condensed topology from some
topology by contracting every ``connected component'' of solely non-trap (and
non-terminal) vertices of the topology into a ``supernode''
representing the contracted non-trap vertices; formally:

\begin{definition}\label{def:condensation}
    Let $G = (V,E)$ be a topology, $\traps \subseteq V$ be a collection of traps,
    $\mathcal{C} = \{C_1, C_2, \ldots, C_k\}$ be a collection of connected 
    components of graph~$G[V \setminus \traps \setminus \{\sV, \tV\}]$, where
    each~$C_i$, $i\in [k]$, has vertices~$V^\mathrm{c}_i$.
    Then, a \emph{condensed topology} is an undirected graph~$G' = (V', E')$,
    where~$V' = \{v'_i \colon C_i \in \mathcal{C}\} \cup \traps \cup \{\sV, \tV\}$
    and $E'$ contains an edge~$e = \{u, w\}$, $\{u, w\} \subset V'$, if $e \in E$,
    and contains an edge~$\{v_i, w\}$, $i \in [k]$, $w \in \traps \cup \{\sV, \tV\}$,
    if there is a vertex~$u' \in C_i$ such that ${\{u', w\} \in E}$.
\end{definition}

Now, a careful analysis of the run-wait-sacrifice strategy allows us to
extend our catalog of polynomial-time solvable instances of~\basicProb{} further to
topologies whose condensations have a linear structure.

\begin{restatable}{proposition}{propdisjoinstructure}\label{prop:disjoint-structure}
  \deferredproofmark{prop:disjoint-structure}%
  Let $\mathcal{I} = (G,\sV,\tV,\traps,\reloadFn,\agents,\goal)$ be an instance
  of \basicProb{} such that each $r \in \traps$ has degree two in~$G$ and all
  $\sV$-$\tV$~paths in the respective condensed topology are mutually disjoint.
  Then $\mathcal{I}$ is solvable in polynomial time.
\end{restatable}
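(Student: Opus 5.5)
We reduce \Cref{prop:disjoint-structure} to the setting of \Cref{prop:disjoint-paths}, treating each supernode of the condensed topology as a storage buffer of bounded capacity. Let $G'$ be the condensed topology. Since every trap has degree two in~$G$ and the $\sV$-$\tV$~paths of~$G'$ are mutually disjoint, $G'$ minus $\{\sV,\tV\}$ decomposes into $\sV$-$\tV$ routes $P_1,\dots,P_p$ plus parts that lie on no $\sV$-$\tV$ path. Each route alternates between supernodes $v'_i$ (components $C_i$) and traps. Vertices off every $\sV$-$\tV$ path hang off a route only at supernodes, because traps have degree two; we absorb them into the corresponding component. Any vertex of $G$ outside these routes can only be a dead end for assets, so we may ignore it. We preprocess all of this by BFS in polynomial time.

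The core step is to show that a single route $P$ behaves like a path in which the segment between consecutive traps $r_i$ and $r_{i+1}$ has ``length'' equal to $|C|$, the size of the component between them. Concretely, we prove an analogue of \Cref{clm:path:poly:waitSacrifice}: there is an optimal plan that triggers $r_{i+1}$ only once $\min\{|C|,\reloadFn(r_{i+1}),\#\text{available assets}\}$ assets are stored in $C$ and positioned so that they can cross $r_{i+1}$ consecutively during its reload window. Two facts make this work.
\begin{itemize}
\item Because assets are indistinguishable and $C$ is trap-free and connected, \Cref{cor:restricted-snapshots-reachability} lets us rearrange the assets inside $C$ into any configuration with the same count. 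In particular, we can line them up along a path in $C$ ending at the neighbour of $r_{i+1}$.
\item Such a lineup lets at most $\min\{|C|,\reloadFn(r_{i+1})\}$ followers pass after the sacrifice. This cap is the storage bound, exactly as for a path segment of that length.
\end{itemize}
Entry into $C$ from $r_i$ is one asset per round through the trap's single neighbour in $C$, again matching the path case. We then replay the exchange arguments of \Cref{clm:path:poly:noReturns,clm:path:poly:move,clm:path:poly:waitSacrifice}, with ``moving forward'' replaced by ``moving from the component before a trap to the component after it''. This shows that run-wait-sacrifice on the condensed route is optimal and computes, for each number $x$ of assets sent into $P_j$, the number $f_j(x)$ of survivors.

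Finally, we combine routes with the coin-change style dynamic program of \Cref{prop:disjoint-paths}. We maximise $\sum_j f_j(x_j)$ subject to $\sum_j x_j\le\numAgents$, using a table indexed by routes and numbers of assets. This runs in $\Oh{p\,\numAgents^2}$ evaluations of the $f_j$. Routes do not interact because they share only $\sV$ and~$\tV$, where co-location is allowed. Dispatch timing from $\sV$ is also independent: assets at $\sV$ can leave into different routes in parallel, and leaving one route's entrance later never hurts, by the monotonicity implicit in run-wait-sacrifice.

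\textbf{Main obstacle.} The hardest step is the storage claim. Inside a general component $C$, assets cannot simply be ``pushed forward''. We must show that a non-path shape never allows more than $\min\{|C|,\reloadFn\}$ assets to pass a trap per trigger, and also never lets fewer pass than a lineup would. For the upper bound, I would argue that within $\reloadFn(r)$ rounds at most $\reloadFn(r)$ assets can step onto $r$, one per round, and all of them must sit in $C$ at the trigger time. For achievability, I would rely on the reconfiguration bound of \Cref{lem:shapshots-reachability}, noting that the rearrangement time is harmless because no trap is triggered while waiting.
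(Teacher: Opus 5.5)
Your plan is the paper's own: treat each supernode $C$ as a path segment on $|C|$ vertices, run run-wait-sacrifice on each route, and combine the routes with the coin-change dynamic program for disjoint paths. The step that fails is the equivalence between $C$ and a path on $|C|$ vertices. In particular, your upper-bound argument, that the assets passing $r_{i+1}$ after one trigger ``must sit in $C$ at the trigger time'', is false even on a path. While $r_{i+1}$ is inactive, further assets can enter $C$ through $r_i$ and still reach $r_{i+1}$ before it reactivates, so $\min\{|C|,c(r_{i+1})\}$ is not an upper bound. Moreover, how many of these late entrants make the window depends on the distance $d_C(x,y)$ from the neighbour $x$ of $r_i$ to the neighbour $y$ of $r_{i+1}$. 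In $C$ this distance can be far smaller than the $|C|-1$ steps a path imposes, so capacity alone does not determine how a route behaves.

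Here is a concrete counterexample. Take the path $s,r_1,x,y,r_2,t$ with two extra leaves $z_1,z_2$ attached to $y$, reload times $c(r_1)=1$ and $c(r_2)=10$, and $18$ assets. Both traps have degree two, and the condensation is the single route $s,r_1,C,r_2,t$ with $|C|=4$.
\begin{itemize}
\item Because $c(r_1)=1$, every asset entering $C$ costs its own sacrifice, and entries are at least two rounds apart. So at most $9$ assets enter $C$. Eight survivors therefore require a single trigger of $r_2$ at some round $\tau$, followed by eight entries onto $r_2$ in rounds $\tau+1,\dots,\tau+10$.
\item In $G$ this is achievable. Fill $C$ by round $\tau-1$, with the sacrificer on $y$ and the next entrant waiting on $r_1$. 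Then keep refilling $y$ from $x$ and from the leaves while new entrants reach $x$ in rounds $\tau,\tau+2,\dots,\tau+8$. Passes occur in rounds $\tau+1,\dots,\tau+6,\tau+8,\tau+10$, so $8$ assets survive.
\item In the path-ified instance $p_1p_2p_3p_4$, at most four assets can have entered by round $\tau-1$. Hence the ninth entrant reaches $p_1$ no earlier than $\tau+8$ and reaches $r_2$ no earlier than $\tau+12$, outside the window. At most $7$ assets survive.
\end{itemize}
So the transformation does not preserve the optimum. The missing ingredient is a treatment of inflow during reload windows, at least via $d_C(x,y)$, not just storage capacity. The paper's proof asserts the same equivalence (the path claims apply ``almost without changes'') without addressing this, so following it does not close the gap.

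Separately, absorbing pockets that hang off a supernode into $C$ is unsound when a pocket contains a degree-two trap. Such a trap is not free storage, since occupying it costs a sacrifice and lasts at most its reload time. It is not ignorable either: give the side trap in the paper's $8$-asset example a dead-end neighbour. The hypotheses still hold, and that trap is still what makes the single survivor possible.
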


As a natural next step beyond the topologies of a linear structure (or a
collection thereof), we explore a broader class of trees. We thus start with stars,
structurally very simple trees. By our initial assumptions on the topology~$G$,
the only non-trivial case to consider is that the start and target vertex are
separated with one vertex being a trap. For this case, however,
run-wait-sacrifice yields an optimal solution.\footnote{More precisely,
``run-sacrifice'' works better in this case, as the trap is a neighbor of the
initial location, so no asset needs to wait.} In passing, we mention that
precisely the considered case is the reason why we cannot
apply~\Cref{prop:disjoint-structure}, which disallows non-trap vertices
connected solely to trap vertices.

\begin{restatable}{proposition}{propstarcase}\label{prop:star-case}
  \deferredproofmark{prop:star-case}%
  If~$G$ is a star, then \basicProb{} can be solved in~$\Oh{1}$ time.
\end{restatable}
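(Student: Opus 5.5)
The plan is to reduce the star case to a single-trap bottleneck and then give a closed-form count of survivors.

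\textbf{Reduction to one trap.} By the standing assumptions, $G$ contains an $\sV$-$\tV$ path but $G[V\setminus\traps]$ does not. If $\sV$ or $\tV$ were the center of the star, then $\sV$ and $\tV$ would be adjacent, giving a trap-free path, which is excluded. Hence both $\sV$ and $\tV$ are leaves, and the center $r$ must be a trap. Every surviving asset therefore follows the route $\sV \to r \to \tV$. By property~4 of a valid plan, at most one asset occupies $r$ in any round.

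\textbf{Counting argument.} I read the reload semantics as intended in the model description: a trap triggered at round $j$ (an asset enters the active trap and is eliminated) is inactive in rounds $j+1,\dots,j+\reloadFn(r)$. An asset standing on $r$ at round $j'$ survives only if some trigger happened at a round $j$ with $j < j' \le j+\reloadFn(r)$. Two things limit how many survivors one trigger can serve:
\begin{itemize}
\item the rounds in which surviving assets occupy $r$ are pairwise distinct, by the occupancy rule;
\item a trigger round is itself occupied by the asset being eliminated.
\end{itemize}
So each trigger covers at most $\reloadFn(r)$ survivors. Writing $d$ for the number of eliminated assets, the number of survivors is at most $\reloadFn(r)\cdot d$ and at most $\numAgents-d$. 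Optimizing over $d$ gives at most $\numAgents-\lceil \numAgents/(\reloadFn(r)+1)\rceil$ survivors.

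This step is the main subtlety. Condition~5 of a valid plan, read literally, counts \emph{any} asset that was on $r$ within the last $\reloadFn(r)$ rounds, not only the eliminated one. Under that reading, survivors could chain and keep the trap inactive indefinitely. I will either argue that this is not the intended semantics (the text says the trap is inactive for $\reloadFn(r)$ rounds after it is \emph{triggered}), or handle that reading separately: there, a single sacrifice followed by assets entering $r$ in consecutive rounds lets all remaining $\numAgents-1$ assets survive. Either way the answer is a closed formula.

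\textbf{Matching plan.} A run-sacrifice plan achieves the bound. Split the assets into blocks of $\reloadFn(r)+1$, with a possibly smaller last block. In each block, the first asset steps onto $r$ and is eliminated, and each of the following $\reloadFn(r)$ rounds one further asset steps from $\sV$ onto the inactive $r$ and then moves on to $\tV$. The next block starts immediately after, when $r$ is active again. A final block of size $b\ge 1$ yields $b-1$ survivors. Validity is straightforward to check:
\begin{itemize}
\item at most one asset is on $r$ per round;
\item some asset moves in every round;
\item no asset ever returns to $\sV$ or leaves $\tV$.
\end{itemize}

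\textbf{Algorithm.} The algorithm checks the trivial cases (existence of an $\sV$-$\tV$ path and of a trap-free path) and then outputs \Yes if and only if $\goal \le \numAgents-\lceil \numAgents/(\reloadFn(r)+1)\rceil$. This is a constant number of arithmetic operations, so it runs in $\Oh{1}$ time under unit-cost arithmetic on the input numbers.
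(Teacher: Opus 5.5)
Your proposal is correct and follows essentially the paper's approach. You use the same run-sacrifice plan (your blocks of $\reloadFn(r)+1$ are exactly the paper's ``one after another with no delay''), derive the same bound $\numAgents-\lceil \numAgents/(\reloadFn(r)+1)\rceil$, and compare it with $\goal$ in $\Oh{1}$. Your optimality step maps each survivor injectively into the union of the $d$ reload windows to get at most $\min\{\reloadFn(r)\,d,\ \numAgents-d\}$ survivors; this is a cleaner and fully general form of the paper's round-counting argument, and your remark that condition~5, read literally, lets survivors keep the trap inactive is accurate (the paper, like you, relies on the intended semantics of Example~1).
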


The crux of the positive results above lies mostly in exploiting the topology
structure that limits the number of sensible ways to get from the
start location to the target one.
In particular, we never took a step at a trap vertex that did not bring us
closer to the target.
As we demonstrate in the following example, this may not
be the case even for relatively small topologies.
Sometimes,
achieving optimality requires taking
a strategic non-obvious
detour from an~$\sV$-$\tV$ path, on which 
we sacrifice assets to use a certain trap vertex as ``storage.''

\begin{example}
  Consider the following instance with~$8$~assets.
  \begin{center}
    \begin{tikzpicture}[every node/.style={minimum width=5mm},x=0.75cm,
      y=0.75cm,scale=0.8, transform shape]
      \node[draw,circle,thick,label=90:{\small\textcolor{cbRed}{\faIcon{robot}}$\times
        8$}] (s) at (0,0) {$\sV$};
      \node[draw,circle,inner sep=1pt,label={[reload,label distance=1pt]-90:{\small$1$}}] (v2) at (1,0)
        {\small\textcolor{cbOrange}{\faIcon{bomb}}};
      \node[draw,circle] (v3) at (2,0) {};
      \node[draw,circle,inner sep=1pt,label={[reload,label distance=1pt]-90:{\small$2$}}] (v4) at (3,0)
        {\small\textcolor{cbOrange}{\faIcon{bomb}}};
      \node[draw,circle,inner sep=1pt,label={[reload,label distance=1pt]-90:{\small$2$}}] (v5) at (4,0)
        {\small\textcolor{cbOrange}{\faIcon{bomb}}};
      \node[draw,circle,inner sep=1pt,label={[reload,label distance=1pt]135:{\small$2$}}] (side) at (2,1)
        {\small\textcolor{cbOrange}{\faIcon{bomb}}};
      \node[draw,circle,thick] (t) at (5,0) {$\tV$}; 

      \node[above of=v5, node distance=1cm] {$\text{\faIcon{clock}}=0$};

      \draw[thick] (s) -- (v2) -- (v3) -- (v4) -- (v5) -- (t);
      \draw[thick] (v3) -- (side);
    \end{tikzpicture}\hspace{.45cm}\begin{tikzpicture}
        \draw[dotted, thin] (0,0) -- (0,-1.5);
    \end{tikzpicture}\hspace{.3cm}%
    \begin{tikzpicture}[every node/.style={minimum width=5mm},x=0.75cm,
      y=0.75cm,scale=0.8, transform shape]
      \node[draw,circle,thick,label=90:{\small\textcolor{cbRed}{\faIcon{robot}}$\times
        1$}] (s) at (0,0) {$\sV$};
      \node[draw,circle,inner sep=1pt,
        label={[reload,label distance=1pt]-90:{\small$1$}},
        label=90:{\small\textcolor{cbRed}{\faIcon{robot}}}] (v2) at (1,0)
        {\small\textcolor{cbOrange}{\faIcon{bomb}}};
      \node[draw,circle, inner sep=.5pt]
        (v3) at (2,0) {\small\textcolor{cbRed}{\faIcon{robot}}};
      \node[draw,circle,inner sep=1pt,label={[reload,label distance=1pt]-90:{\small$2$}}] (v4) at (3,0)
        {\small\textcolor{cbOrange}{\faIcon{bomb}}};
      \node[draw,circle,inner sep=1pt,label={[reload,label distance=1pt]-90:{\small$2$}}] (v5) at (4,0)
        {\small\textcolor{cbOrange}{\faIcon{bomb}}};
      \node[draw,circle,inner sep=1pt,
        label={[reload,label distance=1pt]135:{\small$2$}},
        label=90:{\small\textcolor{cbRed}{\faIcon{robot}}}]
        (side) at (2,1)
        {\small\textcolor{gray!40}{\faIcon{bomb}}};
      \node[draw,circle,thick] (t) at (5,0) {$\tV$}; 

      \node[above of=v5, node distance=1cm] {$\text{\faIcon{clock}}=7$};

      \draw[thick] (s) -- (v2) -- (v3) -- (v4) -- (v5) -- (t);
      \draw[thick] (v3) -- (side);
    \end{tikzpicture}
  \end{center}
  Ignoring the upper trap yields the case of a path, for which we know
  that run-wait-sacrifice is optimal. Simulating run-wait-sacrifice shows that
  no asset survives, as to pass the two traps before~$\tV$ we would
  need three consecutive assets; this is impossible due to the trap with reload
  time one. Yet, there is a plan yielding~$1$~successful asset achievable by
  sending all assets one by one and using the upper trap. First, one asset
  activates the upper trap at time~$5$. This unlocks the situation depicted in
  the right picture at time~$7$ that can evolve in the next round to having the
  required three consecutive assets (at the untrapped location and its sideways
  neighbors).
\end{example}

\section{Limits of Tractability}

\begin{figure*}
    \centering%
    \begin{tikzpicture}[
      every node/.style={draw,circle,minimum width=5mm,inner sep=1pt}
      ,label distance=-2mm, scale=.95]
        \tikzstyle{port} = [fill=black,minimum width=1mm]

        \node[ultra thick,minimum width=7mm] (s) at (-0.5,0) {$s$};

        \node[label={[reload, label distance=2pt]90:{\small$1$}},thick,minimum width=5mm] (B10) at (0.75,0) {\small\textcolor{cbOrange}{\faIcon{bomb}}};
        \node[,minimum width=4mm] (B11) at (1.5,0) {};
        \node[label={[reload,label distance=2pt]90:{\small$8N+6$}},thick,minimum width=5mm] (B12) at (2.25,0) {\textcolor{cbOrange}{\small\faIcon{bomb}}};
        \draw (B10) -- (B11) -- (B12);

        \foreach[count=\i] \x in {0.5,0.8,1.1,1.9,2.2,2.5} {
            \node[minimum width=2.5mm] (u\i) at (\x,-1) {};
            \draw (B11) -- (u\i);
        }
        \node[draw=none] at (1.51,-1) {\tiny$\cdots$};

        \draw[thick,decoration={brace,mirror,raise=0.25cm},decorate] (u1.west) -- (u6.east);
        \node[draw=none] at (1.5,-1.55) {\small$8N+5$};

        \node[label={[reload, label distance=2pt]90:{\small$1$}},thick,minimum width=5mm] (B20) at (4.25,0) {\small\textcolor{cbOrange}{\faIcon{bomb}}};
        \node[,minimum width=4mm] (B21) at (5,0) {};
        \node[label={[reload,label distance=2pt]90:{\small$4N$}},thick,minimum width=5mm] (B22) at (5.75,0) {\textcolor{cbOrange}{\small\faIcon{bomb}}};
        \draw (B20) -- (B21) -- (B22);

        \foreach[count=\i] \x in {4.3,4.6,5.3,5.6} {
            \node[minimum width=2.5mm] (u\i) at (\x,-1) {};
            \draw (B21) -- (u\i);
        }
        \node[draw=none] at (4.99,-1) {\tiny$\cdots$};

        \draw[thick,decoration={brace,mirror,raise=0.25cm},decorate] (u1.west) -- (u4.east);
        \node[draw=none] at (5,-1.55) {\small$4N-1$};

        \node[label={[reload, label distance=1pt]270:{\small$1$}},minimum width=5mm,thick] (D0) at
          (4.5,-2.5) {\textcolor{cbOrange}{\faIcon{bomb}}};
        \node[,minimum width=5mm] (D1) at (5.25,-2.5) {};
        \node[label={[reload, label distance=1pt]270:{\small$1$}},thick,minimum width=5mm] (D2) at (6,-2.5) {\textcolor{cbOrange}{\faIcon{bomb}}};
        \node[minimum width=2.5mm] (D3) at (6.75,-2.5) {};
        \node[minimum width=2.5mm,draw=none] (D4) at (7.5,-2.5) {\small$\cdots$};
        \node[minimum width=2.5mm] (D5) at (8.25,-2.5) {};
        \node[thick,minimum width=5mm] (D6) at (9,-2.5) {};
        \draw (D0) -- (D1) -- (D2) -- (D3) -- (D4) -- (D5) -- (D6);

        \draw[thick,decoration={brace,mirror,raise=0.225cm},decorate] (D3.west) -- (D5.east);
        \node[draw=none] at (7.55,-3.03) {\small$N^5$};
        
        \coordinate (Joint) at (10.5,0);
        \node[label={[reload,label distance=2pt]90:{\small$3N$}}] (v1) at (11.5,0) {\textcolor{cbOrange}{\faIcon{bomb}}};

        \node[ultra thick,minimum width=7mm] (t) at (13,0) {$t$};

        \node[draw,right of=B22,node distance=2cm] (S2) {\small\textcolor{cbOrange}{\faIcon{bomb}}};
        \node[draw,above of=S2,label={[reload,label distance=1pt]90:{\small$3$}}] (S1) {\small\textcolor{cbOrange}{\faIcon{bomb}}};
        
        \node[draw=none,below of=S2,yshift=5pt] (Sdummy) {$\vdots$};
        \node[draw,below of=Sdummy] (S3) {\textcolor{cbOrange}{\faIcon{bomb}}};

        \draw (B22.east) to[out=0,in=180,distance=1.5cm] (S1);
        \draw (B22.east) to[out=0,in=180,distance=1.5cm] (S2);
        \draw (B22.east) to[out=0,in=180,distance=1.5cm] (S3);
        
        \node[draw,circle,right of=S1,node distance=1.5cm,minimum width=4mm] (u1) {};
        \node[draw,circle,below of=u1,minimum width=4mm] (u2) {};
        \node[draw=none,below of=u2,yshift=5pt] (udummy) {$\vdots$};
        \node[draw,circle,below of=udummy,minimum width=4mm] (u3) {};
        \draw (Joint) edge (v1.west);
        \draw (S1) edge (u1) edge (u2) edge (udummy);
        \draw (S2) edge (u1) edge (u2) edge (u3);
        \draw (S3) edge (u3) edge (udummy) edge (u1);

        \draw (u1.east) to[out=0,in=180,distance=1cm] (Joint);
        \draw (u2.east) to[out=0,in=180,distance=1cm] (Joint);
        \draw (u3.east) to[out=0,in=180,distance=1cm] (Joint);

        \draw (s) -- (B10);
        \draw (B12.east) edge (B20.west);
        \draw (B12.east) to[out=0,in=180,distance=1.5cm] (D0.west);
        \draw (D6.east) to[out=0,in=180,distance=1.8cm] (v1.west);
        \draw (v1) -- (t);
    \end{tikzpicture}
    \caption{Illustration of the construction
    from~\Cref{thm:traps-limited-hardness}. The dark labels by the traps
    (\textcolor{cbOrange}{\faIcon{bomb}}) depict the respective reload times.}
    \label{fig:NPc:construction}
\end{figure*}
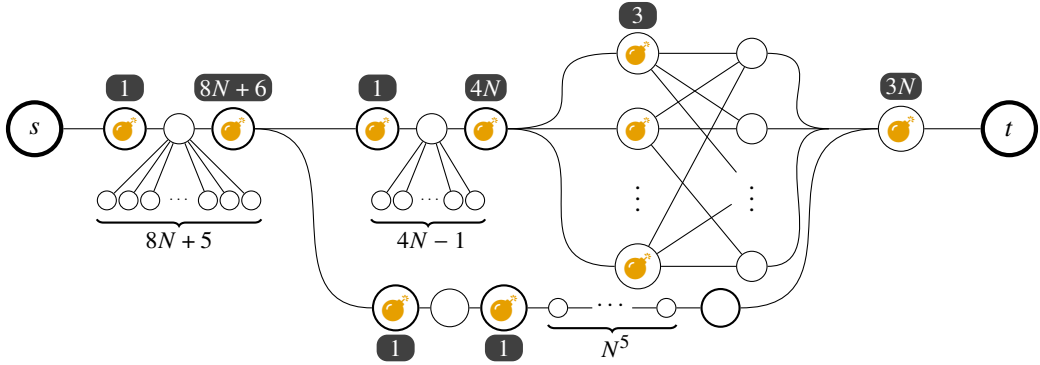

We now turn to exploring the limits of efficient computability by showing the
computational intractability of \basicProb{} by proving 
\NP-hardness.
In general, our goal is to show the hardness for instances that
are as constrained as possible, thereby precisely identifying the borders of
tractability.%

\begin{restatable}{theorem}{thmtrapslimitedharndess}\label{thm:traps-limited-hardness}
  \deferredproofmark{thm:traps-limited-hardness}%
  \basicProb{} is \NPc, even if each asset crosses at most six traps.
\end{restatable}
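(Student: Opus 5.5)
Membership in \NP{} is \Cref{cor:probInNP}, so the work is \NPhness. I would reduce from \threePart{}: given $3N$ integers $x_1,\dots,x_{3N}$ with $B/4<x_i<B/2$ and $\sum x_i = NB$, decide whether they split into $N$ triples each summing to $B$. Since \threePart{} is strongly \NPh, I may assume the $x_i$ and $B$ are bounded by a polynomial in $N$, so unary-sized gadgets are allowed. The construction follows \Cref{fig:NPc:construction}.

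\textbf{Construction.} From $\sV$ I place a chain of \emph{throttle} gadgets. Each gadget is a trap of reload~$1$, then a non-trap hub with many pendant leaves serving as storage, then a trap with a large reload ($8N+6$ for the first gadget, $4N$ for the second). The reload-$1$ trap at the entrance ensures that assets enter a hub only in alternate rounds, with sacrifices interleaved. The large-reload exit trap is sacrificed once and then lets through a burst of exactly as many assets as the hub could store. This fixes precisely how many assets arrive at the selection part, and in what temporal pattern.

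After the second exit trap the graph branches:
\begin{itemize}
\item into $3N$ \emph{element} traps $S_1,\dots,S_{3N}$, each of reload~$3$, followed by $N$ \emph{bin} vertices $u_1,\dots,u_N$ (each element trap is adjacent to every bin vertex), which then merge at a joint before a final trap of reload $3N$ adjacent to~$\tV$;
\item into a long \emph{delay path} of length $N^5$ guarded by two reload-$1$ traps, which also leads to the final trap and serves as a timing reference and source of sacrificial assets.
\end{itemize}
The element integers $x_i$ are encoded in unary by path lengths between $S_i$ and the bins, so that arrival times at the joint record the numbers.

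\textbf{Intended solution.} Each triple summing to $B$ corresponds to a bin. The assets routed through its three element traps arrive at the joint synchronised, so that one asset sacrificed at the final trap (reload $3N$) lets the next $3N$ assets through. The goal $\goal$ is set to the count reached exactly when all $N$ bins are \enquote{full}. Along every route an asset crosses at most two throttle traps, one element trap, two delay-path traps and the final trap. That bounds the number of traps crossed by six.

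\textbf{Proof outline.}
\begin{enumerate}
\item Fix the construction with explicit sizes, and check that it is polynomial in the unary size.
\item Check the trap-count bound on every $\sV$-$\tV$ route.
\item Forward direction: give the schedule explicitly, using run-wait-sacrifice behaviour from \Cref{thm:path:poly} inside each gadget.
\item Backward direction: show that any plan with $\goal$ survivors must
\begin{enumerate}
\item saturate the throttles, by a counting argument: sacrifices plus survivors must not exceed $|\agents|$;
\item send exactly one asset through each element trap within the relevant window, since reload $3$ allows at most one extra pass;
\item deliver the arrivals at the joint in groups whose timing forces the sums to equal exactly $B$.
\end{enumerate}
\end{enumerate}
For step 4 I would rely on \Cref{thm:shortPlans} and the huge $N^5$ delay length to rule out \enquote{cheating} by waiting. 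A plan that stalls long enough to exploit reloads differently loses more assets than the slack allows.

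\textbf{Main obstacle.} I expect the backward direction to be the hard part, specifically the tight accounting argument. Every surplus sacrifice, or every unsynchronised arrival at the reload-$3N$ trap, must cost at least one survivor. The multiplicities ($8N+5$ leaves, $4N-1$ leaves, and so on) must be calibrated so that the budget is exactly met only by a valid 3-partition. Making this rigorous requires an invariant bounding how many assets can be stored between consecutive traps at any time, and showing that detours into storage leaves cannot create extra bursts.
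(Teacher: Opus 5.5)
There is a genuine gap: the step that is supposed to enforce the \threePart{} constraint has no mechanism behind it, and the gadget you borrow from \Cref{fig:NPc:construction} encodes a different problem. The paper reduces from \probName{Restricted Exact Cover by $3$-Sets}, where counting alone does the work. $B(4N)$ releases exactly $4N$ assets. The $3N$ set traps have reload $3$, so one sacrifice lets at most three assets through. And $3N$ assets must sit on the $3N$ unprotected element vertices until the single asset from the slowdown path (which arrives only after about $N^5$ rounds) deactivates the reload-$3N$ guard. Hence $4N = N + 3N$ forces exactly $N$ set traps to be triggered once each, each releasing three assets into its three element neighbours. Filling every element vertex then means precisely that the chosen sets form an exact cover.

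Transplanted to \threePart{}, this structure breaks in three ways:
\begin{enumerate*}[label=(\roman*)]
\item it selects $N$ of your $3N$ element traps, whereas a $3$-partition must use all $3N$ items;
\item your step 4(b), one asset through each element trap, costs a sacrifice per passer, i.e.\ $6N$ assets against the $4N$ the throttle releases;
\item as described (each $S_i$ adjacent to every bin), the only storage before the final trap is $N$ bins plus the joint, far too little to park the $3N$ assets that must stream through the reload-$3N$ trap after its single sacrifice.
\end{enumerate*}

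The timing idea cannot repair this. Encoding $x_i$ as the length of a path from $S_i$ to the bins only delays each asset individually; it also contradicts your making every $S_i$ adjacent to every bin. Nothing in the graph adds up the delays of three assets. The final trap only requires the survivors to cross within one window opened by a sacrifice, not that any group of arrivals corresponds to a sum equal to $B$. If you want to start from \threePart{}, integers must be encoded as storage capacities that have to be filled exactly, as in the paper's proof of \Cref{thm:np-hardness-trees}. That construction, however, needs long destruction gadgets and gives no constant bound on traps crossed. The natural fix is to reduce from exact cover by $3$-sets using the set--element adjacency described above.

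Your trap count is also off. Each throttle contributes two traps, and since your delay path branches after the second throttle, that route crosses $2+2+2+1 = 7$ traps. In the paper, $B(1)$ hangs off the first batch gadget, so the slowdown route crosses five traps and the set route six.
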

\begin{proof}
    We reduce from the \NPc~\citep{Gonzalez1985} \probName{Restricted Exact Cover by~$3$-Sets} problem. The input of this problem consists of a universe~$\mathcal{U}=\{u_1,\ldots,u_{3N}\}$ and a family of subsets~$\mathcal{S} = (S_1,\ldots,S_{3N})$ such that~$\forall j\in[3N]\colon S_j\in\binom{\mathcal{U}}{3}$ and each element~$u\in\mathcal{U}$ appears in exactly~$3$ subsets~$S\in\mathcal{S}$. The goal is to decide whether there is a collection~$C\subseteq \mathcal{S}$ of size~$N$ such that~$\bigcup_{S\in C} S = \mathcal{U}$.

    Let~$\mathcal{I}$ be an instance of the \probName{RX3C} problem. We
    construct an equivalent instance~$\mathcal{J}$ of~\basicProb{} using several
    gadgets. We first focus on them one by one, and then describe how they work
    together; \Cref{fig:NPc:construction} illustrates the whole construction.

    The core ingredient of our construction is a \emph{batch gadget}~$B(x)$,
    which ensures that, assuming an optimal solution in terms of the number of
    lost assets and enough assets on the input,~$x$ assets leave the gadget in
    consecutive rounds; i.e., in one batch. Formally, the batch gadget is a star
    with~$x+1$ leaves~$v_0,\ldots,v_{x}$ and the center~$c$. The leaf~$v_0$ is
    called an \emph{input port} and the leaf~$v_{x}$ is an \emph{output port}.
    The gadget contains two traps. The first is placed on the input port with reload time one. The other is placed on the output port with
    reload time equal to the parameter~$x$. The crucial idea of the gadget is
    that we need to store~$x-1$ assets on the leaves~$v_1,\ldots,v_{x-1}$ and
    only after all the leaves are occupied, the trap on the output port is
    deactivated. Observe that due to the trap at the input port, we lose at
    least one asset for every asset that ever visits the center~$c$, and we lose
    at least one asset due to the trap on the output port. Consequently, to
    pass~$x$ assets through the gadget, we lose at least~$x+2$ assets.

    Now, we formally describe the construction (see \Cref{fig:NPc:construction}
    for an illustration). We have three batch gadgets besides the initial
    vertex~$s$ and terminal vertex~$t$. The first batch gadget~$B(8N+6)$ is
    connected through its input port to the initial vertex. The second batch
    gadget~$B(4N)$ is connected to the first batch gadget and is followed
    by~$3N$ \emph{set vertices}~$s_1,\ldots,s_{3N}$, each protected by a trap
    with reload time~$3$. Next, we have~$3N$ \emph{element
    vertices}~$u_1,\ldots,u_{3N}$, which are unprotected. There is an
    edge~$\{s_j,u_i\}$ if and only if~$u_i\in S_j$ and all element vertices are
    connected with a \emph{guard vertex}~$g$. Moreover, there is a trap
    protecting the guard vertex with reload time~$3N$. The third batch
    gadget~$B(1)$ is also connected to the first gadget and is followed by a
    \emph{slowdown path} of length~$N^5$, which again ends in the guard vertex.
    The guard vertex is connected to the terminal vertex. Finally, we
    set~$\agents=\{a_1,\ldots,a_{16N+14}\}$ and~$\goal = 3N$.

The whole construction guarantees that the assets split between the set
vertices, and the slowdown path. In an optimal solution, at some point, all
element vertices are occupied by assets waiting for a couple of rounds for an
asset using the slowdown path. This asset eventually deactivates the trap on the
guard vertex and allows all assets waiting on the element vertices to pass
through~$g$ to the terminal vertex~$\tV$.
\end{proof}

Furthermore, by reducing from the $\NPh$ \threePart problem~\citep{GareyJ1975}, we rule out
polynomial-time computability %
for trees with maximum degree~$3$.

\begin{restatable}{theorem}{thmtreesnphardness}\label{thm:np-hardness-trees}
  \deferredproofmark{thm:np-hardness-trees}%
  \basicProb is \NPc, even if the topology~$G$ is a tree with maximum degree
  $3$.
\end{restatable}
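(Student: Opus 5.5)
Membership in \NP follows from \Cref{cor:probInNP}, so the work is \NPhness, via a polynomial reduction from \threePart, which is strongly \NPh. This lets us encode the numbers in unary, that is, as lengths of paths and sizes of storage structures. An instance consists of $3q$ integers $x_1,\ldots,x_{3q}$ with $\sum x_i = qB$ and $B/4 < x_i < B/2$, and asks for a partition into $q$ triples each summing to $B$.

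The tree will have maximum degree~$3$. I would replace the stars used as storage in the batch gadget of \Cref{thm:traps-limited-hardness} with degree-$3$ equivalents: caterpillars or paths, which is possible because \Cref{prop:disjoint-structure} shows the shape of a non-trap storage region is irrelevant, only its size. The intended structure is a main $\sV$-$\tV$ spine carrying $q$ ``bin'' sections in sequence. Each bin ends with a trap of reload time $B$ (plus a small constant) that can be passed only by a batch of $B$ consecutive assets following one sacrificed asset. Before the bins, a ``dispatch'' section encodes the item numbers. For each item $i$ there is a pendant path of length $\Theta(x_i)$ hanging off the spine by a degree-$3$ vertex, guarded by a trap. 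Item $i$ can only release its $x_i$ stored assets together, in a contiguous block, because of a batch-like trap whose reload is $x_i$.

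A contiguous block of size exactly $B$ must reach each bin trap. Since a block has size $x_i$ and $B/4<x_i<B/2$, only three item blocks can merge into one $B$-batch. Moreover, timing offsets created by long slowdown paths, as in \Cref{thm:traps-limited-hardness}, force blocks to arrive at bin $j$ only within a window. The goal $\goal$ would be $qB$ minus a precomputed number of unavoidable sacrifices, or $\goal=qB$ with exactly the sacrifice budget supplied as extra assets.

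The steps are as follows.
\begin{enumerate*}[label=(\roman*)]
\item Define degree-$3$ gadgets: a tree batch gadget $B'(x)$ and item gadgets, and prove a local lemma that passing $x$ assets costs at least a fixed number of sacrifices, with equality only for a consecutive batch.
\item Assemble the tree and set $\numAgents$ and $\goal$.
\item Prove the forward direction: from a partition, schedule the item blocks triple by triple into bins using run-wait-sacrifice behaviour.
\item Prove the backward direction: count sacrifices globally. Any plan achieving $\goal$ meets every local lower bound with equality, which forces every bin batch to consist of whole item blocks summing to exactly $B$.
\end{enumerate*}

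The main obstacle is the backward direction. A clever plan may split an item's assets across bins, store assets on trap vertices while a trap is inactive (as in the detour example above), or interleave blocks. To rule this out, I would first make every gadget's storage capacity exact, so there is no slack for parking extra assets. I would then make reload times tight enough that splitting a block costs an additional sacrifice, which a tight budget $\goal$ cannot afford. Finally, I would verify that any trap-as-storage trick also costs at least one extra asset.
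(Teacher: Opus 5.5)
Your proposal has a genuine gap, and it sits exactly at the step you flag as the main obstacle.

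In a tree the $\sV$-$\tV$ path is unique. So $q$ bin sections placed in sequence on the spine are traversed by every surviving asset, and they cannot serve as $q$ distinct destinations. The timing windows you borrow from \Cref{thm:traps-limited-hardness} also do not transfer. There, the slowdown path is a second route that rejoins the main one at the guard vertex. A tree cannot contain such a route; here a slowdown path could only be a dead end, which the planner is free to ignore.

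Without a mechanism that separates the bin batches in time, the sacrifice counting in your step~(iv) does not force whole item blocks. A trap with reload $B$ is passed by a continuous train of $q(B+1)$ assets with exactly $q$ sacrifices, the minimum possible, wherever the item-block boundaries fall. A plan can therefore release the item blocks back to back and meet every local lower bound with equality. It reaches your budget whether or not the \threePart instance is solvable. Making capacities exact or reload times tighter does not help, because splitting a block across two bin batches costs nothing extra.

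The missing idea, which the paper uses, is to encode the bins in \emph{time} and only the items in \emph{space}.
\begin{itemize}
\item Directly after $\sV$ sits a huge destruction gadget, the limiting infrastructure. It lets through only batches of exactly $3T+20+3n$ consecutive assets, and successive batches are separated by a delay far longer than any later reload time.
\item The rest of the spine consists solely of traps, so each batch must be parked completely in dead-end storage before the next one arrives.
\item Each batch loses $3n+1$ assets on the spine traps $v^M_i$ and $T+6$ on the traps of the final subtree, and it stores $T+7$ in that subtree.
\item The remaining $T+6$ assets must exactly fill element gadgets, each a guard of reload $s(y)+1$ followed by $s(y)+1$ cells. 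This is possible only with three elements whose sizes sum to $T$.
\item A blocking destruction gadget before $\tV$, with $\goal=1$, requires the final batch and all stored assets to leave as one exactly-sized consecutive stream. A single extra loss or an unfilled cell anywhere then leaves no survivor, and this is what rules out reusing or splitting gadgets.
\end{itemize}
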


In light of the established hardness, it is natural to consider approximations.
Let~\optBasicProb{} be a natural optimization variant of our problem, in which
we seek the maximum number of assets that survive. Importantly, the proof
of~\Cref{thm:noapx} excludes a constant-factor approximation algorithm
for~\optBasicProb{} (under standard assumptions).\footnote{\optBasicProb{} is
not only outside of~\APX{} but also \APXh{}, as we show
in~\Cref{app:apx-hardness}.}

\begin{restatable}{theorem}{thmnoapx}\label{thm:noapx}
  \deferredproofmark{thm:noapx}%
  There is no polynomial-time algorithm yielding a multiplicative
  constant-factor approximation to~\optBasicProb{}, even if topology~$G$ is a
  tree with maximum degree~$3$.
\end{restatable}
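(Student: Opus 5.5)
The plan is to build a gap-producing reduction from the tree reduction behind \Cref{thm:np-hardness-trees}, which reduces from \threePart. The key observation is that a \No-instance of \threePart should yield an instance of \basicProb{} where \emph{no} asset survives, or at most a fixed small number $c_0$ survive. A \Yes-instance should yield an instance where at least $\goal$ survive, and $\goal$ can be made arbitrarily large by padding. Any polynomial-time $\alpha$-approximation for \optBasicProb{} with constant $\alpha \geq 1$ would then separate the two cases once $\goal > \alpha c_0$. That would decide \threePart, which is \NPh in the strong sense, so this would imply $\mathsf{P}=\NP$.

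Concretely, I would take the tree instance from \Cref{thm:np-hardness-trees} with maximum degree~$3$ and append an \emph{amplifier} in front of $\tV$. Let $\tV_{\mathrm{old}}$ be the old target. The amplifier consists of a final trap $r^\ast$ with reload time $L$, followed by the new target $\tV$. Before $r^\ast$ there is a storage path of $L$ non-trap vertices, or a subdivided binary caterpillar so that the degree stays at most~$3$. Because the old construction delivers its surviving assets as a synchronized batch of size exactly its goal $\goal_{\mathrm{old}}$ only when the \threePart instance is solvable, I would make the new target reachable only through $r^\ast$. I would choose $L$ so that crossing $r^\ast$ profitably requires a batch of $L+1$ assets gathered in storage. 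To let a \Yes-instance produce many survivors, I would scale the source instance: replace the multiset of numbers by $M$ disjoint copies, or equivalently multiply the number of assets and all batch sizes by a factor $M$ polynomial in the input. This makes the old gadget output $M\cdot \goal_{\mathrm{old}}$ assets in one batch in the \Yes case.

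The soundness direction is the main work. I need to show that in the \No case, every valid plan gets at most $c_0$ assets through $r^\ast$, preferably zero. I would argue this via the run-wait-sacrifice analysis of \Cref{thm:path:poly}. Since the amplifier is a path-like segment, \Cref{clm:path:poly:waitSacrifice} shows that the number of assets crossing $r^\ast$ is at most the number of assets that were simultaneously stored before it minus the sacrifices. Each activation lets through at most $L$ followers, and only if $L$ assets are adjacent in consecutive rounds. In the \No case, the hardness argument of \Cref{thm:np-hardness-trees} only guarantees that assets arrive in fragmented groups whose sizes are bounded by the \threePart target $B$. Choosing $L > B$ means that any crossing of $r^\ast$ in the \No case loses the sacrificed asset and lets through fewer than needed, but this could still be positive. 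To obtain exactly zero, I would instead place a second trap with reload time $L$ immediately after $r^\ast$, in the style of the two consecutive traps in the earlier Example. Then passing requires $L+2$ consecutive assets, which are impossible to assemble without a full batch.

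The step I expect to be the obstacle is making the \No-case bound \emph{independent} of $M$, i.e., ruling out that many small fragments accumulate in the storage over a long time. \Cref{thm:shortPlans} bounds the plan length only polynomially, so accumulation is not excluded for free. I would first try to exploit the slowdown and timing mechanism from the tree reduction: assets leaving the old gadget out of sync must already have been forced through traps whose reload times consume them. Failing that, I would make the storage too small (fewer than $L$ non-trap vertices reachable without crossing a reload-$1$ trap, as in the batch gadget $B(x)$ of \Cref{thm:traps-limited-hardness}). Then fragments cannot be parked, and only a contiguous batch arriving at once can cross.
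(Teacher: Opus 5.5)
Your overall strategy, a gap between the images of \Yes- and \No-instances of \threePart, is the right one. But the amplifier you build does not work as described, and you leave its central step open.

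First, nothing in the tree construction of \Cref{thm:np-hardness-trees} scales the number of survivors. Its blocking infrastructure is a destruction gadget whose reload times descend to $1$, calibrated so that exactly one asset reaches $\tV$, with $\goal = 1$. Replacing the \threePart instance by $M$ copies merely changes $n$, so the claimed synchronized output of $M\cdot\goal_{\mathrm{old}}$ assets is unsupported.

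Second, your premise about the \No case (fragmented groups of size at most $B$ arriving at $\tV_{\mathrm{old}}$) is not what that reduction establishes. The $M$-independent bound on what crosses $r^\ast$ is exactly the obstacle you flag without resolving.

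Third, the two-trap gate does not force a full batch. With two consecutive traps of reload $L \geq 2$, three consecutive assets already get one through: one is sacrificed on each trap, and the third passes. The paper's example with two reload-$2$ traps before $\tV$ shows exactly this. Forcing $L+1$ consecutive assets needs the descending destruction gadget.

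Finally, turning $\tV_{\mathrm{old}}$ into an ordinary vertex changes the dynamics, since it can no longer hold several assets. The old analysis would therefore not transfer verbatim.

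The missing observation is that no amplification is needed: for a multiplicative guarantee, the gap ``$0$ versus positive'' already suffices. The reduction of \Cref{thm:np-hardness-trees} has $\goal = 1$. So it maps \No-instances of \threePart to trees of maximum degree~$3$ whose optimum is exactly $0$, and \Yes-instances to trees whose optimum is at least $1$.

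An $\alpha$-approximation returns a plan with at least $\mathrm{OPT}/\alpha > 0$ survivors whenever $\mathrm{OPT}\geq 1$, hence at least one survivor by integrality. When $\mathrm{OPT}=0$ it necessarily returns zero survivors. It therefore decides \threePart in polynomial time; the construction has size polynomial in $n$ and $T$, which is fine because \threePart is strongly \NPh.

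In your own notation this is just the case $c_0 = 0$, where $\goal > \alpha c_0$ already holds for $\goal = 1$. This is precisely the paper's proof.
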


\section{Conclusions}

We introduced a new planning model for adversarial environments in which agents
may be eliminated while traversing hostile terrain. We established that the
problem is in \NP despite the exponentially large plan space, proved \NPhness
even on trees of constant degree, and identified several tractable cases. Our
framework captures applications ranging from humanitarian aid transportation in
conflict zones to coordinated drone swarm navigation.

Our results open multiple promising avenues for future research. In this initial
study, we assumed complete knowledge of the environment. Since such full
observability is rare in real-world scenarios, a natural extension is to
incorporate
\emph{uncertainty}~\citep{NikolovaBK2006,WagnerC2017,ShoferSS2023} in trap
locations, reload times, or graph topology.

Additionally, our hardness results motivate the design and analysis of efficient
\emph{heuristics and approximations}. While constant-factor approximations may
be limited to special cases due to our inapproximability result, heuristics,
albeit without formal optimality guarantees,  could yield high-quality solutions
in many practical instances.

Finally, it is interesting to consider \emph{multiple start or target locations}
to broaden the model's applicability to scenarios such as distributed supply
networks or multi-objective delivery missions. 

\section*{Acknowledgments}
Andrzej Kaczmarczyk and Haifeng Xu were partially supported by the ONR Award
N00014-23-1-2802. Šimon Schierreich was
partially funded by the European Research Council (ERC) under the European
Union’s Horizon 2020 research and innovation programme (grant agreement No
101002854). Additionally, Andrzej Kaczmarczyk and Šimon Schierreich were
partially funded by the European Union under
the project Robotics and advanced industrial production (reg. no.
CZ.02.01.01/00/22\_008/0004590). Most of this work was done while Šimon
Schierreich was a Fulbright-Masaryk Fellow at Pennsylvania State
University.

\begin{center}
\vspace{0.25cm}
\includegraphics[width=3.5cm]{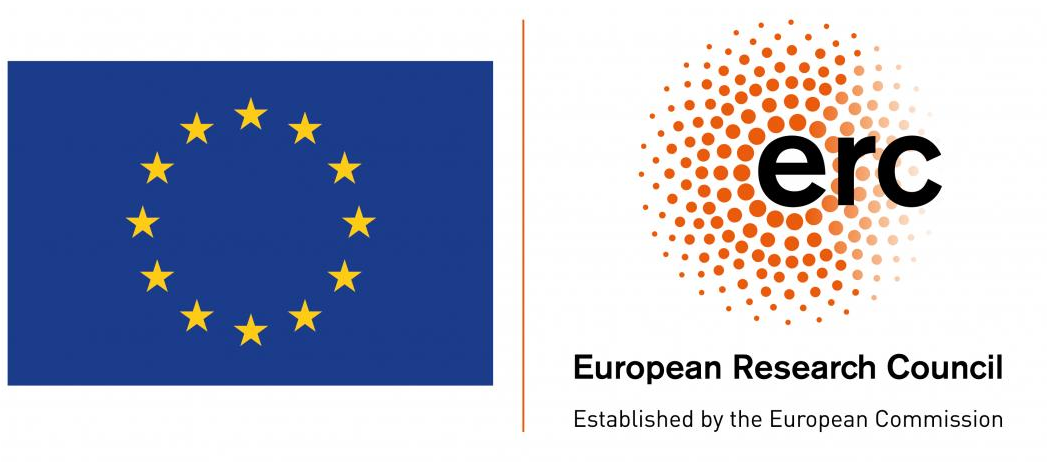}
\end{center}

\bibliographystyle{abbrvnat}
\bibliography{biblio}

\clearpage
\appendix
\section*{Appendix}

\section{Omitted Proofs}\label{app:proofs}
\restatetheorem{\clmPolyPathNoReturns*}
\begin{claimproof}
  \backtotheorem{clm:path:poly:noReturns}%
  Let $a$ be an asset such that there exists a time step $j$ such that
  $\plan(a,j) v_i$ and $\plan(a,j+1) = v_{i-1}$ for some $i\in[\numVertices]$;
  we call such an asset \emph{returning}. Notice that if $i=1$, then the plan
  $\plan$ is invalid, s by property 1, the assets cannot return to $s$. Hence,
  $i \geq 2$. Moreover, assume that $a$ is a returning assets which returns in
  the smallest timestep. If here are multiple such assets, let $a$ be the
  `leftmost', i.e., the one located n location $v_i$ closest to $s$.
    
  First, assume that there is $\ell$ such that $\plan(a,j+\ell) = \plan(a,j)$.
  That is, the asset $a$, after $\ell$ steps, returns to location $v_i$. If
  $v_i\not\in\traps$, then we set $\plan(a,j') = v_i$ for every
  $j'\in[j,\ell]$; that is, we make agent $a$ wait on its location. Note that
  since the topology is a path, there cannot be an agent that is newly blocked
  by $a$, as there is no way how to circumvent it. Moreover, since
  $v_i\not\in\traps$, then $a$ can wait on it's location without being
  destroyed.
\end{claimproof}

\restatetheorem{\clmPolyPathMove*}
\begin{claimproof}
  \backtotheorem{clm:path:poly:move}%
  Observe that the move $\plan'(a,j+1) = v_{i+1}$ does not change the timing of
  any trap, as $v_{i+1}$ is either a regular vertex or an inactive trap (in
  round $j+1$). Therefore, as the plan for each other asset remains the same, this
  change does not affect whether they succeed or not. Moreover, asset~$a$ moves
  to a neighboring empty vertex closer to $\tV$ and continues its movement in
  round $j+\ell$, where $j+\ell$ is the round in which $a$ leaves $v_{i+1}$ in
  plan $\plan$. Hence, if $a$ is successful in $\plan$, it also succeeds in
  $\plan'$.
\end{claimproof}

\restatetheorem{\clmPolyPathWaitSacrifice*}
\begin{claimproof}
  \backtotheorem{clm:path:poly:waitSacrifice}%
  We prove the claim by induction. Let $S$ be the last segment and $r_l$ and
  $r_p$ be the left and right trap of this segment.
  By~\Cref{clm:path:poly:move}, whenever assets can move to an empty vertex
  closer to $\tV$, then they do so. In particular, all assets positioned on a
  vertex after $r_p$ in round~$j$ will move closer to~$\tV$ in round $j+1$, and
  ultimately they will reach $\tV$ as soon as possible. This implies that there
  is no asset waiting on the vertex after $r_p$ (except in the case where this
  vertex is $\tV$, but in this case, we can simply subdivide the edge
  $\{r_p,\tV\}$ without changing the solution).

  Moreover, let $\plan$ be a plan in which $a$ deactivates trap~$r_p$ sooner
  than when the whole segment is full (slightly abusing the terminology, by
  full we mean that there are $\ell'=\min\{|S|,c(r_p),|\{b\in
  A\setminus{a}\mid\plan(b,j)=v_{\ell''}\land \ell''\leq\ell\}|\}$ assets
  following~$a$). We let $a$ wait once the segment saturates (this necessarily
  happens due to~\Cref{clm:path:poly:move}). Only then asset~$a$
  deactivates the trap and at least $\ell'$ assets succeed. Observe that there
  is no agent waiting on the first vertex of $S$ (the one in $N(r_\ell)\cap
  S$), so we can use the same procedure for all segments preceding $S$,
  finishing the proof.
\end{claimproof}

\restatetheorem{\propdisjointpaths*}
\begin{proof}
  \backtotheorem{prop:disjoint-paths}%
  \newcommand{\DP}{\operatorname{DP}}
    Our algorithm is a dynamic programming algorithm over the disjoint paths we obtain by deleting~$\sV$ and~$\tV$ from the topology~$G$, which uses the algorithm for a path as a sub-procedure. We formalize our approach using~\Cref{alg:disjointPaths}. Intuitively, the core of the algorithm is a dynamic programming table $\DP[i,B]$, which stores the maximum number of successful assets on the sub-graph $G[\bigcup_{j=1}^i V(P_j) \cup \{\sV,\tV\}]$ if we have exactly $B$ assets available. Once the table is correctly computed, we simply check whether $\DP[\ell,|\agents|] \geq \goal$ and return an appropriate response.
    
    \begin{algorithm}[tb]
        \caption{A dynamic programming algorithm finding an optimal solution for \basicProb if $G\setminus\{\sV,\tV\}$ is a disjoint union of paths.}
        \label{alg:disjointPaths}
        \begin{flushleft}
            \textbf{Input}: A set of assets~$\agents$, a topology~$G$ with an initial and target location~$\sV$ and~$\tV$ such that~$G\setminus\{\sV,\tV\}$ is a disjoint union of paths~$P_1,\ldots,P_\ell$, a set of traps~$\traps$, a reload time function~$\reloadFn$.\\
            \textbf{Output}: The maximum number of successful assets.
        \end{flushleft}
        \begin{algorithmic}[1] %
            \State \Return \Call{SolveRec}{$\ell,|\agents|$}
        
            \Function{SolveRec}{$i,B$}
                \If{$i=1$}
                    \State\Return \Call{SolvePath}{$P_1,B$}\label{alg:disjointPaths:basicStep}
                \EndIf
                \If{$\DP[i,B] = \texttt{undefined}$}
                    \State~$r \gets -\infty$
                    \For{$\forall b\in[B]_0$}
                        \State~$x \gets \Call{SolvePath}{P_i,b}$\label{alg:disjointPaths:locallyOptimal}
                        \State~$y \gets \Call{SolveRec}{i-1,B-b}$\label{alg:disjointPaths:inductionHypothesis}
                        \State~$r \gets \max\{r,x + y\}$
                    \EndFor
                    \State~$\DP[i,B] \gets r$
                \EndIf
                \State \Return~$\DP[i,B]$
            \EndFunction
        \end{algorithmic}
    \end{algorithm}

    We show correctness of the computation by induction over $i$. First, let
    $i=1$. Then, $G[V(P_i) \cup \{\sV,\tV\}]$ is a simple path. Therefore, we
    can use the algorithm of \Cref{thm:path:poly} to find an optimal plan
    solving such a sub-graph, which is exactly what the basic step of
    \Cref{alg:disjointPaths}; see line~\Cref{alg:disjointPaths:basicStep}. Now,
    let $i > 1$ and assume that $\DP[j,\beta]$ is computed correctly for all
    $j\in[i]$ and $\beta\in [|\agents|]_0$. Recall that in any valid plan, once
    an asset leaves $\sV$, it can never return to it. Hence, if we decide to
    send some assets through $P_i$, these assets never visits any vertex of
    paths $P_1,\ldots,P_{i-1}$, as $G\setminus\{\sV,\tV\}$ is a collection of
    disjoint paths. Hence, if we decide to use $P_i$ for some $b\in[B]_0$ assets,
    the solution for $P_i$ is independent of the solution on
    $P_{1},\ldots,P_{i-1}$. Moreover, on graph $G[V(P_i)\cup\{\sV,\tV\}]$, an
    optimal solution follows the algorithm of \Cref{thm:path:poly}, which is
    also what our algorithm does on
    line~\Cref{alg:disjointPaths:locallyOptimal}. Moreover, by induction
    hypothesis, if we combine this partial plan with the result return on
    line~\Cref{alg:disjointPaths:inductionHypothesis}, we clearly obtain an
    optimal solution for this $b$. Since we try all possible $b\in[B]_0$, the
    algorithm is indeed optimal.
    
    The size of the dynamic programming table~$\DP$ is~$\Oh{|V(G)|\cdot\numAgents}$, and each cell can be computed in~$\Oh{\numAgents\cdot T_{\operatorname{path}}}$ time, where~$T_{\operatorname{path}}$ is the running time of the algorithm returning the maximum number of successful assets if~$G$ is a path. By \Cref{thm:path:poly},~$T_{\operatorname{path}}\in\operatorname{poly}(|\mathcal{I}|)$, so our approach also runs in polynomial time.
\end{proof}

\restatetheorem{\propdisjoinstructure*}
\begin{proof}
  \backtotheorem{prop:disjoint-structure}%
  We transform instance~$\mathcal{I}$ to an equivalent new
  instance~$\mathcal{J}$ whose topology is collection of disjoint paths
  connecting the start and target locations. Then,
  invoking~\Cref{prop:disjoint-paths}, we solve~$\mathcal{I}$.

  Let $G'=(V',E')$ be a condensed topology of~$G=(V,E)$. We obtain a new
  instance~$\mathcal{J}= (H, \sV,\tV,\traps,\reloadFn,\agents,\goal)$
  of~\basicProb{}, in which the graph $H$ is constructed as follows. We start by
  taking the copy of~$G'$ as~$H$. Then, for each vertex~$v' \in V' \setminus
  (\{\sV, \tV\} \cup R)$, we substitute this vertex in~$H$ with a path of length
  equal to the number of vertices of the connected component represented by~$v'$
  in a way that the two neighbors of the substituted~$v'$ are connected to the
  two ends of the path.

  The equivalence of~$\mathcal{I}$ and~$\mathcal{J}$, follows mainly from the
  fact that the ideas and proofs
  from~\Cref{clm:path:poly:move,clm:path:poly:waitSacrifice} apply
  to~$\mathcal{I}$ almost without changes. The only difference is that
  for~$\mathcal{I}$, we do not order vertices according to the path between the
  previous and the next trap, but according to the distance from the next trap
  (breaking possible ties arbitrarily). Having that and the fact that each trap
  has degree two in~$G$, allows us to easily transform each solution
  of~$\mathcal{I}$ to~$\mathcal{J}$ and vice versa, which proves the
  proposition. 
\end{proof}

\restatetheorem{\propstarcase*}
\begin{proof}
  \backtotheorem{prop:star-case}
  Let~$\mathcal{I} = (G,\sV,\tV,\traps,\reloadFn,\agents,\goal)$ be a \basicProb
  instance, where~$G$ is a star. We give a plan that implies that \basicProb is
  solvable on~$G$ in~$\Oh{1}$ time. The plan is to move the assets one after
  another with no delay from~$\sV$ to~$\tV$, call this plan~$\plan$. We show
  that~$\plan$ is optimal and that it is easy to compute the number~$k'$ of
  assets that succeed. Before we start, note that the unique path from~$\sV$
  to~$\tV$ must pass through a trap vertex, as per our assumptions on~$G$ (from
  the beginning of the section on our algorithms). 
  
  Let~$r$ be the trap vertex with the reload time~$c \coloneq \reloadFn(r)$
  (recall that~$\reloadFn(r) > 0$) on the~$\sV$-$\tV$ path and let us decompose
  the number~$\numAgents > 0$ of assets into~$\numAgents \coloneq d(c+1)+q$
  with~$d \geq 0$ and~$0\leq q<c +1$. Under~$\plan$, every eliminated asset
  allows at most~$\reloadFn(v)$ assets to pass. So, for every~$c+1$~assets there
  are~$c$~assets that pass. Thus, if~$q=0$, we lose exactly~$d =
  \nicefrac{\numAgents}{c+1}$~assets and
  get~$k'=\numAgents-\nicefrac{\numAgents}{c+1}=\nicefrac{c\numAgents}{c+1}$
  successful ones. However, if~$1\leq q<c+1$, then right after the
  above-mentioned~$d \cdot (c+1)$~assets move past the trap (or are eliminated
  thereon), the trap becomes active. So, we lose one more asset so that the
  remaining~$q-1$ (possibly~$0$) assets can pass. Since~$q<(c+1)$
  implies~$q-1<c$, we conclude that~$k'=\numAgents-\lfloor
  \nicefrac{\numAgents}{c+1} \rfloor-1=\numAgents-\lceil
  \nicefrac{\numAgents}{c+1} \rceil$ succeed. Hence, altogether we
  get~$k'=\numAgents-\lceil \nicefrac{\numAgents}{c+1} \rceil$~successful
  assets, regardless of whether~$q=0$ or~$q\neq0$.

  To see the optimality of~$\plan$, consider some~$\numAgents$~consecutive
  rounds such that in the first round~$r$ is active. By definition, there must
  be at least one elimination for every block~$c$ of rounds when~$r$ is
  inactive. Since the sequence of rounds starts with~$r$ being active, then~$r$
  can be inactive for at least~$\lceil \nicefrac{\numAgents}{c + 1} \rceil$
  rounds in the whole sequence. Since~$\plan$ moves~$\numAgents$ assets
  through~$r$ consecutively and~$r$ is initially active, we obtain that out
  of~$\numAgents$~rounds~$r$ is active through at most~$k' = \numAgents - \lceil
  \nicefrac{\numAgents}{c + 1} \rceil$~rounds. So, we conclude that~$\plan$ is
  optimal.

  Finally, to solve instance~$\mathcal{I}$ it is enough to verify
  in~$\Oh{1}$~time whether~$\goal \leq \numAgents - \lceil
  \nicefrac{\numAgents}{c + 1} \rceil$ indicating a \Yes-instance.
\end{proof}

\restatetheorem{\thmtrapslimitedharndess*}
\begin{proof}[Proof of correctness]
  \backtotheorem{thm:traps-limited-hardness}%
    For correctness, assume that~$\mathcal{I}$ is a \Yes-instance and~$X\subseteq \mathcal{S}$ is an exact cover of~$\mathcal{U}$. Starting with the first round, in consecutive rounds, the assets move from the initial vertex to the input part, and, eventually, continue to some leaf of the first batch gadget. Every odd asset is destroyed by the trap protecting the input port of~$B(8N+6)$. Once all leaves of~$B(8N+6)$ are full, the asset~$a_{16N+12}$ waits for one round at the center of the first batch gadget. In the other round, asset~$a_{16N+12}$ inactivates the trap on the output port and is replaced by asset~$a_{16N+14}$ in the center of~$B(8N+6)$. Now, all assets leave, in lexicographic order, the first batch gadget in consecutive rounds, which takes~$8N+6$ rounds and, therefore, none of them is destroyed by the trap on the output port of~$B(8N+6)$. First four assets leaving~$B(8N+6)$, namely~$a_{16N+14}$,~$a_{2}$,~$a_4$, and~$a_6$, crosses by the same strategy the third batch gadget~$B(1)$; consequently, only the asset~$a_6$ passes the output port of~$B(1)$ without being destroyed and continues, without waiting, through the slowdown path towards the guard vertex. All the remaining assets use a path via the second batch gadget, again using the same strategy.~$4N+1$ of them are destroyed on the input port of~$B(4N)$, and one is destroyed at the output port. Hence, exactly~$8N+2 - (4N+1) - 1 = 4N$ assets successfully pass~$B(4N)$. Now, let~$X = \{S_{i_1},\ldots,S_{i_N}\}$. Then, the first four assets leaving~$B(4N)$ go through set vertex~$s_{i_1}$, the second four assets through set vertex~$s_{i_2}$, etc. Obviously, the first asset in each group is destroyed, but all three remaining assets can pass through the set vertex to an empty element vertex. Note that each set vertex is adjacent to exactly three element vertices, and since~$X$ is an exact cover, there is always an empty element vertex for each passing asset. Once all element vertices are occupied, these assets are waiting till the asset~$a_6$ inactivates the trap on the guard vertex and, after that, all of them reach the terminal vertex. As was shown, exactly~$3N$ assets succeeded in this plan, so~$\mathcal{J}$ is indeed a \Yes-instance.

    In the opposite direction, let~$\mathcal{J}$ be a \Yes-instance and~$\plan$ be a plan such that at least~$3N$ assets succeed. We split the proof into several claims. First, we show that the ``upper path'' is used at least once.

\begin{claim}\label{clm:path:NPh:topPath}
    The trap on the output port of~$B(4N)$ is deactivated at least once.
\end{claim}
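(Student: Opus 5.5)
We argue by contradiction: suppose the trap on the output port of~$B(4N)$ is never deactivated in~$\plan$, and show that then fewer than~$3N$ assets survive. If that trap is never deactivated, no asset ever passes the output port of~$B(4N)$; any asset entering it while it is active is destroyed and deactivates it, contrary to the assumption. Since every path from~$\sV$ to the set vertices passes through that output port, no asset ever reaches a set vertex or an element vertex. Consequently, every asset reaching the guard vertex~$g$, and hence~$\tV$, must arrive via the third batch gadget~$B(1)$ and the slowdown path.

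The key step is to bound how many assets can traverse~$B(1)$ and the slowdown path. Both traps of~$B(1)$ have reload time one, and its center has no storage leaves besides the two ports. Each asset crossing the output port therefore needs a sacrifice immediately before it, and each asset reaching the center needs a sacrifice at the input port. Hence every surviving asset on this route costs at least two destroyed assets. The slowdown path is followed by~$g$, whose trap has reload time~$3N$, and by~$\tV$.

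Two consequences follow. First, the first asset arriving at~$g$ is destroyed, so at most~$3N$ assets can pass~$g$ during its reload window. Second, assets arriving later need additional sacrifices at~$g$. I would count carefully with~$|\agents| = 16N+14$. At least~$8N+8$ assets are lost getting~$8N+6$ assets through~$B(8N+6)$ (the batch-gadget accounting from the construction). So at most~$8N+6$ assets enter the~$B(1)$ branch. Each survivor of~$B(1)$ costs at least~$3$ assets in total (itself plus two sacrifices), so at most~$\lfloor (8N+6)/3 \rfloor$ assets reach the slowdown path. At least one of these is then destroyed at~$g$, and every further block of~$3N$ requires another sacrifice. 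This leaves strictly fewer than~$3N$ survivors for~$N \geq 1$: for example,~$(8N+6)/3 - 1 < 3N$ holds exactly when~$8N+3 < 9N$, that is, when~$N > 3$. Small~$N$ can be handled separately or assumed away without loss of generality.

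The main obstacle is making the per-survivor cost of~$B(1)$ rigorous, because assets can wait on the slowdown path and on~$g$'s neighbours. I would first try a timing argument: at the center of~$B(1)$ the input-port trap must be inactive exactly when an asset enters. Since the reload time is one, each entry is immediately preceded by a destruction at the input port, and the same reasoning applies at the output port. This gives an injective charging of destroyed assets to survivors, and combining it with the loss inside~$B(8N+6)$ yields the contradiction.
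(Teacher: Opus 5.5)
You follow the paper's strategy: assume the output trap of $B(4N)$ is never triggered, so every survivor must come through $B(1)$ and the slowdown path, and then count. The gap is in the cost you assign to $B(1)$.

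You charge each survivor one input-port sacrifice and one output-port sacrifice. But the asset destroyed on the output port of $B(1)$ must itself have been at the center of $B(1)$ earlier. The port's only other neighbour lies on the slowdown path, and under your assumption every asset there arrived through this very port. So that sacrificed asset consumes a separate input-port sacrifice of its own. Your injective timing charge (reload time one, one asset per vertex per round) then gives at least three destroyed assets per survivor, i.e.\ four assets in total. This is exactly the batch-gadget accounting ``$x+2$ losses to pass $x$'' with $x=1$ that the paper uses.

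Your count of three assets per survivor only gives $\lfloor (8N+6)/3\rfloor-1<3N$ for $N\ge 4$, so the claim is not established for $N\le 3$. Your sentence asserting it ``for $N\geq 1$'' contradicts your own computation. Assuming small $N$ away may be acceptable for the reduction as a whole, but it does not prove this claim about the constructed instance. With the corrected count, at most $\lfloor (8N+6)/4\rfloor=2N+1$ assets enter the slowdown path, which is the paper's bound. Combined with your observation that the first asset to reach $g$ is destroyed, at most $2N<3N$ survive for every $N\ge 1$.

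A smaller point: it is false that every path from $s$ to the set vertices uses the output port of $B(4N)$. An asset coming along the slowdown path can cross $g$ while it is inactive and step onto element vertices, and even set vertices. What you need, and what does hold, is that each asset's first arrival at $g$ is from the slowdown path. Element vertices are reachable otherwise only through the set vertices, and hence through the never-entered output port of $B(4N)$.
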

\begin{claimproof}
        For the sake of contradiction, assume that all assets go through the slowdown path. Then, each of them needs to cross~$B(1)$. By the properties of the batch gadget, to pass~$1$ asset, at least~$1+2 = 3$ assets are destroyed. As at most~$8N+6$ assets can cross~$B(8N+6)$, we obtain that at most~$\lfloor (8N+6)\cdot \frac{1}{4}\rfloor = \lfloor 2N + \frac{3}{2} \rfloor = 2N + 1$ assets enters the slowdown path, which contradicts that at least~$3N$ assets succeed. 
    \end{claimproof}

    Similarly, the ``bottom part'' is used at least once. In fact, we show a stronger property that it is actually used by exactly one asset.

\begin{claim}\label{clm:path:NPh:slowdownPath}
    The trap on the output port of~$B(1)$ is deactivated exactly once.
\end{claim}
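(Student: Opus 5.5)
We argue the two directions separately: the trap is deactivated at least once, and it is deactivated at most once. Throughout we use the counting facts of the batch gadget. Every asset that ever reaches the center of $B(x)$ costs one extra asset on the input port (reload time one). Every deactivation of the output port costs one asset, and lets at most $x$ assets pass. We also recall that only $16N+14$ assets exist, and at most $8N+6$ of them can leave $B(8N+6)$. This holds because passing the input port with reload time one eliminates at least every other asset, and one more asset dies on its output port.

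\emph{At least once.} Suppose the output-port trap of $B(1)$ is never deactivated. Then no asset ever reaches the guard vertex $g$ other than by way of an element vertex. The first asset entering $g$ meets an active trap and is eliminated. After that, $g$ stays inactive for only $3N$ rounds, and every further pass requires another sacrifice. We then count the assets available after $B(4N)$. At most $8N+6$ assets leave $B(8N+6)$. Of these, at most $4N$ pass $B(4N)$ per deactivation of its output port. A second deactivation of that port would require refilling the $4N-1$ leaves, and each refill costs twice as many assets at the input port, which exceeds the budget. So at most $4N$ assets reach the set vertices. There, the traps with reload time $3$ destroy at least one of every four assets, leaving at most $3N$ assets on the element vertices. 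One more must then be sacrificed at $g$, so fewer than $3N$ assets succeed, a contradiction. The step I expect to be delicate is showing rigorously that $B(4N)$ cannot be fully used twice and that no extra assets can be "stored" on the set or element vertices. I would handle this with the budget count: after $B(8N+6)$ emits $8N+6$ assets, a full pass of $B(4N)$ consumes $4N+(4N-1)+2$ of them.

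\emph{At most once.} Each deactivation of the output port of $B(1)$ consumes at least $3$ assets: one on the input port for the stored asset, one on the output port, and one more because the lone follower needs a predecessor. Each deactivation yields at most one asset on the slowdown path. If it were deactivated twice, at least $8$ assets would be diverted to $B(1)$, leaving at most $8N+6-8=8N-2$ for $B(4N)$. By the gadget bound, passing $4N$ assets through $B(4N)$ requires at least $4N+(4N-1)+\dots \geq 8N+1$ assets. More precisely, passing $p$ assets costs at least $p+2$ assets in sacrifices, while each of the at most $3N$ element-vertex assets must be preceded by set-vertex sacrifices. Fewer than $4N$ assets would therefore reach the set vertices, at most $\frac34$ of them get through, and the total stays below $3N$. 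Assets arriving at $g$ via the slowdown path also cannot add survivors beyond the $3N$ count, because $g$ has reload time $3N$ and reaching $t$ always passes $g$.

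The main obstacle is making the counting on the set vertices precise, namely that at most three out of every four assets pass a set trap with reload time $3$. I would argue this exactly as in the star case (\Cref{prop:star-case}): each set vertex is a single trap with reload time $3$, so at most $\lfloor 3y/4\rfloor$ of $y$ arriving assets get through. Combining this bound with the budget arithmetic gives the claim.
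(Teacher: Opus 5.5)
\textbf{First half (never deactivated).} This follows the paper's own argument: the asset budget through $B(8N+6)$ and $B(4N)$, the $\lfloor 3y/4\rfloor$ bound at the set traps, and one sacrifice at $g$. One intermediate step is wrong, though harmless. A second deactivation of the output port of $B(4N)$ does \emph{not} require refilling the leaves; it only needs two more arrivals at the center. A full first pass uses $2(4N+1)=8N+2$ assets (not $4N+(4N-1)+2$). The remaining $4$ assets then push one more asset through, so up to $4N+1$ assets reach the set vertices, exactly as the paper computes. The conclusion survives because $\lfloor 3(4N+1)/4\rfloor=3N$, and the sacrifice at $g$ still leaves at most $3N-1$ survivors.

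\textbf{Second half (at most once).} The paper's proof of this claim only treats the never-deactivated case, so this half must stand on its own, and as written it does not. If $B(1)$ is passed twice, two assets travel the slowdown path and can reach $g$. Your justification that they ``cannot add survivors beyond the $3N$ count, because $g$ has reload time $3N$'' is invalid. Reload time $3N$ allows $3N$ survivors per sacrifice, which is exactly the goal, so it yields no contradiction.

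With the bounds you actually state, the argument fails outright. ``Fewer than $4N$'' assets at the set vertices allows $4N-1$, hence $3N-1$ on element vertices. Adding two slowdown assets gives $3N+1$ arrivals at $g$, and with one sacrifice that permits $3N$ survivors.

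You need the sharp count, which your own budget inequality already provides:
\begin{itemize}
\item $2p+2\le 8N-2$ gives $p\le 4N-2$.
\item Hence at most $\lfloor 3(4N-2)/4\rfloor=3N-2$ assets reach element vertices.
\item So at most $3N$ assets arrive at $g$; the first to enter it dies, leaving at most $3N-1$ survivors.
\end{itemize}

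Two further fixes are needed:
\begin{itemize}
\item Handle $j\ge 2$ passes of $B(1)$, not exactly two. Each pass costs $4$ assets and adds one slowdown asset, so the same count gives at most $\lfloor 3(4N+2-2j)/4\rfloor+j-1\le 3N-1$ survivors.
\item Count passes of $B(1)$ rather than deactivations of its output port. An asset stepping back from the slowdown path can deactivate that trap without the $3$-asset cost you attribute to every deactivation.
\end{itemize}
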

\begin{claimproof}
        First, assume that it is not deactivated at all. For the sake of contradiction, assume that all assets go through the~$B(4N)$ gadget. Then, as~$8N+6$ assets leave~$B(8N+6)$, at least two assets are destroyed on the output port of~$B(4N)$, and to pass them over the input port of~$B(4N)$, two more assets are destroyed. After the first deactivation of the trap in the output port, at most~$4N$ assets pass through the output port, and the same number of assets are destroyed to open their path through the input port. Consequently, after the second inactivation of the output port, at most~$(8N+6-4-4N-4N)/2 = 2/2 = 1$ asset can pass. That is, there are at most~$4N+1$ assets surviving after the output port of~$B(4N)$. Each of them needs to pass through the traps on the set-vertices. These traps have a reload time~$3$, meaning that at most~$\lfloor 3\cdot4N/4 \rfloor = 3N$ can pass through them. Finally, at least one of them is destroyed on the guard-vertex, meaning that at most~$3N-1$ assets successfully reach~$t$, which is a contradiction with~$\plan$ being a solution.
    \end{claimproof}

    Consequently, we see that the trap on the output port of $B(4N)$ is also deactivated exactly once. Next, we prove that exactly $N$ (disjoint) traps placed on the set vertices are deactivated. By \Cref{clm:path:NPh:slowdownPath}, exactly four assets enter the batch gadget at the beginning of the slow-down path. Hence, exactly $8N+2$ assets enter $B(4N)$ and exactly $4N$ of them leave this gadget in consecutive rounds by \Cref{clm:path:NPh:topPath}. Since exactly one asset follows the slowdown path, we need to pass exactly $3N$ agents to the element vertices. The traps on the set vertices have reload time $3$ and therefore we lose at least $4N/4 = N$ agents by deactivating them. That is, necessarily exactly $N$ of these traps are deactivated. Moreover, each set vertex is adjacent to exactly three element vertices, and there is no way for assets to move from one element vertex to another without deactivating a trap. Hence, if a trap on a set vertex is deactivated twice, then their neighboring element vertices must be empty. This can only happen if the guard trap is activated. However, the asset $a$ using the slowdown part can reach it at the beginning of the round $N^5$, while the last asset leaves $B(4N)$ in time at the latest $12N+3$, which means that $a$ reaches the guard trap once it is active again, and we have only $3N-1$ successful agents, which is in contradiction to the fact that $\plan$ is a solution. Hence, all assets must pass the guard trap in consecutive rounds, which means that we had to use exactly $N$ distinct set vertices. The set vertices used correspond exactly to the solution of $\mathcal{I}$. This finishes the proof.
\end{proof}

\restatetheorem{\thmtreesnphardness*}
\begin{proof}
  \backtotheorem{thm:np-hardness-trees}%
  \begin{figure}
  \centering%
    \begin{tikzpicture}[node distance=1.3cm,minimum size=0.5cm,
      font=\footnotesize,
      every node/.style={draw,minimum width=5mm,inner sep=1pt}
      ,label distance=-2mm]
    \node[very thick, circle, draw] (s) {$s$};

    \coordinate (LSepB) at (-1.75,-1.1);
    \coordinate (LSepMid) at (4,-4);
    \draw[ultra thick, dashed] (LSepB) .. controls (4.25, -1.1) .. (LSepMid);

    \node[draw,circle] (main) at (-1,-1.7) {{\textcolor{cbOrange}{\faIcon{bomb}}}};
    \node[reload,draw=none, below = 1pt of main] {$s(\threePartDummy_i)+1$};
    \node[draw=none, above of=main,node distance=0.4cm] {\textit{Element Gadget:}};

    \node[draw,circle, right of=main] (one) {};
    \node[draw=none, right of=one] (dots) {$\dots$};
    \node[draw,circle,right of=dots] (last){};
    \draw (main) -- (one) -- (dots) -- (last);

    \draw [decorate,decoration={brace,amplitude=6pt,mirror}] 
    ($(one.south)-(0,0pt)$) -- ($(last.south)-(0,0pt)$)
    node[draw=none,fill=none,midway,below=7pt] {$s(\threePartDummy_i)+1$};

    \node[circle, draw] (x) at (-1,-3.2) {{\textcolor{cbOrange}{\faIcon{bomb}}}};
    \node[reload,draw=none,below = 1pt of x] {$c(v)$};
    \node[draw=none,above of=x, node distance=0.5cm, xshift=0.2cm] {\textit{Destruction Gadget:}};
    \node [circle, draw, right of=x] (x1)  {{\textcolor{cbOrange}{\faIcon{bomb}}}};
    \node[reload,draw=none,below = 1pt of x1,xshift=5pt] {$c(v)-1$};

    \node [circle,draw=none, right of=x1] (dots) {$\dots$};

    \node [circle,draw, right of=dots] (end) {{\textcolor{cbOrange}{\faIcon{bomb}}}};
    \node[reload,draw=none, below = 1pt of end] {$c(v)-\ell+1$};

    \draw (x) -- (x1) -- (dots) -- (end);
    
    \node[right of=s,circle, draw] (d1) {{\textcolor{cbOrange}{\faIcon{bomb}}}};
    \node[reload,draw=none,below left= 8pt and 17pt of d1,anchor=north] (clock) {$100n^5T^5+100nT+(3T+19+3n)$};
    \draw[densely dotted] (d1.south) -- (clock.north);

    \node[draw=none,right of=d1,circle] (ddots) {$\dots$};
    \node[right of=ddots,circle, draw] (dend) {{\textcolor{cbOrange}{\faIcon{bomb}}}};
    \node[reload,draw=none,below = 6pt of dend,anchor=north] (clock) {$3T+20+3n$};
    \draw[densely dotted] (dend.south) -- (clock.north);

    \node[right = 0.45cm of dend,circle, draw] (mid1) {{\textcolor{cbOrange}{\faIcon{bomb}}}};
    
    \node[reload,draw=none,above of=mid1,node distance=1.7cm] (clock) {$(n+3)T+3n+20+n(T+7)-(T+6)\lceil\frac{n(T+7)}{2T+13}\rceil$};

    \draw[decorate,decoration={brace,amplitude=10pt}] (d1.north) -- (dend.north)
      node[midway,xshift=-1cm,draw=none] (brace) {};
    \node[draw=none,above of=brace,xshift=1cm,node distance=0.5cm] {Limiting Infrastructure};

    \node[below of=mid1, draw,circle] (st1){{\textcolor{cbOrange}{\faIcon{bomb}}}};
    \node[reload,draw=none,below of=st1, node distance=0.5cm] {$s(\threePartDummy_i)+1$};

    \node[below = 15pt of st1,draw,circle] (st2){};
    \node[below right= 15pt and 35pt of st2, draw,circle] (st4x) {};
    \draw (st2) to node[midway,circle,draw=none,rotate=-26,fill=white]{\dots}
      (st4x);
    \draw (mid1) -- (st1) -- (st2) ;

    \draw[decorate,decoration={brace,mirror,amplitude=5pt,raise=7pt,aspect=0.3}]
      (st2.center) -- (st4x.center)
      node[pos=0.3,draw=none,below=11pt,sloped] (brace) {$s(\threePartDummy_i)+1$};
    
    \node[draw=none,right = .5cm of mid1,circle] (mid2) {\dots};
    \node[below of=mid2, draw,circle] (st1){{\textcolor{cbOrange}{\faIcon{bomb}}}};

    \node[below = 15pt of st1,draw,circle] (st2){};
    \node[below right= 15pt and 35pt of st2, draw,circle] (st4) {};
    \draw (st2) to node[midway,circle,draw=none,rotate=-26,fill=white]{\dots} (st4);
    \draw (mid2) -- (st1) -- (st2) ;

    \node[right= .4cm of mid2,circle, draw] (mid3) {{\textcolor{cbOrange}{\faIcon{bomb}}}};
    \node[below of=mid3, draw,circle] (st1){{\textcolor{cbOrange}{\faIcon{bomb}}}};

    \node[below= 15pt of st1,draw,circle] (st2){};
    \node[below right= 15pt and 35pt of st2, draw,circle] (st4) {};
    \draw (st2) to node[midway,circle,draw=none,rotate=-26,fill=white]{\dots} (st4);
    \draw (mid3) -- (st1) -- (st2) ;
    \node[right= .5cm of mid3,circle, draw] (midEnd) {{\textcolor{cbOrange}{\faIcon{bomb}}}};

    \node[reload,draw=none,below of=st1,node distance=0.5cm] (st1c) {$s(\threePartDummy_k)+1$};
    
    \draw [decorate,decoration={brace,amplitude=6pt,mirror}]
      ($(st4x.south)-(0,0pt)$) -- ($(st4.south)-(0,0pt)$)
      node[draw=none,fill=none,midway,xshift=0cm,yshift=-0.5cm,draw=none] {$3n$
      Element Gadgets,~$y_a\neq y_b, \forall y_a,y_b\in Y$};

    \draw[densely dotted] (mid1.north) -- (clock.south);
    \draw[densely dotted] (mid2.north) -- (clock.south);
    \draw[densely dotted] (mid3.north) -- (clock.south);
    \draw[densely dotted] (midEnd.north) -- (clock.south);
    \node[draw=none,below right= 25pt and 5pt of midEnd, draw,circle] (st1){{\textcolor{cbOrange}{\faIcon{bomb}}}};
    \node[reload,draw=none, above of=st1,yshift=-0.7cm,xshift=0.5cm] (st1c) {$2T+12$};
    \draw[densely dotted] (st1c.south) -- (st1.north east);
    \node[draw,below right = 10pt and 40pt of st1,anchor=center,circle, node distance =
      25pt] (st1y){{\textcolor{cbOrange}{\faIcon{bomb}}}};
    \node[reload,draw=none,below of=st1y] (st1yc) {$2T+12$};
    \draw[densely dotted] (st1y.south) -- (st1yc.north);
    \draw (st1) to node[midway,circle,draw=none,rotate=-20,fill=white]{\dots} (st1y);

    \node[right=0.3cm of st1y,draw,circle] (st2){};
    \node[draw=none,right=0.3cm of st2,circle] (st3) {$\dots$};
    \node[right=0.3cm of st3, draw,circle] (st4) {};
    \draw (midEnd) -- (st1)
          (st1y) -- (st2) -- (st3) -- (st4);
        
    \draw[decorate,decoration={brace,amplitude=5pt,mirror}]
      ($(st1.center)+(-100:8pt)$) --
      ($(st1y.center)+(-120:8pt)$) node[midway,xshift=1cm,draw=none] (brace) {};
    \node[draw=none,anchor=north east,below left= 5pt and 5pt of brace,
      rotate=-20] {$T+6$ Traps};

    \draw[decorate,decoration={brace,amplitude=10pt}] (st4.south) -- (st2.south) node[midway,xshift=1cm,draw=none] (brace) {};
    \node[draw=none,below of=brace,xshift=-0.5cm,node distance=0.75cm] {$n(T+7)$ Empty Nodes};

    \node[right of=midEnd,circle, draw] (end1) {{\textcolor{cbOrange}{\faIcon{bomb}}}};
    \node[reload,draw=none,above left= .6cm and -3.25cm of end1] (clock) {$(n+3)T+18+n(T+7)-(T+6)\lceil\frac{n(T+7)}{2T+13}\rceil$};

    \node[draw=none,right of=end1,circle, draw] (end2) {{\textcolor{cbOrange}{\faIcon{bomb}}}};

    \node[draw=none,right of=end2,circle] (end3) {\dots};
    \node[right= .2cm of end3,circle, draw] (endEnd) {{\textcolor{cbOrange}{\faIcon{bomb}}}};
    \node[reload, above of=endEnd,yshift=-0.7cm] (endEndc) {$1$};
    
    \draw [decorate,decoration={brace,amplitude=6pt,mirror}] ($(end1.south)-(0,0pt)$) -- ($(endEnd.south)-(0,0pt)$) node[draw=none,fill=none,midway,xshift=0cm,yshift=-0.5cm] {Blocking Infrastructure};

    \node[very thick, right = .2 of endEnd,circle, draw] (goal) {$t$};

    \draw[densely dotted] (end1.north) -- (clock.south);

      \draw (s) -- (d1);
      \draw (d1) -- (ddots);
      \draw (ddots) -- (dend);
      \draw (dend) -- (mid1);
      \draw (mid1) -- (mid2);
      \draw (mid2) -- (mid3);
      \draw (mid3) -- (midEnd);
      \draw (midEnd) -- (end1);
      \draw (end1) -- (end2);
      \draw (end2) -- (end3);
      \draw (end3) -- (endEnd);
      \draw (endEnd) -- (goal);
        \end{tikzpicture}
  \caption{Construction of the reduction on trees}
  \label{fig:trees-hardness-construction}
\end{figure}
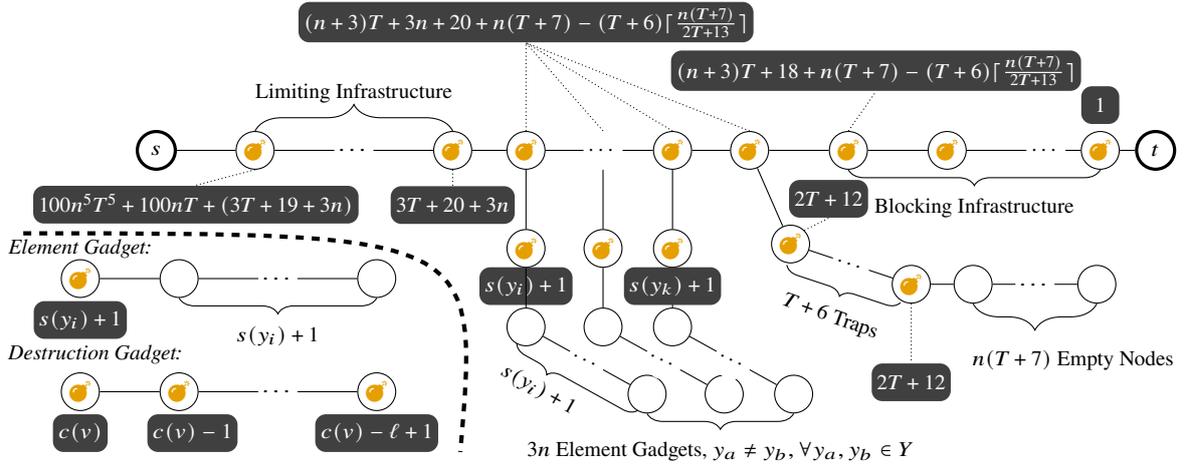%
 The problem is in \NP by~\Cref{cor:probInNP}. To show it is \NPh, we reduce
 from~\threePart{}.

 \defProblemQuestion{\threePart \cite[see p.~224]{GareyJohnson1990}}
     {Set~$Y$ of~$3n$ elements, a bound~$T \in \mathbb{Z}^+$, and a size~$s(y)
     \in \mathbb{Z}^+$ for each~$y \in Y$ such that~$T/4 < s(y) < T/2$
   and~$\sum_{y \in Y} s(y) = nT$.}%
     {Can~$Y$ be partitioned into~$n$ disjoint sets~$Y_1, Y_2, \ldots, Y_n$ of
     size exactly~$3$ such that for~$1 \leq i \leq n$,~$\sum_{y \in Y_i} s(y) =
     T$?} 
     
   Let~$\mathcal{P}$ be an instance of~\threePart{} such that~$n\geq3$ is odd
   and~$n<T$.\footnote{This problem is still~$\NPc$ since~$\binom{3n}{3}$ is
   polynomial in~$n$, and we can add a large number~$x>n$ to all~$s(y)$ and
 increase~$T$ by~$3x$} We construct an instance of
 \basicProb~$\mathcal{J}~=~(G,\sV,\tV,\traps,\reloadFn,\agents,\goal)$,
 where~$G$ is a tree of maximum degree 3, in polynomial time and show
 that~$\mathcal{P}$ is a \Yes-instance if and only if~$\mathcal{J}$ is a
 \Yes-instance. 

    For clarity we divide the reduction into 3 sections, "Construction", "Conception", which gives the intuition, and "Correctness".
    
    \textbf{Construction}.
    First, we define some gadgets, that we illustrate
    in~\Cref{fig:trees-hardness-construction}:
    \begin{enumerate}
        \item A~$\textit{destruction gadget}$ consists of~$\ell$
          traps arranged in order as a path~$v_1,v_1,\dots,v_\ell$, with the
          reload times of the vertices being~$c(v_1), c(v_1)-1, \dots,
          c(v_1)-\ell+1$, respectively.  A~$\textit{destruction gadget}$ is
          defined by a length~$\ell$ and a starting reload time~$c(v_1)$. The
          purpose of each of these gadgets in our reduction is to destroy~$\ell$
          assets while allowing~$c-\ell+1$ to pass through for every~$c+1$
          assets that move immediately one after another.
        
        \item An~\emph{element gadget} represents an element from the
          original set. For each element~$y \in Y$, the~\emph{corresponding
          element gadget} of~$y$ is a path starting with a trap with a reload
          time of~$s(y)+1$ refered to as the~\emph{guard vertex}, followed
          by~$s(y)+1$ empty vertices. The element is represented by
          storing~$s(y)+1$ assets in the empty vertices, which to obtain a
          \Yes-instance must happen (to be shown). 
        
    \end{enumerate}

    Now we formally construct the tree graph~$G$. This construction is shown in~\Cref{fig:trees-hardness-construction}.
    \begin{enumerate}
        \item Define a \textit{destruction gadget} that has starting reload time~$100n^5T^5+100nT+(3T+19+3n)$ and length~$100n^5T^5+100nT$. We refer to this \textit{destruction gadget} as the \textit{limiting infrastructure}. And call the nodes in this \textit{limiting infrastructure}~$v_i^{L}$,~$i\in\{1, \dots, 100n^5T^5+100nT\}$. We then connect~$\sV$ only to~$v_1^L$.
        \item Followed by and connected to~$v_{100n^5T^5+100nT}^L$ in the \textit{limiting infrastructure} is a path of traps of length~$3n+1$, all with reload time~$(n+3)T+3n+20+n(T+7)-(T+6)\lceil\frac{n(T+7)}{2T+13}\rceil$. Call the nodes in this path~$v_i^M$,~$i \in \{1, \dots, 3n+1\}$.
        \item For each of the nodes~$v\in v_i^M$,~$i\in\{1,\dots, 3n\}$, there is an \textit{element gadget} connected to~$v$ such that; 1) For some~$y \in Y$, we construct the \textit{corresponding element gadget} and attach it to~$v$ by connecting~$v$ and the \textit{guard vertex} and 2) no 2 \textit{element gadgets correspond} to the same~$y$
        \item There is also a subtree, call it the \textit{final subtree}, connected to~$v_{3n+1}^M$. This subtree consists of~$T+6$ traps arranged in a path, each with reload time~$2T+12$. The start of the path is connected to~$v_{3n+1}^M$. Followed by the last trap in this path is another path of~$n(T+7)$ empty nodes. This subtree is artificial and is not used to represent any elements, but will be used for the proof.        
        \item Next, we have a \textit{destruction gadget} with starting reload time~$(n+3)T+19+n(T+7)-(T+6)\lceil\frac{n(T+7)}{2T+13}\rceil-1$ and also length~$(n+3)T+19+n(T+7)-(T+6)\lceil\frac{n(T+7)}{2T+13}\rceil-1$. This gadget is connected on the starting trap to~$v_{3n+1}^M$. Notice that this gadget ends with a trap with reload time~$1$. This construction will be refered to as the \textit{blocking infrastructure}. We then connect the final trap with reload time~$1$ in the \textit{blocking infrastructure} to~$\tV$.
        \item We have the number~$\numAgents=(n+1)(100n^5T^5+100nT+(3T+20+3n))$ of assets and~$\goal=1$
        \item Finally, note that~$G$ can be constructed in polynomial time in~$T$ and~$n$, as there are only a polynomial number of verticies in these.
        
    \end{enumerate}
    
    \textbf{Conception}. The idea is to construct~$G$ such that it only allows batches of assets to pass through the \textit{limiting infrastructure} sequentially before a long delay. We then force each batch of assets to have~$T+7$ assets enter the \textit{final subtree}. This deactivates the traps on~$v_i^M$. The assets then can choose which \textit{element gadgets} to enter. This represents choosing elements for our subsets in \threePart. Finally we make sure that if and only if all the subtrees are filled is there a solution to \threePart. We ensure this via the \textit{blocking infrastructure}. Note when we refer to some assets exiting "during the final batch", it means that the final batch and the assets in the subtrees form a single long sequence, while "before the final batch" refers to when this doesn't happen (i.e. there is a gap somewhere). It should also be noted that the final batch is artificial, meaning it doesn't represent a subset, but rather is used to force the previous batch to enter the subtrees, otherwise they could start trying to mode through the \textit{destruction gadget} without entering the subtrees.

    Note that a sequence of~$x$ traps starting with reload time~$x$ and ending at reload time~$1$ can only be passed by 1 asset iff there are~$x+1$ assets sequentially moving forward. This will be used to enforce the mentioned long delays and to more easily think about the correctness of this construction.

  \textbf{Correctness}.
We start by stating a useful observation.
\begin{observation}\label{obs:boatsThroughSequentialSequence}
  If there is a sequence of~$c(v), c(v)-1, \dots, 1$ traps, an asset is able to pass this sequence if and only if
  there are~$c(v)+1$~assets moving one-by-one sequentially for some~$c(v)+1$
  time steps.
\end{observation}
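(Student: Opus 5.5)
The statement to prove is Observation~\ref{obs:boatsThroughSequentialSequence}. Let the path have traps $w_1,\dots,w_x$ with $x=c(v)$ and $\reloadFn(w_i)=x-i+1$. I will prove the two directions separately. In both, I use the fact that the sequence is a path of trap vertices, so every asset crossing it visits $w_1,\dots,w_x$ in order. By the argument of \Cref{clm:path:poly:noReturns}, I may assume assets never step back inside the sequence. Any asset lingering on $w_i$ is destroyed once $w_i$ reactivates, so lingering only shortens its window.

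\emph{Sufficiency.} Let $a_0,\dots,a_x$ enter $w_1$ at consecutive rounds $t,t+1,\dots,t+x$, each moving forward every round. Asset $a_0$ triggers $w_1$ at round $t$ and dies. Asset $a_1$ reaches $w_1$ at $t+1$, which is inside the window $[t+1,t+x]$, and triggers $w_2$ at $t+2$. In general, $a_j$ passes $w_1,\dots,w_j$ while each is still inactive and dies on $w_{j+1}$ at round $t+2j$. The key check is that $w_i$ is triggered at round $t+2(i-1)$ with window length $x-i+1$. Asset $a_j$ with $j\ge i$ reaches $w_i$ at round $t+j+i-1$. This lies in the window because $j+i-1-2(i-1)=j-i+1\le x-i+1$. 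So $a_x$ survives all $x$ traps, which also shows no extra waiting is needed.

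\emph{Necessity.} This is the main obstacle. I will argue by induction on $i$, from $w_x$ backwards, via the following claim. If an asset $b$ stands on $w_i$ at round $\theta$ without dying, then some $x-i+1$ assets occupied the positions $w_i, w_{i-1},\dots$ on an ``anti-diagonal''. More precisely, the $i$-th trap was triggered at most $x-i+1$ rounds before $\theta$ by a distinct asset, and that asset itself needed the earlier traps to be open.

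The cleanest formulation is to define, for an asset passing $w_i$ at round $\theta_i$, the triggering times $\sigma_i\in[\theta_i-(x-i+1),\theta_i-1]$. Since traps cannot be pre-triggered except by an asset that walked through $w_1,\dots,w_{i-1}$, each trigger of $w_i$ is performed by an asset that itself survived $w_1,\dots,w_{i-1}$. Repeatedly applying the claim yields $x+1$ distinct assets: the survivor, plus the triggerers of $w_x,w_{x-1},\dots,w_1$. Their entry times into $w_1$ are squeezed within a window of at most $x+1$ rounds. Distinctness holds because each triggerer dies on its trap.

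Since only one asset can enter $w_1$ per round (vertex capacity), these $x+1$ entries must occur in $x+1$ consecutive rounds, giving the ``one-by-one'' sequence. The delicate part is the time bookkeeping. I expect to show by induction that the triggerer of $w_i$ entered $w_1$ within the last $\sum$ of the allowed slack, which telescopes to exactly $x$ rounds before the survivor's entry. I also need to confirm that the reload windows $x-i+1$ leave no room for gaps, which I would verify by the same inequality $j-i+1\le x-i+1$ used above, now read as tight.
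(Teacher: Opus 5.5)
Your sufficiency argument is correct. The gap is in necessity. The paper states this observation without proof, so the question is what such a proof needs.

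\textbf{The squeezing step.} You defer the step that squeezes the $x+1$ entries into $x+1$ rounds, and ``telescoping slack'' cannot deliver it. The trigger of $w_i$ may precede the survivor's visit by up to $x-i+1$ rounds, and these slacks add up to $x(x+1)/2$, not $x$. Nothing in your chain bounds how long a triggerer spent on earlier traps.

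What actually pins the times is vertex capacity combined with the exact window lengths. The $x-i+1$ assets that must still cross $w_i$ for the survivor's pass stand on $w_i$ in distinct rounds. These rounds lie inside one inactive window of length exactly $x-i+1$, since consecutive inactive rounds cannot span two firings. So the window is completely filled and its trigger round is forced.

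To propagate this from $w_{i+1}$ to $w_i$, every occupant of $w_{i+1}$ in the filled block, including its triggerer, must have been on $w_i$ in the immediately preceding round. Granted that, a downward induction shows that $w_i$ is occupied in every round of $[\theta-2(x-i)-1,\theta-(x-i)]$ by distinct assets, where $\theta$ is the survivor's round on $w_x$. The case $i=1$ then gives the $x+1$ consecutive entries. This requires two facts you do not have: (i) no asset steps backward inside the sequence, and (ii) a trap fires only on an asset that steps onto it.

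\textbf{Why (i) and (ii) are not available.} Appealing to Claim~\ref{clm:path:poly:noReturns} changes the plan. That is illegitimate in an ``only if'' direction about a given plan. Moreover, the paper's argument for that claim only covers returns to non-trap vertices, whereas here every vertex is a trap.

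Your remark that lingering ``only shortens its window'' misses something. An asset caught on a reactivating trap is destroyed, as in Example~\ref{ex:problemDef}, and if that capture restarts the cooldown, it fires the trap again. It then acts as the triggerer without having been on the previous trap.

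Concretely, take $x=3$ with reload times $3,2,1$:
\begin{itemize}
\item $f_1$ fires $w_1$ in round $1$, so $w_1$ is inactive in rounds $2$--$4$.
\item $e$ is on $w_1$ in round $3$ and fires $w_2$ in round $4$, so $w_2$ is inactive in rounds $5$--$6$.
\item $d_2$ is on $w_1$ in round $4$ and sits on $w_2$ in rounds $5$ and $6$. It is destroyed there in round $7$, firing $w_2$ for rounds $8$--$9$.
\item $f_2$ fires $w_1$ in round $5$, so $w_1$ is inactive in rounds $6$--$8$.
\item $d_3$ is on $w_1,w_2$ in rounds $7,8$ and fires $w_3$ in round $9$.
\item $S$ is on $w_1,w_2,w_3$ in rounds $8,9,10$ and passes.
\end{itemize}
The entries into $w_1$ occur in rounds $1,3,4,5,7,8$, so no four are consecutive. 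Your chain $S,d_3,d_2,f_2$ enters in rounds $8,7,4,5$, so it is not squeezed into $x+1$ rounds either.

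Backward moves break the chain in the same way, since a triggerer of $w_i$ may arrive from $w_{i+1}$. Under the literal validity condition~5 a standing asset counts as a recent visitor, so it even keeps its trap inactive indefinitely.

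\textbf{What your argument does establish.} It robustly gives the count: at least $x+1$ distinct assets, one dying triggerer per trap plus the survivor. The consecutive-rounds conclusion can only be obtained for plans satisfying (i) and (ii), via the capacity-based induction above. Otherwise the statement has to be weakened.
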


    First, notice that~$3T+20+3n$ assets pass the
    \textit{limiting infrastructure} before a~$100n^5T^5+100nT$ time delay. We
    call the~$3T+20+3n$ assets that pass a \textit{batch} of assets. The last of
    these is called the \textit{final batch} of assets. Based on~$\numAgents$,
    the \textit{limiting infrastructure} passes~$n+1$ batches, each consisting
    of~$3T+20+3n$ assets. To prove correctness, we split the claim into lemmas.
    
    \begin{lemma}\label{lem:unfilled-subtree}
        If during the final batch not all non-trap nodes in the subtrees are filled, then~$\mathcal{J}$ is a \No-instance
    \end{lemma}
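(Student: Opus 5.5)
To prove the lemma I would argue by contrapositive. Suppose that during the final batch some non-trap node of the subtrees (the element gadgets together with the $n(T+7)$ empty nodes of the final subtree) is unfilled. I will show that no asset can cross the \emph{blocking infrastructure}. Since $\goal=1$, this makes $\mathcal{J}$ a \No-instance.

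The key tool is \Cref{obs:boatsThroughSequentialSequence}. The blocking infrastructure is a destruction gadget with starting reload time $L \coloneq (n+3)T+19+n(T+7)-(T+6)\lceil\frac{n(T+7)}{2T+13}\rceil-1$ and length $L$. Its reload times therefore run from $L$ down to $1$, so a single asset reaches $\tV$ only if $L+1$ assets enter $v^M_{3n+1}$'s successor in $L+1$ consecutive rounds. The plan is to count how many assets can possibly form such a consecutive run at the moment the final batch arrives.

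First I will bound the sources of such a run. By the long delay $100n^5T^5+100nT$ enforced by the limiting infrastructure, any assets stored from an earlier batch outside the subtrees cannot stay on trap vertices $v^M_i$. They also cannot be waiting in a long contiguous line, because the only non-trap storage available is inside the subtrees. Hence a run feeding the blocking infrastructure must consist of the $3T+20+3n$ assets of the final batch plus assets that leave subtree storage. The final batch first loses assets crossing the $v^M$ path, whose traps have the large reload time $(n+3)T+3n+20+n(T+7)-(T+6)\lceil\cdot\rceil$, and the final subtree. For the final subtree I would use that $T+6$ traps of reload $2T+12$ destroy $T+6$ assets per $2T+13$ passing, which yields exactly the $\lceil\frac{n(T+7)}{2T+13}\rceil$ term. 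Then I add the assets that can exit stored positions in consecutive rounds. Each filled element gadget of $y$ contributes at most $s(y)+1$ assets, since exiting through its guard trap of reload $s(y)+1$ requires matching the reload window. The final subtree contributes at most $n(T+7)$ assets, passing back through its trap chain. Summing, using $\sum_y (s(y)+1) = nT+3n$, and subtracting the unavoidable losses, the total available run length is at most $L+1$, with equality only if every storage node is filled. An unfilled node drops the count below $L+1$, and \Cref{obs:boatsThroughSequentialSequence} then forbids any survivor.

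The main obstacle is the accounting step. I must argue rigorously that stored assets cannot be released in a longer consecutive stream than their storage capacity, and that trap losses on the $v^M$ path and the final subtree cannot be circumvented by clever timing, for example by reusing an inactive window left by an earlier batch. I would handle the timing issue first by arguing that the delay $100n^5T^5+100nT$ exceeds every reload time in the $v^M$ path and the subtrees, so every such trap is active again when the final batch arrives. Then I would verify the arithmetic identity so that equality holds exactly when all storage is full.
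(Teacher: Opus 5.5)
\textbf{Gap: the final subtree does not automatically lose output when a node is empty.} Your overall strategy matches the paper's: count the longest consecutive run that can reach the blocking infrastructure and invoke \Cref{obs:boatsThroughSequentialSequence}. However, you compress the essential content of the proof into the sentence ``an unfilled node drops the count below $L+1$,'' and your plan does not justify it.

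For element gadgets this step is immediate. Each gadget loses exactly its guard asset when emptied, so every missing stored asset costs one exiting asset.

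For the final subtree it is not immediate. Under the paper's accounting (groups of $2T+13$, of which $T+6$ die on the traps of reload $2T+12$), $e$ stored assets release at most $g(e)=(T+7)\lfloor e/(2T+13)\rfloor+\max\{0,\,(e \bmod (2T+13))-(T+6)\}$. This function is flat whenever $e \bmod (2T+13)\le T+6$. So one missing asset in the final subtree reduces the output only if $n(T+7) \bmod (2T+13) > T+6$.

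The paper proves exactly this. Writing $n=2p+1$ gives $n(T+7)=p(2T+13)+(T+7+p)$, and $T+6<T+7+p<2T+13$ because $n<T$. This is the only place the assumptions that $n$ is odd and $n<T$ are used, and your proposal never invokes them. Without this residue check, your claim ``equality only if every storage node is filled'' does not follow for nodes of the final subtree.

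\textbf{Smaller misattribution.} The $\lceil n(T+7)/(2T+13)\rceil$ term does not come from the final batch crossing the final subtree. It is the loss paid by the $n(T+7)$ stored assets leaving through the $T+6$-trap chain. The final batch pays only the $3n+1$ traps of the $v^M$ path and then contributes $3T+19$. The element gadgets contribute $\sum_y(s(y)+1)-3n=nT$. The identity you must verify is therefore
$(3T+19)+nT+\bigl(n(T+7)-(T+6)\lceil\frac{n(T+7)}{2T+13}\rceil\bigr)=L+1$,
combined with the strict drop described above.

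Your remarks about stored assets not being released in an overlong stream, and about traps being active again after the long delay, are reasonable extra rigor; the paper does not address them in this lemma. The decisive step, though, is the residue computation.
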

    \begin{claimproof}
        The \textit{blocking infrastructure} has length~$(n+3)T+19+n(T+7)-(T+6)\lceil\frac{n(T+7)}{2T+13}\rceil-1$. If even a single non-trap node is unfilled in the subtrees, then at most there can be~$(n+3)T+19+n(T+7)-(T+6)\lceil\frac{n(T+7)}{2T+13}\rceil-1$ assets that arrive at the \textit{blocking infrastructure}, by summing the number of remaining assets and calculating the loss of assets when exiting each\textit{subtree} and adding to this the number of assets in the final batch. By~\Cref{obs:boatsThroughSequentialSequence}, this means that there will be no successful assets.
        
        For the \textit{element gadgets} this calculation is obvious, each \textit{element gadget} loses one asset when exiting. 
        
        For the \textit{final subtree}, let~$e$ be the number of assets inside this subtree by the final batch. Say that there are~$e<n(T+7)$ assets and they are guarded by~$T+6$ traps. We need to show that for any~$e$, there is no case where an equal or greater number of assets can exit compared to when~$e=n(T+7)$. The assets can be processed in groups of~$2T+13$, since every~$T+6$ assets allow~$T+7$ assets to pass. Now it remains to show that~$n(T+7) \mod(2T+13) > T+6$, that is losing even a single asset means less than the desired number of assets can exit this subtree. Since~$n$ is odd, we write this as~$2p+1$, so~$(2p+1)(T+7)\mod(2T+13)$. Expanding gives us~$2kT+T+14k+7$, and the modulo is~$T+k+7$. Now, since~$n<T$ by our variant or \threePart,~$T+6<T+k+7<2T+13$ for~$k \geq 1$. Meaning that losing even a single asset results in a loss in the possible number of assets that can exit. 
        
        This means that having a single subtree that is not filled will result in a \No-instance.
    \end{claimproof}

    \begin{lemma}\label{lem:enter-final-subtree}
        If~$\mathcal{J}$ is a \Yes-instance of \basicProb, then at each round~$T+7$ assets enter the \textit{final subtree}
    \end{lemma}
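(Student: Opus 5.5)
We argue by contraposition, combining a counting argument with \Cref{lem:unfilled-subtree}. Reading ``at each round'' as ``in each batch'', the claim is that in every one of the first $n$ batches exactly $T+7$ assets enter the \emph{final subtree}. The plan has three steps.

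\textbf{Step 1: what the final batch forces.} By \Cref{lem:unfilled-subtree}, if $\mathcal{J}$ is a \Yes-instance then, during the final batch, every non-trap node of every subtree is filled. In particular, all $n(T+7)$ empty nodes of the final subtree are occupied before the final batch arrives.

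\textbf{Step 2: how many assets a single batch can deliver.} A batch contains $3T+20+3n$ assets. After the long delay imposed by the limiting infrastructure, the traps $v^M_1,\dots,v^M_{3n+1}$ have reload time
\[(n+3)T+3n+20+n(T+7)-(T+6)\lceil\tfrac{n(T+7)}{2T+13}\rceil.\]
This exceeds the window of one batch, so one sacrifice per $v^M_i$ opens the spine for the whole batch.

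On entering the final subtree, its $T+6$ traps each have reload time $2T+12$, so they form a path that a consecutive group must pass together. To get $x$ assets through in one stream, $T+6$ assets are sacrificed, one per trap, and these deactivations stay open for $2T+12$ rounds. Hence at most $T+7$ further assets can follow before the first trap rearms: the window of $2T+13$ rounds is split into $T+6$ sacrifices and $T+7$ survivors. Since reload times are shorter than the inter-batch delay, the subtree can be entered at most once per batch. Consequently each batch contributes at most $T+7$ assets to the final subtree.

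\textbf{Step 3: combining the two.} Only the first $n$ batches can fill the final subtree before the final batch, and by Step 2 each contributes at most $T+7$. Yet Step 1 requires exactly $n(T+7)$ assets there. So every one of these batches must contribute exactly $T+7$; a single batch falling short would leave an empty node, and \Cref{lem:unfilled-subtree} would then give a \No-instance.

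The main obstacle is making Step 2 rigorous. First, the batch must really be able to supply $2T+13$ consecutive assets at the entrance of the final subtree. This needs $3T+20+3n \ge 2T+13$ plus the spine sacrifices, and this is where the chosen batch size matters. Second, one must rule out a second entry within the same batch window, and rule out assets re-entering the final subtree later. For the first point I would use that the reload $2T+12$ is shorter than the batch duration but that the $v^M$ spine traps close before any second group can form. For the second, assets stored in the final subtree cannot leave without passing through the traps again, which only loses assets. I would verify these timing inequalities directly from the given reload values, using $n<T$.
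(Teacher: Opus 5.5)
You have the paper's skeleton: contraposition, a cap of $T+7$ on what one batch can put into the final subtree, and \Cref{lem:unfilled-subtree} forcing all $n(T+7)$ empty nodes to be filled, so no batch may fall short. Your computation for a single opening of the trap path is also correct: $T+6$ sacrifices, then $T+7$ followers inside the $2T+12$-round window.

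The gap is the step from ``at most $T+7$ per opening'' to ``at most $T+7$ per batch''. The reason you give in Step~2, that reload times are shorter than the inter-batch delay, explains why the path can be reopened in every batch. It does not explain why it cannot be reopened within one batch. The mechanism you propose for excluding a second opening fails, because the spine traps do not close during a batch. Writing $n=2p+1$, their reload time is $(3p+4)T+14p+24\ge 7T+38$. That is far longer than the roughly $3T+20+3n$ rounds in which a batch streams past the spine; your own Step~2 correctly observes that one sacrifice per $v^M_i$ keeps the spine open for the whole batch. Meanwhile $2T+12$ is shorter than that span, so timing alone does not stop the rest of the batch from reaching the rearmed path and trying again.

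What is missing is the head count built into the batch size, which is what the paper's proof uses:
\[
3T+20+3n=(3n+1)+(T+6)+(T+7)+(T+6).
\]
After the $3n+1$ spine sacrifices, at most $3T+19$ assets of a batch reach the final subtree. One opening costs $T+6$ and admits at most $T+7$, which leaves only $T+6$ assets. As the paper notes, the path's traps then rearm one after another, so a second opening again costs $T+6$ sacrifices before anyone gets through. The leftovers therefore add nobody. These $T+6$ leftovers are exactly what three element gadgets with sizes summing to $T$ absorb in \Cref{lem:perbatch-loss-limit}.

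So the batch size matters for the upper bound, not merely for supplying $2T+13$ consecutive assets. Replace your timing argument for ``at most once per batch'' with this counting, and the rest of your proof goes through as in the paper.
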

    \begin{claimproof}
        Consider the contrapositive, that if at some round a number of assets not equal to~$T+7$ enter the \textit{final subtree}, then~$\mathcal{J}$ is a \No-instance of \basicProb.
        
        For each batch, at most~$T+7$ assets can enter the empty nodes of the \textit{final subtree}, since there are~$3T+20+3n$ assets per batch, the number of assets that reach the \textit{final subtree} is~$3T+20+3n-(3n+1)=3T+19$. Since there is a path of~$T+6$ traps on this subtree, these must be deactivated before allowing~$T+7$ assets into the empty nodes. But then the path of traps will reactivated one after another, needing another~$T+6$ assets before allowing any assets to enter the empty nodes.

        Now if at some round~$<T+7$ assets enter this subtree, then the subtree will not be full, resulting in a \No-instance by~\Cref{lem:unfilled-subtree}.
    \end{claimproof}
    \begin{lemma}\label{lem:agents-dont-leave-subtree}
        Any \Yes-instance in which assets have entered subtrees cannot have assets leaving subtrees before the final batch. Nor can it have a \textit{guard vertex} that is activated twice before the final batch.
    \end{lemma}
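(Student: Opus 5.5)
The plan is to argue by contradiction and count assets, reusing the accounting behind \Cref{lem:unfilled-subtree}. Fix a plan witnessing that $\mathcal{J}$ is a \Yes-instance. By \Cref{lem:unfilled-subtree}, during the final batch every non-trap node of every element gadget and of the final subtree must be occupied. The blocking infrastructure must also receive the exact number of consecutive assets demanded by \Cref{obs:boatsThroughSequentialSequence}. I will show that an asset leaving a subtree early, or a guard vertex being triggered twice early, makes this accounting fail.

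\textbf{First assertion (no early exits).} Suppose some asset leaves an element gadget or the final subtree before the final batch. To exit an element gadget it must step onto the guard vertex, and only two outcomes are possible.
\begin{enumerate*}[label=(\roman*)]
\item If the guard is active, the asset is destroyed and the storage slot it vacated is lost.
\item If the guard is inactive, some other asset was destroyed there shortly before.
\end{enumerate*}
After leaving, the asset reaches the path $v^M_1,\dots,v^M_{3n+1}$. There it either proceeds into the blocking infrastructure or re-enters a subtree.

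If it proceeds into the blocking infrastructure before the final batch, it arrives isolated. This is because the limiting infrastructure separates batches by a delay of $100n^5T^5+100nT$ rounds, which is far longer than any reload time on the path $v^M$ or in the blocking infrastructure. By \Cref{obs:boatsThroughSequentialSequence}, an isolated asset cannot pass, so it is simply lost.

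If it re-enters a subtree, it must pay again at a guard vertex or at the $T+6$ traps of the final subtree.

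In every case the total number of assets that can be present at the final batch (final batch plus stored assets) strictly drops. A drop of even one asset brings the count to at most $(n+3)T+19+n(T+7)-(T+6)\lceil\frac{n(T+7)}{2T+13}\rceil-1$ arriving at the blocking infrastructure. That count is exactly the computation in \Cref{lem:unfilled-subtree}, and it yields zero survivors.

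To make the drop rigorous, I will use a global budget. The $n+1$ batches deliver exactly $(n+1)(3T+20+3n)$ assets through the limiting infrastructure; this uses \Cref{lem:enter-final-subtree} for the $T+7$ per batch into the final subtree. I will then show that the minimal-loss plan destroys exactly $3n+1$ assets per batch on $v^M$, one per guard per stored group, and $T+6$ per final-subtree entry. Any extra exit or re-entry adds at least one destroyed asset to this tally.

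\textbf{Second assertion (no double guard triggers).} If a guard vertex is triggered twice before the final batch, then two assets are destroyed at it. The gadget holds only $s(y)+1$ slots, and the first trigger already lets at most $s(y)+1$ assets in, which fills it. So the second trigger either admits no new asset or requires a slot to be freed by an exit, which the first assertion forbids. The second sacrifice is therefore a pure loss of one asset, and the same counting contradiction follows.

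\textbf{Expected main obstacle.} The hardest step is the exact budget claim: that any deviation costs at least one asset relative to the intended plan. The difficulty is that traps on $v^M$ have long reload times, so assets of one batch might reuse deactivations in clever interleavings. I would handle this by bounding, per batch, the destroyed count from below via the reload-time argument of \Cref{prop:star-case} applied to each trap, combined with the large inter-batch delay that forces every trap to be reactive at the start of each batch.
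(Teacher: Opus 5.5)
This is essentially the paper's own argument. Its proof is a one-line version of your contradiction: an early exit or a repeated guard trigger destroys more assets than the tight budget behind Lemma~\ref{lem:unfilled-subtree} permits. The paper leaves the exact per-batch budget just as unproved as you do, so your accounting is, if anything, more explicit.

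One inference needs repair. The first trigger need not fill the gadget, but each sacrifice on a guard of reload time $s(y)+1$ buys at most $s(y)+1$ safe steps onto it, which is exactly the gadget's capacity. Hence every extra trigger, and every outward crossing (including one through an already open guard, which your case analysis does not charge), costs one asset beyond the budget. This is precisely the per-trap bound from the star-case argument that you already plan to use.
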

    \begin{claimproof}
        Assume there is some asset who either leaves the subtrees before the final batch or some \textit{guard vertex} is deactivated more than twice. Incorporating this into the lower bound, we find that we lose more assets than allowed from the~$nT$ term in the required~$(n+3)T+18+n(T+7)-(T+6)\lceil\frac{n(T+7)}{2T+13}\rceil$ assets to pass.
    \end{claimproof}
    \begin{lemma}\label{lem:perbatch-loss-limit}
        Each batch that passes through the \textit{limiting infrastructure} can lose at most~$T+10+3n$ before exiting during the final batch
    \end{lemma}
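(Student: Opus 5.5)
Lemma~\ref{lem:perbatch-loss-limit} follows from a global counting argument: fix a solution plan with one surviving asset, add up the losses across the $n+1$ batches, and compare the total with what the \textit{blocking infrastructure} needs. By Observation~\ref{obs:boatsThroughSequentialSequence}, a single asset gets through the blocking infrastructure only if
\[
L \coloneq (n+3)T+19+n(T+7)-(T+6)\left\lceil\tfrac{n(T+7)}{2T+13}\right\rceil
\]
assets reach it one-by-one in consecutive rounds. By Lemmas~\ref{lem:unfilled-subtree} and~\ref{lem:agents-dont-leave-subtree}, this consecutive run must be formed during the final batch. It consists of the final batch itself together with everything stored in the subtrees.

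First I fix the unavoidable losses of the final batch. It loses $3n+1$ assets on the middle path $v^M_1,\ldots,v^M_{3n+1}$, since those traps have a huge reload time and are active again by then. It loses nothing further before the blocking infrastructure, so it contributes $3T+19$ assets to the run. Next I count what the stored assets can contribute at most. Each element gadget releases $s(y)$ assets, because its guard costs one. Summing over all gadgets gives $nT$. The final subtree releases $n(T+7)-(T+6)\lceil n(T+7)/(2T+13)\rceil$ assets, by the group-of-$(2T+13)$ computation from Lemma~\ref{lem:unfilled-subtree}. Adding these, the maximum supply is exactly $L$, with zero slack.

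Then I run the contrapositive. Suppose some non-final batch loses more than $T+10+3n$ assets before the exit during the final batch. Batch $i$ enters the middle path with $3T+20+3n$ assets. It has to pay $3n+1$ on the middle traps and $T+6$ on the final-subtree trap path (Lemma~\ref{lem:enter-final-subtree}); with the unavoidable guard triggers, this fixed cost is at most $T+10+3n$. Exceeding it means some asset of the batch never ends up stored in an empty node. By Lemma~\ref{lem:agents-dont-leave-subtree}, stored assets cannot be moved later to compensate, and the later batches have fixed capacity. So at least one empty node of some element gadget or of the final subtree is unfilled when the final batch arrives. Lemma~\ref{lem:unfilled-subtree} then makes $\mathcal{J}$ a \No-instance, a contradiction.

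The main obstacle is the exact accounting of the constant $T+10+3n$. I have to show that batch $i$ really needs at most that many sacrifices: $3n+1$ on $v^M$, $T+6$ on the final subtree, and a small number of guard triggers. This requires the gadget entries of a batch to be amortized correctly. The first guard sacrifice in each element gadget has to be charged to the batch that opens that gadget, and the remaining surplus of $3$ has to be checked to match the sizes $T/4<s(y)<T/2$, i.e., three gadgets per batch. I would do this bookkeeping first, by tracking the timeline of one batch along the middle path.
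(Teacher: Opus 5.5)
Your contrapositive rests on the phrase ``the later batches have fixed capacity'', and that is the actual content of the lemma. You do not prove it. If batch $i$ loses more than $T+10+3n$ assets, it stores fewer than $2T+10$. The subtrees have exactly $\sum_y(s(y)+1)+n(T+7)=n(2T+10)$ empty nodes. So an unfilled node is forced only if no \emph{other} batch, earlier or later, can store more than $2T+10$. Equivalently, every batch must lose \emph{at least} $T+10+3n$. Your argument never establishes this lower bound. The zero-slack supply computation is essentially the content of Lemma~\ref{lem:unfilled-subtree}, which you invoke anyway, and it says nothing about how losses are distributed among batches. Your last paragraph also aims the bookkeeping the wrong way. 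Showing that a batch ``really needs at most'' $T+10+3n$ sacrifices (your ``fixed cost is at most'') is achievability, which belongs to the $(\leftarrow)$ direction. Here you need that no batch can get away with fewer.

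The missing step is the per-batch count, which is the paper's entire proof.
\begin{itemize}
\item By Lemma~\ref{lem:enter-final-subtree}, each non-final batch pays $3n+1$ on $v^M_1,\ldots,v^M_{3n+1}$ and $T+6$ on the final subtree's trap path, and stores $T+7$ there. That leaves at most $T+6$ assets for element gadgets.
\item By Lemma~\ref{lem:agents-dont-leave-subtree}, each guard fires once, so an entered gadget must be filled completely in one go; otherwise Lemma~\ref{lem:unfilled-subtree} applies.
\item Entering $g$ gadgets therefore costs $g$ guard sacrifices and stores $g+\sum_j s(y_j)$, using $2g+\sum_j s(y_j)\le T+6$ assets.
\end{itemize}
Now apply $T/4<s(y)<T/2$:
\begin{itemize}
\item $g\ge 4$ is impossible, since $2g+\sum_j s(y_j)>T+8$.
\item $g\le 2$ stores fewer than $T+2$.
\item $g=3$ stores at most $T+3$, with equality iff $\sum_j s(y_j)=T$.
\end{itemize}
Hence every batch stores at most $2T+10$, i.e., loses at least $T+10+3n$. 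Since $n$ batches of $3T+20+3n$ assets fill all $n(2T+10)$ nodes, each batch loses exactly $T+10+3n$. With this inserted, your global argument goes through; without it, the key inference is unsupported.
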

    \begin{claimproof}
        There are~$n$ batches before the final batch and~$(3T+20+3n)$ assets per batch. We lose~$3n+1$ assets activating all~$v_i^M$ and~$T+6$ assets activating the \textit{final subtree}. Another~$T+7$ assets enter the \textit{final subtree}. So we only have~$T+6$ assets remaining which must enter the \textit{element gadgets}. We have~$3n$ \textit{element gadgets}. By the restriction~$T/4<s(y)<T/2$, we must choose exactly 3 to enter, otherwise for all the subtrees to be full a trap must be activated more than twice, which doesn't happen according to~\Cref{lem:agents-dont-leave-subtree}.
    \end{claimproof}

    \begin{lemma}\label{lem:enter-element-gadget}
         If~$\mathcal{J}$ is a \Yes-instance then all assets enter
         \textit{element gadgets} before the next batch arrives
    \end{lemma}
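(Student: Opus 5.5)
I will argue by contraposition, in the same style as the neighbouring lemmas. Suppose that in a solution some asset of a non-final batch has not entered an element gadget (nor the final subtree, nor been destroyed) by the time the next batch emerges from the \textit{limiting infrastructure}. I will show that the number of assets reaching the \textit{blocking infrastructure} in one consecutive run is then strictly below $(n+3)T+19+n(T+7)-(T+6)\lceil\frac{n(T+7)}{2T+13}\rceil$. By \Cref{obs:boatsThroughSequentialSequence}, no asset then survives, which contradicts $\goal=1$.

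The first step is a timing argument. Consecutive batches are separated by a delay of at least $100n^5T^5+100nT$ rounds. This is the length of the limiting destruction gadget; its first trap has reload time $100n^5T^5+100nT+(3T+19+3n)$, so by \Cref{obs:boatsThroughSequentialSequence} the gadget can next be passed only after that reload. During this delay, every trap on the spine $v^M_1,\dots,v^M_{3n+1}$ and every guard vertex reactivates, since all of their reload times are polynomially smaller than the delay. An asset of batch $b$ that is still on the spine, or waiting before a guard vertex, when batch $b+1$ arrives therefore sits on or next to reactivated traps. It has only two options. It can stay put; but a spine vertex is a trap, so it is destroyed once the trap reloads, and waiting at $v^M_{3n+1}$ or in front of the blocking infrastructure does not help either. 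Alternatively, it can later enter a gadget or the blocking infrastructure, which costs an extra deactivation. I will make this precise by listing the only non-trap vertices of $G$ reachable from the spine: the empty vertices of the element gadgets and of the final subtree. Every such vertex lies behind a trap, so an asset that is not inside a gadget at the end of its batch's window must be on a trap vertex and is destroyed.

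The second step is the counting. By \Cref{lem:perbatch-loss-limit}, each non-final batch may lose at most $T+10+3n$ assets before exiting during the final batch. \Cref{lem:enter-final-subtree} fixes $T+6$ losses plus $T+7$ entries for the final subtree. The spine costs $3n+1$ losses. The remaining $T+6$ assets must fill exactly three element gadgets, each of which takes its guard trap plus $s(y)+1$ stored assets. This is where the $3$-partition constraint $T/4<s(y)<T/2$ enters: the $T+6$ assets must split exactly as three guard sacrifices plus $T+3$ stored assets. An asset left over, whether destroyed by a reloaded trap or forced to re-deactivate a guard, consumes budget that cannot be recovered. By \Cref{lem:agents-dont-leave-subtree}, no guard may be deactivated twice and no asset may leave a subtree early. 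So some gadget remains unfilled at the final batch, and \Cref{lem:unfilled-subtree} gives a \No-instance.

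The main obstacle is the first step: ruling out rigorously that an asset "lingers" between batches without cost. Examples are an asset parked on $v^M_{3n+1}$ just when a neighbouring trap is inactive, or an asset that goes back up into the limiting infrastructure. I would handle this by showing that every vertex outside the gadget interiors is a trap whose reload time is far below the inter-batch delay, so any asset there at the moment the next batch arrives has been destroyed. Each destruction then counts as an extra loss against the tight budget of \Cref{lem:perbatch-loss-limit}.
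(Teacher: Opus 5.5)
Your proposal is correct and takes essentially the same route as the paper. The paper's proof is a one-line version of your two steps: the spine and guard traps reload long before the next batch arrives, so an asset not parked inside a subtree is lost, and that loss breaks the tight per-batch budget of \Cref{lem:perbatch-loss-limit}.

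One correction to your ``main obstacle'' paragraph. It is not true that every vertex outside the gadget interiors has reload time far below the inter-batch delay: the early traps of the limiting infrastructure have reload times up to $100n^5T^5+100nT+3T+19+3n$, more than the $100n^5T^5+100nT$-round gap you use. So the case of an asset walking back into that infrastructure needs a separate argument. For example, those traps were triggered in order by the batch's own sacrifices and reactivate in the same order, so a returning asset is destroyed or pushed back out long before the next batch emerges. The paper does not treat this case at all.
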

    \begin{claimproof}
        Since~$v_i^M$ is much shorter than the delay between batches, the assets must enter subtrees to survive till the next batch, otherwise we lose more assets than allowed for some batch.
    \end{claimproof}
    
    Notice the general idea is to set it up so that even an additional loss of one asset in aggregate, or if one of the subtrees is not filled will result in~$0$ assets being able to pass.

    The above lemmas show that all the assets are only pushed through when the final batch of assets arrives.

   $(\rightarrow)$ If~$\mathcal{J}$ is a \Yes-instance, then~$\mathcal{P}$ is
   also a \Yes-instance. If we have a \Yes-instance, then all the non-trap nodes
   in the subtrees are filled by contraposition of~\Cref{lem:unfilled-subtree}.
   For each batch before the final batch,~$T+7$ assets will enter the
   \textit{final subtree} by~\Cref{lem:enter-final-subtree}. This will
   deactivate all~$v_i^M$. From this, we always choose 3 subtrees as discussed
   in~\Cref{lem:perbatch-loss-limit} and~\Cref{lem:enter-element-gadget}. So,
   for each batch, the 3 subtrees that are chosen will correspond to a~$Y_i$
   in~$\mathcal{P}$. Since these assets will only leave during the \textit{final
   batch} by~\Cref{lem:agents-dont-leave-subtree}, no element~$y\in Y$ can be
   chosen more than once and the union of~$Y_i$ is~$Y$. And since this applies
   for each batch, we have a \Yes-instance to~$\mathcal{P}$.
    
   $(\leftarrow)$ If~$\mathcal{P}$ is a \Yes-instance to \threePart,
   then~$\mathcal{J}$ is also a \Yes-instance. For each batch, we can enter the
   3 \textit{element gadgets} corresponding to the 3 chosen elements in
   \threePart. Since we have a \Yes-instance to \threePart, by the final batch
   all the subtrees will be filled by this plan. Then, in total we
   have~$(n+3)T+18+n(T+7)-(T+6)\lceil\frac{n(T+7)}{2T+13}\rceil$ assets that
   reach the \textit{blocking infrastructure}, which gives us one successful
   asset.
 \end{proof}

\restatetheorem{\thmnoapx*}
\begin{proof}
  \backtotheorem{thm:noapx}%
  Consider any instance~$\mathcal{P}$ of \threePart and construct an instance
  of \optBasicProb~$\mathcal{J}$ to be the corresponding instance
  of~$\mathcal{P}$ described in~\Cref{thm:np-hardness-trees}. Now assume that
  there is a constant factor approximation algorithm that runs in polynomial
  time. Since~$\goal=1$ and the assets are discrete, any constant factor
  approximation algorithm would have to result in at least~$1$ successful
  asset, including on~$\mathcal{J}$. But this would solve \threePart in
  polynomial time, which is a contradiction unless~$P=NP$.
\end{proof}

\section{\APXh{}ness of \optBasicProb}\label{app:apx-hardness}

Applying a slightly different reduction, we show \optBasicProb{} is generally
\APXh{}. However, our proof (inspired by an \APXhness{} proof
by~\citet{Viglietta2015}) drops the structural restrictions that the topology is
a tree of maximum degree~$3$.
\begin{theorem}\label{thm:apx-hardness}
    \optBasicProb is \APXh.
\end{theorem}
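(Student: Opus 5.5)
We prove \APXhness{} by an \LRed from \maxThreeSatThree{}: maximum satisfiability of 3-CNF formulas in which every variable occurs in at most three clauses. This problem is \APXh, and the number of clauses $M$ and variables $N$ satisfy $N=\Theta(M)$. Since the topology may now be a general graph, the construction is modelled on the gadgets of \Cref{thm:traps-limited-hardness}.

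\textbf{Construction.} Given a formula $\varphi$, we build an instance $\mathcal{J}_\varphi$ of \optBasicProb{} as follows.
\begin{enumerate*}[label=(\roman*)]
\item A source stage uses batch gadgets $B(x)$ fed from $\sV$ to release a fixed number of assets in controlled batches.
\item For every variable $x_i$ there is a choice structure: the batch for $x_i$ must be routed either through a ``true'' branch or through a ``false'' branch. Each branch consists of trap vertices of reload time $1$ or $2$ leading to storage (non-trap) vertices.
\item Assets parked in the storage vertices of the chosen branch deactivate, at the right moments, the traps on the literal-occurrence vertices adjacent to that branch. Each variable has at most three occurrences, so this needs only constant-size gadgets.
\item For every clause $C_j$ there is a clause path to $\tV$. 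It is guarded by one trap per literal occurrence, and each such trap is adjacent to the occurrence vertices of its literal.
\end{enumerate*}
The intended behaviour is that a clause gadget sends exactly one asset to $\tV$ if and only if at least one of its literals was chosen true. In addition, a fixed number $\beta M$ of ``baseline'' assets always survive through dedicated paths. We keep this baseline linear in $M$, so that the optimum stays within a constant factor of $\mathrm{OPT}(\varphi)$.

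\textbf{L-reduction conditions.}
\begin{enumerate*}[label=(\arabic*)]
\item \emph{Completeness.} An assignment satisfying $q$ clauses yields a plan with $\beta M+q$ survivors: the batches are routed along the chosen branches, and the run-wait-sacrifice schedule is used inside each gadget. Since $\mathrm{OPT}(\varphi)\ge M/2$, this gives $\mathrm{OPT}(\mathcal{J}_\varphi)\le(2\beta+1)\,\mathrm{OPT}(\varphi)$, which is the first L-condition.
\item \emph{Soundness.} From any plan $\plan$ with $\beta M+q$ survivors we extract, in polynomial time, an assignment satisfying at least $q$ clauses. This gives the second condition with constant $1$.
\end{enumerate*}

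\textbf{Soundness in detail.} We use the reload accounting already used in \Cref{thm:traps-limited-hardness}. Passing $x$ assets through $B(x)$ costs at least $x+2$ assets, and a trap of reload time $c$ lets at most $c$ assets through per sacrifice. Using \Cref{thm:shortPlans}, we may take $\plan$ to have polynomial length. We then normalise $\plan$ so that each variable batch commits to a single branch. If a batch splits between the two branches, the additional sacrifices this costs outweigh the at most three clauses (each worth one survivor) that the variable's occurrences could help. Hence moving all of that batch to whichever branch serves more surviving clauses does not decrease the number of survivors, by an exchange argument in the style of \Cref{clm:path:poly:move}. After normalisation, the truth value of $x_i$ is read off from the branch its batch uses. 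We then argue that a clause gadget can deliver a survivor only if some adjacent occurrence trap was deactivated by assets parked in a true literal's storage. Any other route costs strictly more sacrifices than it returns.

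\textbf{Main obstacle.} The hardest part is excluding ``cheating'' plans in a general graph: assets wandering between gadgets, or sacrificing extra assets to open a clause path without a true literal. Unlike in the tree reduction, we cannot use exact global counts. Instead, the budget must be local, with constant slack per gadget, so that every deviation costs at least as many survivors as it gains. The plan is to separate gadgets by traps of reload time $1$, so that every transit between gadgets costs one asset, and to tune the batch sizes so that each clause gadget is saturated exactly by one sacrifice from a true literal. We would first verify this local counting on a single variable–clause pair, and then argue that the contributions of all gadgets add up.
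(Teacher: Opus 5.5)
Your framework (an L-reduction from Max 3SAT-3 with variable and clause gadgets) matches the paper's. However, there is a genuine gap: what you have written outlines a reduction but does not give one. No reload times, batch sizes or clause-gadget wiring are fixed. The soundness direction, which is the entire content of an L-reduction here, is asserted rather than proved, and your last paragraph concedes that the local counting is still to be verified.

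This is exactly where the model is difficult. A trap never forbids passage; it only prices it. One sacrifice on a trap of reload time $c$ buys up to $c$ passages, the trap can be reopened as long as assets keep arriving, and a sacrifice costs a \emph{survivor} only if the sacrificed asset would otherwise have reached the target. Consequently, none of the properties your soundness needs follows from ``guarding'' by traps of reload time $1$ or $2$:
(a) a clause gadget delivers at most one survivor, even when two or three of its literals are true or when a guarding trap is reopened later by further assets;
(b) it delivers none unless a literal true under the extracted assignment was used, which requires that a variable gadget cannot serve its true branch and then, after the traps have reloaded, its false branch;
(c) the baseline routes carry at most $\beta M$ survivors, and their assets cannot be diverted into clause gadgets.
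For (b), your exchange argument only treats a batch that splits at a single moment, and gives no numbers showing that the extra sacrifices exceed the possible gain. Moreover, the batch behaviour you borrow from the exact-cover reduction is only claimed there for loss-optimal plans, whereas an L-reduction must extract an assignment from every plan. A smaller slip: the first L-condition is an upper bound on $\mathrm{OPT}(\mathcal{J}_\varphi)$, so it follows from soundness, not from completeness; completeness is what the second condition uses.

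The paper obtains these properties through timing rather than local sacrifice counts, and without any baseline. Each clause gadget has a single non-trap storage vertex behind reload-$1$ traps, and the clause gadgets are fed through a long destruction gadget. Every route to $t$ passes a final blocking destruction gadget, which is opened by the assets arriving over a long slowdown path. By the time those assets arrive, every asset not parked on a storage vertex has died and all traps have reset. So each clause contributes at most one survivor, and only if its storage vertex was reached via a chosen literal. Reload times of order $V^3$ and $V^4+V^3$ in the variable gadgets are used to rule out serving both literals of a variable, or rerouting assets between variable gadgets. This gives $\mathrm{OPT}(\mathcal{J})=\mathrm{OPT}(\varphi)$, so both L-constants equal $1$.

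To complete your proof you need a comparable device that caps each clause at one survivor and forces one-shot, one-branch use of each variable gadget. You also need an explicit count, valid for arbitrary plans, showing that no ``free'' surplus assets are available to pay for deviations.
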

\begin{proof}
  For the sake of clarity and completeness, let us start with a formal
  definition of~$\optBasicProb$.

  \defProblemQuestion{\optBasicProb}%
  {An undirected graph~$G = (V,E)$, an initial and target vertex~$\sV,\tV \in
   V$, a set of traps~$\traps$, a reload time function~$\reloadFn\colon \traps
   \to \naturals$, and a set of assets~$\agents$.}%
  {What is the maximum number of successful assets?}

  In our proof, we will reduce from \maxThreeSatThree{}, which is \APXh{} (for a
  detailed and accessible treatment of \maxThreeSatThree{}, see
  a draft of the book by~\citet[see
  p.~226,228]{erikd.demaineComputationalIntractabilityGuide2025}). Here, given a
  constrained SAT formula $\varphi$ where every clause has at most $3$ literals
  and every variable occurs at most $3$~times, we ask for the maximum number of
  clauses that can be satisfied by an assignment.

  Let~$\mathcal{P}$ be an instance of \maxThreeSatThree and~$\mathcal{J} =
  (G,\sV,\tV,\traps,\reloadFn,\agents,\goal)$ be an instance of \optBasicProb.
  We show an \LRed from~$\mathcal{P}$ to~$\mathcal{J}$. Let the
  number of variables in~$\mathcal{P}$ be~$V$ and the number of clauses
  be~$C$. Note that we set~$V\geq 10$ so we can easily think about the construction in
  terms of degree.
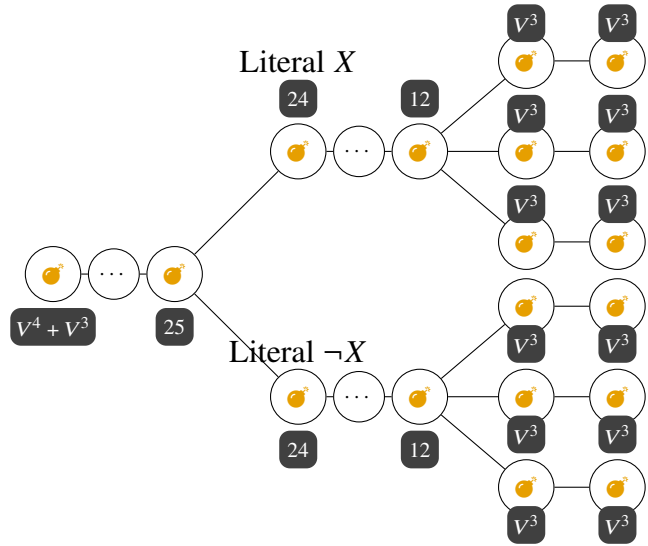
\begin{figure}
  \centering
    \begin{tikzpicture}[node distance=0.8cm,minimum size=0.5cm,font=\footnotesize]%
    \node[right of=s,circle, draw] (ds1x) {{\textcolor{cbOrange}{\faIcon{bomb}}}};
    \node[reload,below of=ds1x,node distance=0.7cm] {$V^4+V^3$};
    
    \node[right of=ds1x,circle, draw] (ds2) {\dots};

    \node[right of=ds2,circle, draw] (dsEnd) {{\textcolor{cbOrange}{\faIcon{bomb}}}};
    \node[reload,below of=dsEnd,node distance=0.7cm] {$25$};

    \node[above right of=dsEnd,circle,node distance=2.3cm,draw] (a1) {{\textcolor{cbOrange}{\faIcon{bomb}}}};
    \node[reload,above of=a1,node distance=0.7cm] (loc) {$24$};

    \node[above of=loc,node distance=0.5cm, font=\large] (literal) {Literal~$X$};

    \node[right of=a1,circle,draw] (a2) {\dots};

    \node[right of=a2,circle,draw] (a3) {{\textcolor{cbOrange}{\faIcon{bomb}}}};
    \node[reload,above of=a3,node distance=0.7cm] {$12$};

    \node[right of=a3,circle,node distance=1.4cm,draw] (va1) {{\textcolor{cbOrange}{\faIcon{bomb}}}};
    \node[above of=va1,node distance=0.5cm,reload] {$V^3$};

    \node[above of=va1,circle,node distance=1.2cm,draw] (va2) {{\textcolor{cbOrange}{\faIcon{bomb}}}};
    \node[above of=va2,node distance=0.5cm,reload] {$V^3$};

    \node[below of=va1,circle,node distance=1.2cm,draw] (va3) {{\textcolor{cbOrange}{\faIcon{bomb}}}};
    \node[above of=va3,node distance=0.5cm,reload] {$V^3$};
    \node[right of=va1,circle,draw, node distance=1.2cm] (ax1) {{\textcolor{cbOrange}{\faIcon{bomb}}}};
    \node[above of=ax1,node distance=0.5cm,reload] {$V^3$};

    \node[right of=va2,circle,draw, node distance=1.2cm] (ax2) {{\textcolor{cbOrange}{\faIcon{bomb}}}};
    \node[above of=ax2,node distance=0.5cm,reload] {$V^3$};

    \node[right of=va3,circle,draw, node distance=1.2cm] (ax3) {{\textcolor{cbOrange}{\faIcon{bomb}}}};
    \node[above of=ax3,node distance=0.5cm,reload] {$V^3$};

    \node[below right of=dsEnd,circle,node distance=2.3cm,draw] (b1) {{\textcolor{cbOrange}{\faIcon{bomb}}}};
    \node[reload,below of=b1,node distance=0.7cm] (loc) {$24$};
    \node[above of=loc,node distance=1.3cm, font=\large] (literal) {Literal~$\neg X$};

    \node[right of=b1,circle,draw] (b2) {\dots};

    \node[right of=b2,circle,draw] (b3) {{\textcolor{cbOrange}{\faIcon{bomb}}}};
    \node[reload,below of=b3,node distance=0.7cm] {$12$};
    
    \node[right of=b3,circle,node distance=1.4cm,draw] (vb1) {{\textcolor{cbOrange}{\faIcon{bomb}}}};
    \node[below of=vb1,node distance=0.5cm,reload] {$V^3$};

    \node[above of=vb1,circle,node distance=1.2cm,draw] (vb2) {{\textcolor{cbOrange}{\faIcon{bomb}}}};
    \node[below of=vb2,node distance=0.5cm,reload] {$V^3$};

    \node[below of=vb1,circle,node distance=1.2cm,draw] (vb3) {{\textcolor{cbOrange}{\faIcon{bomb}}}};
    \node[below of=vb3,node distance=0.5cm,reload] {$V^3$};

    \node[right of=vb1,circle,draw, node distance=1.2cm] (bx1) {{\textcolor{cbOrange}{\faIcon{bomb}}}};
    \node[below of=bx1,node distance=0.5cm,reload] {$V^3$};

    \node[right of=vb2,circle,draw, node distance=1.2cm] (bx2) {{\textcolor{cbOrange}{\faIcon{bomb}}}};
    \node[below of=bx2,node distance=0.5cm,reload] {$V^3$};

    \node[right of=vb3,circle,draw, node distance=1.2cm] (bx3) {{\textcolor{cbOrange}{\faIcon{bomb}}}};
    \node[below of=bx3,node distance=0.5cm,reload] {$V^3$};

    \draw (ds1x) -- (ds2);
    \draw (ds2) -- (dsEnd);
    \draw (dsEnd) -- (a1);
    \draw (dsEnd) -- (b1);
    \draw (a1) -- (a2);
    \draw (a2) -- (a3);
    \draw (b1) -- (b2);
    \draw (b2) -- (b3);
    \draw (a3) -- (va1);
    \draw (a3) -- (va2);
    \draw (a3) -- (va3);
    \draw (b3) -- (vb1);
    \draw (b3) -- (vb2);
    \draw (b3) -- (vb3);
    \draw (va1) -- (ax1);
    \draw (va2) -- (ax2);
    \draw (va3) -- (ax3);
    \draw (vb1) -- (bx1);
    \draw (vb2) -- (bx2);
    \draw (vb3) -- (bx3);
    \end{tikzpicture}
  \caption{Variable Gadget for APX-hardness reduction. }
  \label{fig:variable-gadget}
\end{figure}
    
\begin{figure}
  \centering
    \begin{tikzpicture}[node distance=1.5cm,minimum size=0.5cm, font=\footnotesize]
    \node[circle, draw] (t1) {{\textcolor{cbOrange}{\faIcon{bomb}}}};
    \node[below of=t1,node distance=0.7cm,reload] {$4C$};
    
    \node[right of=t1,circle, draw] (t2) {{\textcolor{cbOrange}{\faIcon{bomb}}}};
    \node[below of=t2,node distance=0.7cm,reload] {$4C$};

    \node[right of=t2,circle, draw] (t3) {{\textcolor{cbOrange}{\faIcon{bomb}}}};
    \node[below of=t3,node distance=0.7cm,reload] {$1$};

    \node[right of=t3,circle, draw] (t4) {{\textcolor{cbOrange}{\faIcon{bomb}}}};
    \node[below of=t4,node distance=0.7cm,reload] {$1$};

    \node[above of=t4,circle, draw] (tv1) {{\textcolor{cbOrange}{\faIcon{bomb}}}};
    \node[right of=tv1,node distance=0.7cm,reload] {$1$};
    
    \node[above of=tv1,rectangle, draw] (tv2) {Literal 2};

    \node[right of=tv2,rectangle, draw] (tv3) {Literal 3};

    \node[left of=tv2,rectangle, draw] (tv4) {Literal 1};

    \node[right of=t4,circle, draw] (t5) {};

    \draw (t1) -- (t2);
    \draw (t2) -- (t3);
    \draw (t3) -- (t4);
    \draw (t4) -- (tv1);
    \draw (tv1) -- (tv2);
    \draw (tv1) -- (tv3);
    \draw (tv1) -- (tv4);
    \draw (t4) -- (t5);
    
    \end{tikzpicture}
  \caption{Clause Gadget for APX-hardness reduction.}
  \label{fig:clause-gadget}
\end{figure}
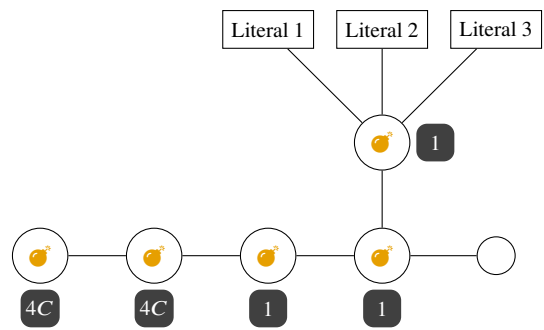
\begin{figure*}
  \centering
    \begin{tikzpicture}[node distance=0.7cm,minimum size=0.3cm, font=\tiny]
    \node[very thick,circle, draw] (s) {$s$};

    \node[right of=s,circle, draw, node distance=10cm] (fin1) {{\textcolor{cbOrange}{\faIcon{bomb}}}};
    \node[above of=fin1,node distance=0.5cm,reload] {$3V+C-1$};

    \node[above right of=s,circle, draw,node distance=1.5cm] (ds1x) {{\textcolor{cbOrange}{\faIcon{bomb}}}};
    \node[above of=ds1x,node distance=0.5cm,reload] {$V^4+V^3$};
    \draw (s) -- (ds1x);
    \node[right of=ds1x,circle, draw] (ds2) {\dots};

    \node[right of=ds2,circle, draw] (dsEnd) {{\textcolor{cbOrange}{\faIcon{bomb}}}};
    \node[above of=dsEnd,node distance=0.5cm,reload] {$25$};

    \node[above right= 1cm and 0.5cm of dsEnd,circle,draw] (a1) {{\textcolor{cbOrange}{\faIcon{bomb}}}};
    \node[above of=a1,node distance=0.5cm,reload] (loc) {$24$};
    \node[above of=loc,node distance=0.5cm, font=\normalsize] (literal) {$X_3$};

    \node[right of=a1,circle,draw] (a2) {\dots};

    \node[right of=a2,circle,draw] (a3) {{\textcolor{cbOrange}{\faIcon{bomb}}}};
    \node[above of=a3,node distance=0.5cm,reload] {$12$};

    \node[right of=a3,circle,node distance=1cm,draw] (va1) {{\textcolor{cbOrange}{\faIcon{bomb}}}};
    \node[above of=va1,node distance=0.4cm,reload] {$V^3$};

    \node[above of=va1,circle,node distance=1cm,draw] (va2) {{\textcolor{cbOrange}{\faIcon{bomb}}}};
    \node[above of=va2,node distance=0.4cm,reload] {$V^3$};

    \node[below of=va1,circle,node distance=1cm,draw] (va3) {{\textcolor{cbOrange}{\faIcon{bomb}}}};
    \node[above of=va3,node distance=0.4cm,reload] {$V^3$};

    \node[right of=va1,circle,draw, node distance=1cm] (ax11) {{\textcolor{cbOrange}{\faIcon{bomb}}}};
    \node[above of=ax11,node distance=0.4cm,reload] {$V^3$};

    \node[right of=va2,circle,draw, node distance=1cm] (ax21) {{\textcolor{cbOrange}{\faIcon{bomb}}}};
    \node[above of=ax21,node distance=0.4cm,reload] {$V^3$};

    \node[right of=va3,circle,draw, node distance=1cm] (ax31) {{\textcolor{cbOrange}{\faIcon{bomb}}}};
    \node[above of=ax31,node distance=0.4cm,reload] {$V^3$};

    \node[below right= 1cm and .5cm of dsEnd,circle,draw] (b1) {{\textcolor{cbOrange}{\faIcon{bomb}}}};
    \node[below of=b1,node distance=0.5cm,reload] (loc) {$24$};
    \node[above of=loc,node distance=1.2cm, font=\normalsize] (literal) {$\neg X_3$};

    \node[right of=b1,circle,draw] (b2) {\dots};

    \node[right of=b2,circle,draw] (b3) {{\textcolor{cbOrange}{\faIcon{bomb}}}};
    \node[below of=b3,node distance=0.5cm,reload] {$12$};
    
    \node[right of=b3,circle,node distance=1cm,draw] (vb1) {{\textcolor{cbOrange}{\faIcon{bomb}}}};
    \node[below of=vb1,node distance=0.4cm,reload] {$V^3$};

    \node[above of=vb1,circle,node distance=1cm,draw] (vb2) {{\textcolor{cbOrange}{\faIcon{bomb}}}};
    \node[below of=vb2,node distance=0.4cm,reload] {$V^3$};

    \node[below of=vb1,circle,node distance=1cm,draw] (vb3) {{\textcolor{cbOrange}{\faIcon{bomb}}}};
    \node[below of=vb3,node distance=0.4cm,reload] {$V^3$};

    \node[right of=vb1,circle,draw, node distance=1cm] (bx11) {{\textcolor{cbOrange}{\faIcon{bomb}}}};
    \node[below of=bx11,node distance=0.4cm,reload] {$V^3$};

    \node[right of=vb2,circle,draw, node distance=1cm] (bx21) {{\textcolor{cbOrange}{\faIcon{bomb}}}};
    \node[below of=bx21,node distance=0.4cm,reload] {$V^3$};

    \node[right of=vb3,circle,draw, node distance=1cm] (bx31) {{\textcolor{cbOrange}{\faIcon{bomb}}}};
    \node[below of=bx31,node distance=0.4cm,reload] {$V^3$};

    \draw (ds1x) -- (ds2);
    \draw (ds2) -- (dsEnd);
    \draw (dsEnd) -- (a1);
    \draw (dsEnd) -- (b1);
    \draw (a1) -- (a2);
    \draw (a2) -- (a3);
    \draw (b1) -- (b2);
    \draw (b2) -- (b3);
    \draw (a3) -- (va1);
    \draw (a3) -- (va2);
    \draw (a3) -- (va3);
    \draw (b3) -- (vb1);
    \draw (b3) -- (vb2);
    \draw (b3) -- (vb3);
    \draw (va1) -- (ax11);
    \draw (va2) -- (ax21);
    \draw (va3) -- (ax31);
    \draw (vb1) -- (bx11);
    \draw (vb2) -- (bx21);
    \draw (vb3) -- (bx31);

    \node[above of=ds1x,rectangle, draw,node distance=3.5cm] (ds1) {Var 2};
    \draw (s) -- (ds1);

    \node[above of=ds1,circle, draw,node distance=3.5cm] (ds1) {{\textcolor{cbOrange}{\faIcon{bomb}}}};
    \node[above of=ds1,node distance=0.5cm,reload] {$V^4+V^3$};
    \draw (s) -- (ds1);
    \node[right of=ds1,circle, draw] (ds2) {\dots};

    \node[right of=ds2,circle, draw] (dsEnd) {{\textcolor{cbOrange}{\faIcon{bomb}}}};
    \node[above of=dsEnd,node distance=0.5cm,reload] {$25$};

    \node[above right = 1cm and .5cm of dsEnd,circle,draw] (a1) {{\textcolor{cbOrange}{\faIcon{bomb}}}};
    \node[above of=a1,node distance=0.5cm,reload] (loc) {$24$};
    \node[above of=loc,node distance=0.5cm, font=\normalsize] (literal) {$X_1$};

    \node[right of=a1,circle,draw] (a2) {\dots};

    \node[right of=a2,circle,draw] (a3) {{\textcolor{cbOrange}{\faIcon{bomb}}}};
    \node[above of=a3,node distance=0.5cm,reload] {$12$};

    \node[right of=a3,circle,node distance=1cm,draw] (va1) {{\textcolor{cbOrange}{\faIcon{bomb}}}};
    \node[above of=va1,node distance=0.4cm,reload] {$V^3$};

    \node[above of=va1,circle,node distance=1cm,draw] (va2) {{\textcolor{cbOrange}{\faIcon{bomb}}}};
    \node[above of=va2,node distance=0.4cm,reload] {$V^3$};

    \node[below of=va1,circle,node distance=1cm,draw] (va3) {{\textcolor{cbOrange}{\faIcon{bomb}}}};
    \node[above of=va3,node distance=0.4cm,reload] {$V^3$};

    \node[right of=va1,circle,draw, node distance=1cm] (ax13) {{\textcolor{cbOrange}{\faIcon{bomb}}}};
    \node[above of=ax13,node distance=0.4cm,reload] {$V^3$};

    \node[right of=va2,circle,draw, node distance=1cm] (ax23) {{\textcolor{cbOrange}{\faIcon{bomb}}}};
    \node[above of=ax23,node distance=0.4cm,reload] {$V^3$};

    \node[right of=va3,circle,draw, node distance=1cm] (ax33) {{\textcolor{cbOrange}{\faIcon{bomb}}}};
    \node[above of=ax33,node distance=0.4cm,reload] {$V^3$};

    \node[below right= 1cm and 0.5cm of dsEnd,circle,draw] (b1) {{\textcolor{cbOrange}{\faIcon{bomb}}}};
    \node[below of=b1,node distance=0.5cm,reload] (loc) {$24$};
    \node[above of=loc,node distance=1.2cm, font=\normalsize] {$\neg X_1$};

    \node[right of=b1,circle,draw] (b2) {\dots};

    \node[right of=b2,circle,draw] (b3) {{\textcolor{cbOrange}{\faIcon{bomb}}}};
    \node[below of=b3,node distance=0.5cm,reload] {$12$};
    
    \node[right of=b3,circle,node distance=1cm,draw] (vb1) {{\textcolor{cbOrange}{\faIcon{bomb}}}};
    \node[below of=vb1,node distance=0.4cm,reload] {$V^3$};

    \node[above of=vb1,circle,node distance=1cm,draw] (vb2) {{\textcolor{cbOrange}{\faIcon{bomb}}}};
    \node[below of=vb2,node distance=0.4cm,reload] {$V^3$};

    \node[below of=vb1,circle,node distance=1cm,draw] (vb3) {{\textcolor{cbOrange}{\faIcon{bomb}}}};
    \node[below of=vb3,node distance=0.4cm,reload] {$V^3$};

    \node[right of=vb1,circle,draw, node distance=1cm] (bx13) {{\textcolor{cbOrange}{\faIcon{bomb}}}};
    \node[below of=bx13,node distance=0.4cm,reload] {$V^3$};

    \node[right of=vb2,circle,draw, node distance=1cm] (bx23) {{\textcolor{cbOrange}{\faIcon{bomb}}}};
    \node[below of=bx23,node distance=0.4cm,reload] {$V^3$};

    \node[right of=vb3,circle,draw, node distance=1cm] (bx33) {{\textcolor{cbOrange}{\faIcon{bomb}}}};
    \node[below of=bx33,node distance=0.4cm,reload] {$V^3$};

    \draw (ds1) -- (ds2);
    \draw (ds2) -- (dsEnd);
    \draw (dsEnd) -- (a1);
    \draw (dsEnd) -- (b1);
    \draw (a1) -- (a2);
    \draw (a2) -- (a3);
    \draw (b1) -- (b2);
    \draw (b2) -- (b3);
    \draw (a3) -- (va1);
    \draw (a3) -- (va2);
    \draw (a3) -- (va3);
    \draw (b3) -- (vb1);
    \draw (b3) -- (vb2);
    \draw (b3) -- (vb3);
    \draw (va1) -- (ax13);
    \draw (va2) -- (ax23);
    \draw (va3) -- (ax33);
    \draw (vb1) -- (bx13);
    \draw (vb2) -- (bx23);
    \draw (vb3) -- (bx33);
    
    \draw[decorate,decoration={brace,amplitude=10pt,raise=1cm}]
    (ds1x.west) -- (ds1.west)
    node[midway] (brace) {};
    \node[below of=brace,xshift=-1.7cm,node distance=0cm, rotate=90,
      font=\normalsize] {Variable Gadgets};

    \node[draw, circle, below right = 4cm and 2cm of s] (x1){{\textcolor{cbOrange}{\faIcon{bomb}}}};
    \node[below of=x1,node distance=0.5cm,reload] {$V^6+V^5+3V+4C$};

    \node[draw, circle, right of=x1] (x2){\dots};
    \node[draw, circle, right of=x2] (x3){{\textcolor{cbOrange}{\faIcon{bomb}}}};
    \node[above of=x3,node distance=0.5cm,reload] {$V^3+V^2+3V+4C$};  
    \draw (s) -- (x1);
    \draw (x1) -- (x2);
    \draw (x2) -- (x3);

    \node[draw, circle, below of=x3, node distance=4cm] (z1){{\textcolor{cbOrange}{\faIcon{bomb}}}};
    \node[below of=z1,node distance=0.5cm,reload] {$V^3+V^2+3V$};

    \node[draw, circle, right of=z1, node distance=2.5cm] (z2){\dots};

    \node[draw, circle, right of=z2, node distance=1.6cm] (z3){{\textcolor{cbOrange}{\faIcon{bomb}}}};
    \node[below of=z3,node distance=0.5cm,reload] {$3V$};
    
    \draw (x3) -- (z1);
    \draw (z1) -- (z2);
    \draw (z2) -- (z3);
    \draw (z3) -- (fin1);

    \draw[decorate,decoration={mirror,brace,amplitude=10pt,raise=0.5cm}]
    (z1.south) -- (z3.south)
    node[midway] (brace) {};
    \node[below of=brace,node distance=0cm,yshift=-1cm, font=\normalsize] {Slowdown Path};

    \def\zzag{.2}
    \node[circle, draw, right of=x3, node distance=1.5cm] (tt1) {{\textcolor{cbOrange}{\faIcon{bomb}}}};
    \node[below of=tt1,node distance=0.5cm,reload] {$4C$};
    
    \node[below right= \zzag and \zzag of tt1,circle, draw] (t2) {{\textcolor{cbOrange}{\faIcon{bomb}}}};
    \node[below of=t2,node distance=0.5cm,reload] {$4C$};

    \node[above right= \zzag and \zzag of t2,circle, draw] (t3) {{\textcolor{cbOrange}{\faIcon{bomb}}}};
    \node[below of=t3,node distance=0.5cm,reload] {$1$};

    \node[below right= \zzag and \zzag of t3,circle, draw] (t4) {{\textcolor{cbOrange}{\faIcon{bomb}}}};
    \node[below of=t4,node distance=0.5cm,reload] {$1$};

    \node[above right= 10cm and 1.25cm of t4,circle, draw] (tv11) {{\textcolor{cbOrange}{\faIcon{bomb}}}};
    \node[right of=tv11,node distance=0.5cm,reload] {$1$};
    
    \node[above right= \zzag and {\zzag + .5} cm of t4,circle, draw] (t51) {};
    \draw (x3) -- (tt1);
    \draw (tt1) -- (t2);
    \draw (t2) -- (t3);
    \draw (t3) -- (t4);
    \draw (t4) -- (tv11);
    \draw (t4) -- (t51);
    \draw (fin1.west) -- +(-.6,0) -- (t51);
    \node[circle, draw, above of=tt1, node distance=1.75cm] (t1) {{\textcolor{cbOrange}{\faIcon{bomb}}}};
    \node[below of=t1,node distance=0.5cm,reload] {$4C$};
    
    \node[below right= \zzag and \zzag of t1,circle, draw] (t2) {{\textcolor{cbOrange}{\faIcon{bomb}}}};
    \node[below of=t2,node distance=0.5cm,reload] {$4C$};

    \node[above right= \zzag and \zzag of t2,circle, draw] (t3) {{\textcolor{cbOrange}{\faIcon{bomb}}}};
    \node[below of=t3,node distance=0.5cm,reload] {$1$};

    \node[below right= \zzag and \zzag of t3,circle, draw] (t4) {{\textcolor{cbOrange}{\faIcon{bomb}}}};
    \node[below of=t4,node distance=0.5cm,reload] {$1$};

    \node[above right= 10cm and 1cm of t4,circle, draw] (tv12) {{\textcolor{cbOrange}{\faIcon{bomb}}}};
    \node[right of=tv12,node distance=0.5cm,reload] {$1$};

    \node[above right= \zzag and {\zzag + .5} of t4,circle, draw] (t52) {};
    \draw (x3) -- (t1);
    \draw (t1) -- (t2);
    \draw (t2) -- (t3);
    \draw (t3) -- (t4);
    \draw (t4) -- (tv12);
    \draw (t4) -- (t52);
    \draw (fin1.north west) -- +(-.85,0) -- (t52);
    \node[circle, draw, below of=tt1, node distance=1.75cm] (t1) {{\textcolor{cbOrange}{\faIcon{bomb}}}};
    \node[below of=t1,node distance=0.5cm,reload] {$4C$};
    
    \node[below right= \zzag and \zzag of t1,circle, draw] (t2) {{\textcolor{cbOrange}{\faIcon{bomb}}}};
    \node[below of=t2,node distance=0.5cm,reload] {$4C$};

    \node[above right= \zzag and \zzag of t2,circle, draw] (t3) {{\textcolor{cbOrange}{\faIcon{bomb}}}};
    \node[below of=t3,node distance=0.5cm,reload] {$1$};

    \node[below right= \zzag and \zzag of t3,circle, draw] (t4) {{\textcolor{cbOrange}{\faIcon{bomb}}}};
    \node[below of=t4,node distance=0.5cm,reload] {$1$};

    \node[above right= 10cm and 1.5cm of t4,circle, draw] (tv13) {{\textcolor{cbOrange}{\faIcon{bomb}}}};
    \node[right of=tv13,node distance=0.5cm,reload] {$1$};

    \node[above right= \zzag and  {\zzag + .5} of t4,circle, draw] (t53) {};
    \draw (x3) -- (t1);
    \draw (t1) -- (t2);
    \draw (t2) -- (t3);
    \draw (t3) -- (t4);
    \draw (t4) -- (tv13);
    \draw (t4) -- (t53);
    \draw (fin1.south west) -- +(-0.5, 0) -- (t53);

    \draw[decorate,decoration={mirror,brace,amplitude=10pt,raise=1.25cm}]
    (t53.east) -- (t52.east)
    node[midway] (brace) {};
    \node[below of=brace,xshift=1.9cm,node distance=0cm, font=\normalsize,
      rotate=90] {Clause Gadgets};

    \node[draw, circle, right of=fin1] (fin2){\dots};
    \node[draw, circle, right of=fin2] (fin3){{\textcolor{cbOrange}{\faIcon{bomb}}}};
    \node[above of=fin3,node distance=0.5cm,reload] {$C$};  
    \draw (fin1) -- (fin2);
    \draw (fin2) -- (fin3);
    \node[draw, circle, very thick, right of=fin3](t){t};
    \draw (fin3) -- (t);
    \draw[blue] (tv11) -- (ax11);
    \draw[ blue] (tv11) -- (ax13);
    \draw[red] (tv13) -- (bx13);
    \draw[red] (tv13) -- (bx11);
    \draw[blue] (tv12) -- (ax23);
    \draw[red] (tv12) -- (bx21);

    \draw[decorate,decoration={mirror,brace,amplitude=10pt}]
    (fin1.south) -- (fin3.south)
    node[midway] (brace) {};
    \node[below of=brace,xshift=.05cm,yshift=-0.75cm,node distance=0cm,
      font=\normalsize] {\parbox{2cm}{\centering Blocking Infrastructure}};

    \end{tikzpicture}
  \caption{Construction for showing \APX-hardness. The formula for this construction is~$(X_1 \vee \neg X_3) \wedge (X_1 \vee X_3)\wedge (\neg X_1 \vee \neg X_3)$. The blue and red lines are used to differentiate positive and negative literals respectively.}
  \label{fig:APX-hardness-construction}
\end{figure*}
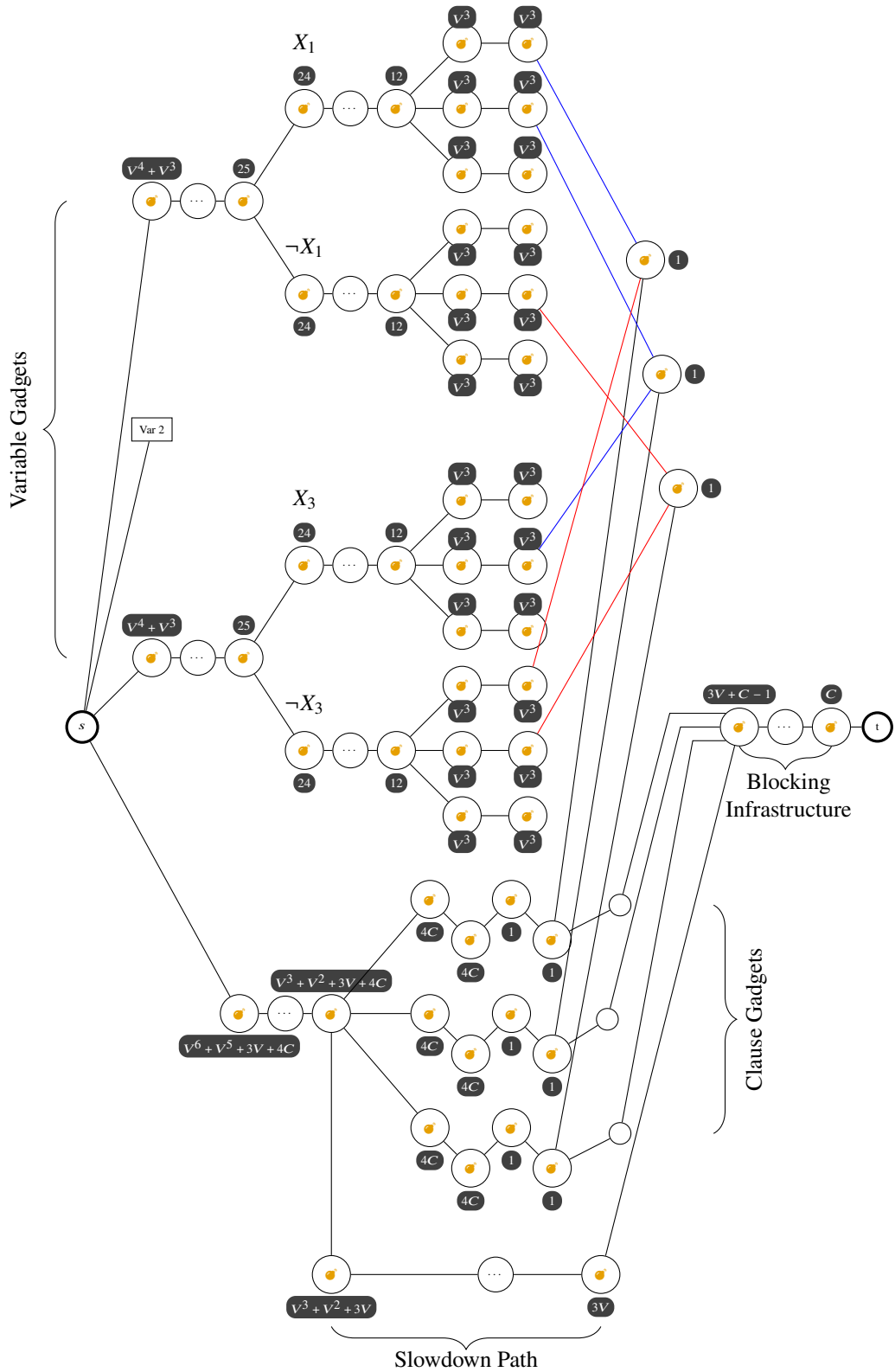
    \textbf{Construction}.
    We will use gadgets defined in~\Cref{thm:np-hardness-trees}. First, we define a few more gadgets and the infrastructure for the reduction
    \begin{enumerate}
        \item The \textit{variable gadget}~$v_{1}^i$ represents the~$i^{th}$ variable. It first consists of a \textit{destruction gadget} defined by starting trap with reload time~$V^4+V^3$ and length~$V^4+V^3-23$. The last trap of this gadget has reload time~$24$. Following this last trap, the path splits into 2 more paths, each a \textit{destruction gadget} starting with reload time 23 and length 12, ending with reload time 12. We call each of these end nodes. Following the end of both these paths are 3 length 2 paths, all with reload times~$V^3$, all connected to the end node. The \textit{variable gadget} is illustrated in~\Cref{fig:variable-gadget}
        \item The start of the \textit{clause gadget} is just 2 traps in a path with reload time~$4C$ in a path, followed by 3 reload time 1 traps also arranged in a path. The second of these is connected to a single empty node. The third of these will later be connected to the literals of the corresponding clause. This is illustrated in~\Cref{fig:clause-gadget}. 
        \item The \textit{blocking infrastructure} is a \textit{destruction gadget} that starts with a trap with a reload time of~$3V+C-1$ and has length~$3V$. This \textit{destruction gadget} ends with a trap with a reload time of~$C$.
        \item The \textit{slowdown path} is a \textit{destruction gadget} starting with reload time~$V^3+V^2+3V$ and ending with reload time~$3V$.
    \end{enumerate}
    Now we construct the full graph~$G$. A complete sample construction is shown in~\Cref{fig:APX-hardness-construction}. 
    \begin{enumerate}
        \item First, we connect a \textit{destruction gadget} with starting reload~$V^6+V^5+3V+4C$ and length~$V^6+V^5-V^3-V^2+1$. Note that it has ending reload time~$V^3+V^2+3V+4C$ to~$\sV$. We call this destruction gadget~$v^L$ and call the last node~$v_a^L$
        \item We also connect all the variable gadgets directly to~$\sV$ by the \textit{starting trap} of the first \textit{destruction gadget} of~$v_{1}^i$.
        \item We connect each of the 3 paths length 2 paths in each \textit{variable gadget} to the last reload time 1 trap of a \textit{clause gadget}. This connection is made for all the corresponding literals of the clause in our original problem that the \textit{clause gadget} represents.
        \item  We then connect to~$v_a^L$ the~$C$ clause gadgets (on the first~$4C$ reload time trap) and the \textit{slowdown path} (on the \textit{starting trap})
        \item The last node of the \textit{slowdown path} and the empty node of all the clause gadgets are connected to the starting trap of the \textit{blocking infrastructure}. 
        \item Finally we set~$\numAgents=V^6+V^5+3V+4C+1+V(V^4+V^3+1)=V^6+2V^5+V^4+4V+4C+1$; Each variable gadget requires~$V^4+V^3+1$ assets, and~$V^6+V^5+3V+4C+1$ assets go through~$v^L$
        \item This construction takes polynomial time in~$C$ and~$V$ because the number of nodes is polynomial in these variables.

    \end{enumerate}

    \textbf{Conception}. Choosing the path to take here corresponds to choosing an assignment for the variable. Note that you never want to split the vertices so that you enter both paths at the same time or reroute assets from some variable gadget to another. These variable gadgets then will allow the empty node in the clause gadgets to enter. But because of the \textit{slowdown path}, the assets have to wait a long time there, by which time all assets not on the empty node will have died and the activated traps reset. This limits each clause gadget to contributing to only a single successful assets. Finally the assets that pass through the \textit{slowdown path} will deactivate all the traps on the \textit{blocking infrastructure}. allowing any assets previously on the empty nodes to pass.
    
    \textbf{Correctness}.
    Now, we move on to the proof. We first prove the following lemmas.
    \begin{lemma}
        The \textit{slowdown path} is deactivated 
    \end{lemma}
    \begin{claimproof}
        Assume the \textit{slowdown path} is not used. We set the \textit{destruction gadget}~$v^L$ to have length~$V^6+V^5-V^3-V^2-1>V(V^4+V^3)+V^3+V^2+3V$, which holds for~$V\geq 3$. This means we cannot reroute the assets to pass~$v^L$ twice in 2 batches. We cannot pass any variable gadgets either as~$V^4+V^3>V^3+V^2+3V$. So we have an additions~$V^3+V^2+3V$ assets which can pass the \textit{clause gadgets}. By a counting argument, since to pass an asset through the clause 2 assets from the \textit{variable gadget} are needed, at most~$3V$ assets can pass from the \textit{clause gadget} to the \textit{blocking infrastructure}. We have~$0$ assets coming from the \textit{slowdown path} since by assumption it is not used. Since we need at least~$3V+1$ assets to arrive at the \textit{blocking infrastructure}, for any asset to reach~$\tV$, we have~$\goal=0$
    \end{claimproof}

    \begin{lemma}
        Some \textit{variable gadget} and some \textit{clause gadget} is used (deactivated)
    \end{lemma}
    \begin{claimproof}
        Assume the \textit{variable gadget} is not used. Then~$0$ assets can pass through the \textit{clause gadget} because of the 2 reload time~$1$ traps. There are also no assets that make it through the \textit{slowdown path}, as at most~$3V$ assets arrive here at a time before a long delay, but there are also~$3V$ traps on the \textit{blocking infrastructure}.

        The proof for the case of no \textit{clause gadget} being deactivated is the same.
    \end{claimproof}
 
    Next, the \textit{variable gadget} can only choose one of either a true assignment or a false assignment. Since if we assume this is not the case, after the choosing of the first assignment, the~$V^3$ trap reload time must be used before the next group of assets arrives for the second assignment. And this means the traps on the \textit{clause gadgets} must also have assets before the reload time of the variable gadget is reset. By the time the next group of assets arrives for the assignment of the variables, the traps of the \textit{clause gadget} will already be reset and there will be no assets that can pass again.
    
    Now, we show that each clause can only contribute one asset, which is stored on the final regular vertex of the \textit{clause gadget}. Assume that we can contribute more than one asset from a clause gadget. Then the assets must pass through the 3 reload time 1 traps. But afterwards, because of the very long \textit{slowdown path} we must wait first until the assets on that path are sufficiently close. But then the trap will reactivate and kill one of the assets, since there is no place to store them. Neither can we wait at the start of the clause gadget, for the same reason; by the time the slowdown path reaches the end, the trap will have been reset.
    
    This reduction preserves the optimal value since any plan that has a maximum of~$k$ successful trials can be converted to an assignment that has exactly~$k$ true clauses by setting the corresponding literals and vice versa. Because of this one-to-one mapping between \optBasicProb instances with a max of~$k$ successful trials and \maxThreeSatThree with a max of~$k$ true clauses (i.e., you can convert~$\mathcal{J}$ instances with maximum~$\goal$ successful assets to~$\mathcal{P}$ instances with maximum~$\goal$ true clauses and vice versa), this reduction is an \LRed and this completes the proof.
\end{proof}

\end{document}